\documentclass{amsart}
\usepackage{amsmath,amssymb} 
\usepackage{amsbsy}
\usepackage{latexsym}
\usepackage{mathrsfs}
\usepackage{dsfont}
\usepackage{color}
\usepackage{calrsfs}
\usepackage{epic}
\usepackage{graphicx}
\usepackage{hyperref}
\usepackage[capitalise]{cleveref}
\definecolor{red}{rgb}{.7,0,0}

\newcommand{\N}{\mathbb N}
\newcommand{\R}{\mathbb{R}}
\newcommand{\C}{\mathbb{C}}
\newcommand{\Z}{\mathbb{Z}}
\newcommand{\Q}{\mathbb{Q}}
\newcommand{\F}{\mathbb{F}}
\newcommand{\oF}{{\overline{\F}}}
\newcommand{\oFp}{{\overline{\F}_p}}
\newcommand{\proj}{\mathbb{P}}

\renewcommand{\a}{\alpha}
\renewcommand{\b}{\beta}
\renewcommand{\c}{\gamma}

\newcommand{\e}{\varepsilon}

\newcommand{\cR}{\mathcal{R}} 
\newcommand{\cF}{\mathcal{F}} 
\newcommand{\la}{\lambda} 
\newcommand{\HN}{\mathrm{HN}}
\newcommand{\VP}{\mathsf{VP}}
\newcommand{\VNP}{\mathsf{VNP}}
\newcommand{\VPSPACE}{\mathsf{VPSPACE}}

\newcommand{\VCH}{\mathsf{VCH}}
\newcommand{\VPi}{\mathsf{V}\Pi\mathrm{P}}
\newcommand{\SP}{\#\mathsf{P}}
\newcommand{\PP}{\mathsf{PP}}
\newcommand{\Po}{\mathsf{P}}
\newcommand{\FPo}{\mathsf{FP}}

\newcommand{\NP}{\mathsf{NP}}
\newcommand{\AM}{\mathsf{AM}}
\newcommand{\PH}{\mathsf{PH}}
\newcommand{\CH}{\mathsf{CH}}
\newcommand{\GCC}{\mathsf{GCC}}
\newcommand{\VPnb}{\mathsf{VP}_{\mathrm{nb}}}
\newcommand{\VNPnb}{\mathsf{VNP}_{\mathrm{nb}}}

\newcommand{\PSPACE}{\mathsf{PSPACE}}
\newcommand{\BPP}{\mathsf{BPP}}

\newcommand{\poly}{\mathrm{poly}}
\newcommand{\per}{\mathrm{per}}

\newcommand{\FEAS}{\mathrm{FEAS}}
\newcommand{\sgn}{\mathrm{sgn}}
\newcommand{\tr}{\mathrm{trace}}

\newcommand{\cC}{\mathcal{C}}
\newcommand{\cD}{\mathcal{D}}

\newcommand{\chara}{\mathrm{char}\,}

\newcommand{\GL}{\mathrm{GL}}

\renewcommand{\det}{\mathrm{det}}

\newcommand{\Bit}{\mathrm{Bit}}

\newcommand{\Res}{\mathrm{Res}}

\newcommand{\NU}{\mathsf{nu}}
\newcommand{\U}{\mathrm{u}}
\newcommand{\adj}{\mathrm{adj}}
\newcommand{\tc}{\mathrm{tc}}
\newcommand{\hG}{\widehat{G}}

\newcommand{\bn}{\mathbf{n}}
\def\FC{{\mathscr F}_{\C}}
\def\algorithm{\begin{center}
               \begin{minipage}{6in}
               \begin{tabbing}
               \marks}
\def\falgorithm{\end{tabbing}
                \end{minipage}
                \end{center}}
\def\marks{nn\= nn\= nn\= nn\= nn\= nn\= nn\= \kill}
\theoremstyle{plain}
  \newtheorem{thm}{Theorem}[section]
  \newtheorem{lem}[thm]{Lemma}
  \newtheorem{prop}[thm]{Proposition}
  \newtheorem{cor}[thm]{Corollary}

\theoremstyle{plain}
 \newtheorem{conj}{Conjecture}[section]
\newtheorem{quest}{Question}[section]

\theoremstyle{definition}
  \newtheorem{defn}[thm]{Definition}

\theoremstyle{remark}
 \newtheorem{rem}[thm]{Remark}

\numberwithin{equation}{section}
\numberwithin{thm}{section}

\title [Intractability of Hilbert's Nullstellensatz implies hardness of permanent]{Intractability of Hilbert's Nullstellensatz implies\\ 
 algebraic hardness of permanent}
\date{\today}

\author{Peter B\"urgisser}
\thanks{The author was supported by the ERC under the European's Horizon~2020 research and innovation programme (grant agreement no.~787840)} 
\address{Institute of Mathematics, Technische Universit\"at Berlin}
\email{pbuerg@math.tu-berlin.de}

\dedicatory{Dedicated to Brigitte, who accompanied me on this incredible journey during the last thirty years}

\begin{document}

\begin{abstract}
We study the logical relation of the P-NP separation conjecture in the Blum-Shub-Smale-model 
over the complex numbers with the P-NP separation conjecture in Valiant's algebraic model. 
This amounts to comparing Hilbert's Nullstellensatz Problem, that is, 
deciding feasibility of a given system of polynomial equations over the complex numbers,
with the problem of evaluating the permanent of a given complex matrix.
We compare the respective uniform models of computations and prove that 
$\Po_\C\ne \NP_\C$ in the Blum-Shub-Smale-model over~$\C$ 
implies the separation $\VP^0(\U)\ne\VNP^0(\U)$ of the uniform versions of 
Valiant's constant-free complexity classes over~$\C$.
For the nonuniform models we show the analogous implication: the separation 
$\Po^0_\C(\NU)\ne \NP^0_\C(\NU)$ of the nonuniform, constant-free Blum-Shub-Smale classes over~$\C$ 
implies the separation $\VP^0\ne\VNP^0$ of 
Valiant's constant-free complexity classes over~$\C$. 
In the reverse direction, we conjecture that 
$\VNP_\C\not\subseteq \overline{\VP}_\C$ implies that 
$\Po_\C(\NU) \ne \NP_\C(\NU)$.
\end{abstract}

\keywords{algebraic complexity theory, NP-completeness, permanent, Hilbert Nullstellensatz, multivariate resultant, 
Valiant's model, Blum-Shub-Smale model}

\subjclass[2010]{68Q15,68Q17,14Q20,15A15}

\maketitle


\section{Introduction}

\subsection{Hilbert Nullstellensatz Problem and Blum-Shub-Smale-model over $\C$}

The \emph{Hilbert Nullstellensatz Problem $\HN_\C$} 
is the most fundamental computational problem of algebraic geometry. 
It is the problem of deciding for given complex polynomials $f_1,\ldots,f_s$ of the degrees $d_1,\ldots,d_s$ in $n$ variables, 
whether they have a common zero $x\in\C^n$, i.e., whether 
$$
 \exists x \in\C^n  \ f_1(x_1,\ldots,x_n)=0,\ldots, f_s(x_1,\ldots,x_n)= 0. 
$$
Blum, Shub, and Smale~\cite{blss:89} introduced a uniform model of algebraic computation (BSS-machines),
which can perform arithmetic operations and tests for zero with complex numbers at unit cost.
Using this model, they defined complexity classes $\Po_\C$ and $\NP_\C$ and 
proved that $\HN_\C$ is $\NP_\C$-complete. This is analogous to the well known $\NP$-completeness of 
the Boolean satisfiability problem, formulated in the model of Turing machines. 
The fundamental conjecture is $\Po_\C \ne \NP_\C$, which is equivalent to $\HN_\C\not\in\Po_\C$.  
(An exposition of this theory can be found in ~\cite{BCSS:98}.) 
Smale~\cite{smale-nc:98, smale-nc:00} considers this conjecture, along with its counterpart $\Po\ne\NP$ 
(which is the analogous conjecture over $\F_2$), as one the most important open problems in mathematics.

The problem $\HN_\C^\Z$ is the restriction of $\HN_\C$ to integer coefficient polynomials~$f_i$,
that are thought to be given by the list of their coefficients in binary encoding. 
The problem $\HN_\C^\Z$ can be studied in the model of Turing machines.
It is clear that $\HN_\C^\Z$ is $\NP$-hard: indeed,  viewing bits as the solutions of 
polynomial equations $x_i^2-x_i=0$, 
one easily sees that Boolean satisfiability reduces to $\HN_\C^Z$.
It is a nontrivial result~\cite{canny-conf:88,renegar-I:92} that $\HN_\C^\Z\in\PSPACE$. 
A result by Koiran~\cite{koiran:96} indicates that  $\HN_\C^\Z$ is located in a much lower 
complexity class: Koiran showed that $\HN_\C^\Z\in\AM$, conditional on the Generalized Riemann Hypothesis.
Recently, in a breakthrough work, 
Andrews, Garg and Schost~\cite{andrews-et-al-HN:26}
unconditionally improved the known upper bound for $\HN_\C^\Z$. 
They showed that $\HN_\C^\Z$ lies in the \emph{counting hierarchy}~$\CH$.

\subsection{Evaluation of the permanent and Valiant's model}

Ten years before the paper~\cite{blss:89} by Blum, Shub and Smale, a different algebraic model 
for the complexity classes $\Po$ and $\NP$ 
was introduced by Valiant~\cite{vali:79-3}.
He defined nonuniform algebraic  complexity classes $\VP_\F$ and $\VNP_\F$ 
over any field~$\F$, using the model of arithmetic circuits.  
Since the goal was to capture the evaluation of polynomials over~$\F$, 
no branching is included in this model. The absence of uniformity conditions, 
which requires the concept of the Turing machine, is an abstraction  
which adds to the conciseness and elegance of the model. Uniformity constraints 
are easily added without changing much of the developments. 
The $n$th \emph{permanent polynomial}~$\per_n$ is defined as 
\begin{equation*}\label{eq:def-per}
 \per_n = \sum_{\pi\in S_n} \prod_{i=1}^{n} x_{i,\pi(i)} .
\end{equation*}
Valiant proved that the family $(\per_n)$ is $\VNP_\F$-complete if $\chara \F\ne 0$
and conjectured that $\VP_\F \ne \VNP_\F$ for any field $\F$. 
(The validity of this conjecture only depends on the characteristic of $\F$, 
as shown in~\cite[Thm.~4.3]{buer:00-3}.)  
Valiant also showed that the family of \emph{Hamilton cycle polynomials}
is $\VNP_\F$-complete over any field~$\F$.

For comparing algebraic complexity classes over different fields, and for relating them 
to other models of computation, it turns out to be important to work with a modification of these classes.
The \emph{constant-free complexity classes} $\VP^0$ and $\VNP^0$, 
were introduced and studied by Malod~\cite{malodthesis:03}. 
A family~$(f_n)$ of integer polynomials lies in $\VP^0$ if there is a family $(\cC_n)$ of constant-free 
arithmetic circuits such that the number of variables and the formal degree of $\cC_n$ grow at most 
polynomially in~$n$, and $\cC_n$ computes~$f_n$.
Constant-free means that the circuit that may only use $0,1$ as free constants. 
Malod~\cite{malodthesis:03} showed that the family 
of Hamilton cycle polynomials is $\VNP^0$-complete. 
The class $\VNP^0$ arises from $\VP^0$ by allowing exponential summation at the final stage of the computation.
The implication $\VP_\C\ne \VNP_\C \Rightarrow \VP^0\ne \VNP^0$ is trivial, but the converse is unclear. 
Expositions of this theory can be found in~\cite{gath:87-1,buer:00-3,malod:07,buerg-survey:24}. 

\subsection{Comparing different P$\ne$NP conjectures} 

It is natural to ask about the logical dependence of the different separation conjectures discussed above. 
Evidence for the separation in the BSS-model comes from the observation that 
the conjecture $\Po_\C\ne\NP_\C$ is implied by 
$\NP\not \subseteq\BPP$ in the Turing model; see~\cite{ckklw:95} and~\cite[Prop.~2.1]{abpm:08}. 
The point here is that transcendental constants used by a BSS-machine lead to the task 
of deciding whether a multivariate integer polynomial, computed by a given arithmetic circuit, vanishes 
identically (polynomial identity testing). 
As it is well known, the latter can be efficiently decided using randomization. 

As for the separation in Valiant's model, it is easy to see that 
$\SP\not \subseteq \Po/\poly$ implies $\VP^0 \ne \VNP^0$.
The reason is that constant-free arithmetic circuits working over $\Z$ 
can be executed over finite fields~$\F_p$ (thus avoiding an exponential growth of bit size),
 and this can be efficiently simulated by Boolean circuits.
If arithmetic circuits may use complex constants, the situation is much less clear. 
In~\cite{buerg:00} it was shown that $\SP\not \subseteq \Po/\poly$ implies $\VP_\C \ne \VNP_\C$, 
assuming the Generalized Riemann Hypothesis. This relies on the same methods as employed in~\cite{koiran:96}. 
The mentioned implications are consistent with our common understanding that  lower bounds in algebraic models 
are easier to prove than in the less structured bit models.

A main source of motivation of my work~\cite{buer:00-3} 
was to clarify the logical relation between the \emph{algebraic P-NP separation conjectures} 
formulated in the BSS-model and Valiant's model, respectively; 
see Figure 1.1 and Section~8 there. 
Of course, these conjectures cannot be compared directly, 
since the BSS-model is uniform and Valiant's model is nonuniform.
As for Valiant's model, it is useful to focus on complexity classes $\Po^0_\C$ and $\NP^0_\C$ 
that are defined via \emph{constant-free BSS-machines}, which may only use $0,1$ as free constants. 
This does not have any effect on the separation conjectures, since 
one can show that 
$\Po_\C\ne \NP_\C$ is equivalent to $\Po^0_\C\ne \NP^0_\C$: 
the proof relies on the possibility to eliminate complex constants using
witness sequences, as developed in~\cite{BCSS:95,BCSS:98,koir:97,koir:99}.

For the sake of comparison, we define the nonuniform versions $\Po^0_\C(\NU)$ and $\NP^0_\C(\NU)$
of these classes in~\cref{se:BSS-classes}. 
We also introduce the uniform versions $\VP^0(\U)$ and $\VNP^0(\U)$ 
of Valiant's constant-free classes in~\cref{se:uniform-Val}. 

\subsection{Main result}\label{se:mainR}

Our main result states that the P-NP separation conjecture in the Blum-Shub-Smale model over $\C$
implies the P-NP separation conjecture in Valiant's model over $\C$, 
both for the uniform and nonuniform models, 
in the constant-free settings. 

\begin{thm}\label{th:MAIN-u}
We have the implications: 
\[
\begin{array}{ccc}
 \Po^0_\C(\NU) \ne \NP^0_\C(\NU) & \Longrightarrow &  \VP^0\ne \VNP^0 \\
 \Po^0_\C \ne \NP^0_\C                 & \Longrightarrow & \VP^0(\U) \ne \VNP^0(\U) .
\end{array}
\]
\end{thm}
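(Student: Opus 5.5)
The plan is to prove both implications in contrapositive form. Assume $\VP^0=\VNP^0$ in the nonuniform case, or $\VP^0(\U)=\VNP^0(\U)$ in the uniform case. The goal is then to show that $\HN_\C$ is decided by a constant-free nonuniform (respectively uniform) polynomial-time BSS-machine. Since $\HN_\C$ is $\NP^0_\C$-complete also under constant-free reductions, this gives $\Po^0_\C(\NU)=\NP^0_\C(\NU)$, respectively $\Po^0_\C=\NP^0_\C$. The two cases run in parallel; the uniform case only additionally requires that every circuit family we construct is uniformly generated.

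\emph{Step 1: reduce to a resultant-type condition.} Given $f_1,\ldots,f_s$ in $n$ variables, first replace them by $n+1$ fixed integer linear combinations. This preserves feasibility, as in the standard effective Nullstellensatz arguments; the choice of combinations is fixed and constant-free, and must be made explicit for uniformity. Next homogenize with a new variable $x_0$. Affine feasibility then becomes existence of a projective zero with $x_0\ne 0$. I would handle the zeros at infinity with a Canny-type generalized characteristic polynomial: perturb by $\lambda x_i^{d_i}$ and use the $u$-resultant. This expresses feasibility as the nonvanishing of a specific coefficient, in auxiliary variables, of the multivariate resultant $\Res$. That coefficient is an integer polynomial in the input coefficients, so the input is feasible iff a certain explicit integer polynomial $R_n$ vanishes (or does not vanish) at the input point.

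\emph{Step 2: place the test in $\VNP^0$.} Using Macaulay's description, $\Res$ is a quotient of determinants of succinctly described matrices. I would replace vanishing of $\Res$ by a rank condition on the Macaulay matrix $M(a)$, whose entries are input coefficients or $0$ and whose position pattern is computable by a small Boolean circuit. The aim is to write the rank test as vanishing of polynomially many polynomials, each of polynomially bounded degree, expressible as exponential sums over Boolean indices of products computable in $\VP^0$. That is exactly the form of a $\VNP^0$ family. Coefficient extraction in the auxiliary variables $\lambda,u$ preserves membership in $\VNP^0$. Under the hypothesis these polynomials then have polynomial-size constant-free circuits. A BSS-machine evaluates those circuits at the input and tests for zero; it needs these circuits as advice in the nonuniform case and generates them in the uniform case.

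\emph{Main obstacle.} The hard step is the degree blowup in Step 2. The resultant has degree roughly $n\prod d_i$ in the coefficients, which is exponential, whereas $\VNP^0$ only captures families of polynomially bounded degree. I would first try to circumvent this by never computing $\Res$ itself. Instead I would decide the rank of the exponential-size succinct matrix $M(a)$ by a recursive, polynomial-depth procedure, for instance a Berkowitz/Mulmuley-type parallel linear algebra scheme. In such a scheme each intermediate quantity is a low-degree $\VNP^0$-expressible polynomial of previously computed values. The hypothesis $\VP^0=\VNP^0$ is then applied polynomially many times, composing the resulting constant-free circuits with zero tests in between. If this composition keeps the formal degree under control, both implications follow.
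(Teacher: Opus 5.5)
Your outer structure matches the paper: you argue by contraposition and decide the $\NP^0_\C$-complete problem $\HN_\C$ by a resultant-based test. But Step~2 has a genuine gap, and it is exactly where the paper's main idea lies.

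\textbf{Step 2.} A Berkowitz/Mulmuley or repeated-squaring scheme on the succinct Macaulay matrix of size $N=2^{\poly(n)}$ has $\poly(n)$ levels, and its intermediate objects are exponentially many matrix entries. A polynomial-size computation can only carry them as polynomials in the input and in $\log N$ index bits. Each level is then an exponential sum over index bits of products of level-$(k-1)$ polynomials. This fails in two ways:
\begin{itemize}
\item Their degree in the input doubles at every level. They therefore leave $\VNP^0$, and Valiant's reduction to $\per$, which needs polynomially bounded formal degree, no longer applies.
\item Even ignoring degree, each use of $\VP^0=\VNP^0$ blows up circuit size polynomially. So $\poly(n)$ nested uses give size $s^{c^{\poly(n)}}$; the hypothesis can only be nested $O(1)$ times.
\end{itemize}
Generic parallel linear algebra on succinct exponential matrices is a $\PSPACE$-type computation; it only reproduces the Canny/Renegar bounds. $\VP^0=\VNP^0$ is not known to collapse $\PSPACE$ or $\VPSPACE^0$.

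\textbf{The missing idea.} The paper never puts the resultant into $\VNP^0$. Instead:
\begin{itemize}
\item It turns the hypothesis into the Boolean collapse $\PP\subseteq\Po/\poly$, hence $\CH\subseteq\Po/\poly$.
\item It lifts this back to exponential degree. An exponential-format $f_n$ arises from a multilinear $F_n$ with $\{-1,0,1\}$ coefficients and the same coefficient function, via the substitutions $x_{i,j}\mapsto x_i^{2^{j-1}}$, $y_\ell\mapsto 2^{2^{\ell-1}}$. Valiant's criterion plus the hypothesis then give $\VCH^0=\VPnb^0$ (\cref{cor:collapse}), with no degree control needed.
\item The indispensable external input is the Andrews--Garg--Schost theorem that $\cR_n$ is evaluable at integer points in $\CH$, with constantly many counting levels, via the Poisson formula and Ju\v{z}akov's series. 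This gives $(\cR_n)\in\VCH^0(\U)$. Generic linear algebra on the Macaulay matrix does not provide anything like it.
\end{itemize}

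\textbf{Step 1 is also not right as stated.}
\begin{itemize}
\item No fixed integer matrix $\alpha$ of $n+1$ combinations works for all complex inputs. If $s>n+1$ and $0\ne\lambda\in\ker\alpha$, the constant system $f_k=\lambda_k$ is infeasible, yet all the combinations vanish identically. The choices must depend on the input; the paper gets them from Koiran's witness sequences with a majority vote over $2p+1$ integer candidates.
\item Your test misses positive-dimensional $X_u$. Its components are not isolated points, so a $u$-resultant or GCP cannot see them. The paper loops over $j\in[n]$, using $j$ generic combinations of the $f_k$ together with $n-j$ generic affine hyperplanes.
\item The relevant GCP coefficient is a trailing coefficient whose order depends on the input and may be exponential. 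It is therefore not a fixed coefficient extraction preserving $\VNP^0$; the paper handles it through $\overline{\VCH^0}=\VCH^0$.
\end{itemize}
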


Of course, we prove the contraposition, which says that 
the existence of a (uniform) family of constant-free arithmetic circuits~$\cC_n$ of polynomial size in~$n$, 
such that $\cC_n$ computes the permanent~$\per_n$, 
implies the existence of a (uniform) family 
of constant-free algebraic circuits~$\cD_{n,d_1,\ldots,d_n}$  of polynomial size in~$n,d_1,\ldots,d_s$, 
such that $\cD_{n,d_1,\ldots,d_s}$ decides the Hilbert Nullstellensatz Problem $\HN^\Z_\C$.

It is unclear whether a version of \cref{th:MAIN-u} holds in positive characteristic, 
but see~\cref{re:pos-char}. 

In my habilitation thesis, I conjectured the converse of~\cref{th:MAIN-u} 
for the nonuniform models in their original versions (any complex constants).
The following conjecture involving border complexity may be easer to resolve, 
since reductions involving border complexity allow for considerably more flexibility, 
as illustrated in~\cite{andrews-forbes:22}. 

\begin{conj}\label{conj:reverse-dir-nu}  
If $\VNP_\C\not\subseteq \overline{\VP}_\C$, then $\Po_\C(\NU) \ne \NP_\C(\NU)$.
\end{conj}

\emph{Border complexity} arises naturally when, besides arithmetic operations, 
one also allows limit processes.
In this model, one can pass at no cost to the 
\emph{trailing coefficient}  
$\tc(f) := f_e\ne 0$ of an already computed multivariate polynomial 
$f= f_e t^e + f_{e+1} t^{e+1} + \ldots$ with respect to the (deformation) variable~$t$. 
Such limit process occurs often in algebraic geometry reasonings: 
we use it crucially for a key reduction in the proof of~\cref{th:MAIN-u}; 
see~\cref{se:deform} and~\cref{se:RED-MAIN}.
In fact, almost all of the known techniques for proving complexity lower bounds actually provide more, 
namely lower bounds for border complexity.

The complexity class $\overline{\VP}_\C$ arises naturally from $\VP_\C$ when replacing 
complexity by approximate complexity. 
The conjecture $\VNP_\C\not\subseteq\overline{\VP}_\C$  
first appeared in~\cite[Conj.~1]{buerg-survey:01} and was investigated in~\cite{buerg:04}. 
Proving this conjecture is the main goal of geometric complexity theory. 
Variants of the class~$\overline{\VP}_\C$  
currently are intensely studied for restricted models of 
computation~\cite{bringmann-et-al:18,kumar:20,dutta-saxena:22a,dutta-saxena:22b,andrews-forbes:22}. 
We refer to~\cite[\S4.7]{buerg-survey:24} for more explanations and references. 

\subsection{Proof techniques }

In Allender et al.~\cite{abpm:08}, the counting hierarchy was identified as a helpful concept 
for understanding the power or arithmetic
circuits (compare the survey~\cite{allender-survey:04}).
In~\cite{abpm:08} it was shown that languages decided by constant-free polynomial time real BSS-machines on Boolean inputs
lie in the \emph{counting hierarchy} $\CH$, a class believed to be strictly contained in $\PSPACE$.  
In~\cite{buerg:09} these ideas were used to connect some so far unrelated 
conjectures in algebraic complexity theory. A key insight was that certain 
\emph{specific sequences of integers} of exponential bitsize, like factorials and binomials, 
can be \emph{defined in the counting hierarchy.}   
These ideas (along with boosting Valiant's criterion to polynomials of exponential degree) 
were taken up by Koiran and Perifel~\cite{koir-per:11}. 
They implicitly defined a complexity class---we shall call it the \emph{algebraic counting complexity class $\VCH^0$}---which 
turns out to be closed under many desirable operations. 
This class is contained in $\VPSPACE^0$, the arithmetic analogue of $\PSPACE$, 
which was introduced by Poizat~\cite{poizat:08} and Koiran and P\'erifel~\cite{koir-perifel-R}. 
The replacement of $\VPSPACE^0$ by the smaller class $\VCH^0$ is the critical new step here. 
We develop the theory of $\VCH^0$ in \cref{se:VCH}. 

The missing link for completing the puzzle and establishing~\cref{th:MAIN-u} was provided in the recent breakthrough paper 
by Andrews, Garg and Schost~\cite{andrews-et-al-HN:26}. 
The critical ingredient of the proof is an explicit formula for the power series expansion of implicit functions,
which is due to Ju{\v{z}}akov~\cite{juzakov:75}. 
This work in~\cite{andrews-et-al-HN:26}  implies that families of \emph{multivariate resultants}  
of systems of quadratic forms lie in $\VCH^0$. 
Moreover, \cite{andrews-et-al-HN:26} uses an ingenious reduction from the Hilbert Nullstellensatz Problem
to the multivariate resultant, 
which is made possible by Canny's~\cite{canny-gcp:89} concept of the generalized characteristic polynomial. 
(Related important papers are \cite{canny-conf:88} and \cite{renegar-I:92}.)
In~\cite{andrews-et-al-HN:26} this reduction is used as elaborated by Ierardi~\cite{ierardi:89}. 
We adjust this reduction to employ it in the BSS-model. 
This can be done in an elegant way by using Koiran's concept of witness sequences~\cite{koiran:97}, 
which allows for a general derandomization argument.

\subsection{Outline of paper}

In \cref{se:prelim} we first define the nonuniform Blum-Shub-Smale complexity classes 
and uniform Valiant complexity classes, which appear in the main result.
Then we recall Koiran's notion of generic quantifiers and witness sequences. 
We also present the key concept of sequences definable in the counting 
hierarchy from B\"urgisser~\cite{buerg:09} along with its properties.
\cref{se:resultant} recalls the basics about multivariate resultants and 
describes a deformation technique following Zulehner~\cite{zulehner:88}.

In~\cref{se:VCH}, the new algebraic counting complexity class $\VCH^0$ is defined 
and its properties are investigated. Many of the ideas appear already in 
Koiran and Perifel~\cite{koir-per:11}. This section should be of independent interest. 
\cref{se:redu} contains the proof of \cref{th:MAIN-u}, our main result.
We end in~\cref{se:outlook} with an outlook proposing several open questions.

\section{Concepts and tools from complexity theory}\label{se:prelim}

\subsection{Nonuniform BSS-classes}\label{se:BSS-classes}

It is useful to introduce the complexity classes $\Po^0_\C$ and $\NP^0_\C$ 
that are defined via constant-free BSS-machines, which may only use $0,1$ as free constants. 
The completeness proof for $\HN_\C$ immediately shows that this problem remains complete for 
the smaller class $\NP^0_\C$ for constant-free polynomial time reductions. 
Moreover, the conjecture $\Po_\C= \NP_\C$ is equivalent to $\Po^0_\C= \NP^0_\C$, 
which relies on the possibility to eliminate complex constants using
witness sequences, as developed in~\cite{BCSS:98,koir:97,koir:99}.

As  pointed out in~\cite{koiran:00,poizat:95}, the classes $\Po_\C$ and $\NP_\C$ 
can be equivalently defined in terms of algebraic circuits, 
without using the notion of BSS-machines.
It is convenient to use this approach for defining the nonuniform versions 
of the  complexity classes $\Po^0_\C$ and $\NP^0_\C$. 

An \emph{algebraic circuit} has computation gates for arithmetic operations and selector gates, which allow to branch according 
to whether a computed number equals zero or not (see~\cite{gath:86}; the details of this model are not important 
in our context). However, we assume that the circuit is constant-free, which means that $0$ and $1$ are the only free constant 
the circuits may use. 
An algebraic circuit $\cD$ with $n$ input nodes models the computation of a piecewise rational function $\C^n\to\C$.
(On each constructible piece, the function is given by the quotient of the two polynomials with integer coefficients; 
compare \cite[\S 4.4-4.5]{ACT}).
The algebraic circuit can also model the computation of a function $\C^n\to\{0,1\}$ when using a selector gate as output gate
and thus decide membership to a (constructible) subset of $\C^n$. 
We define the complexity class $\Po^0_\C(\NU)$ as the set of problems decided by families $(\cD_n)$ 
of algebraic circuits of size polynomially bounded in~$n$. 
Moreover, we define the complexity class $\NP^0_\C(\NU)$ as the set of decision problems $A$ 
for which there exists $B \in\Po^0_\C(\NU)$ such that 
$x\in A \Longleftrightarrow \exists y\in\C^{p(n)}\ (x,y)\in B$, 
where $p(n)$ is polynomial in in $n$. From the usual argument, it follows that 
$\HN_\C$ is $\NP^0_\C(\NU)$-complete with respect to nonuniform constant-free polynomial time reductions.

\subsection{Uniform Valiant's classes}\label{se:uniform-Val}

We introduce here uniform versions of 
Valiant's complexity classes in the obvious way. 
The class $\VP^0(\U)$ consists of the families $(g_n) \in \VP^0$ 
such that there is a family $(\cC_n)$ of constant-free, polynomial size arithmetic circuits,  
computed by a Turing machine in time polynomial in $n$, 
and such that $\cC_n$ computes~$g_n$.  
The class $\VNP^0(\U)$ consists of the families~$(f_n)$, where $f_n$ results from $g_n$
by an exponential summation (cf.~\cite[Def.~2.25]{buerg-survey:24})
$$
 f_n(x_1,\ldots,x_{v(n)}) = \sum_{e\in \{0,1\}^{u(n)-v(n)}} g_{n}
 (x_1,\ldots,x_{v(n)},e_{v(n)+1},\ldots,e_{u(n)})  .
$$
We note that the family of Hamilton cycle polynomials is $\VNP^0(\U)$-complete, which follows 
by checking the proof in~\cite{malodthesis:03}.   

\subsection{Generic quantifiers and witness sequences}\label{se:witness-s}

Reasonings in algebraic geometry heavily rely on the notion of a generic point, whose 
precise meaning has changed over time. Originally, it meant a point whose 
components are algebraic independent over a coefficient field. 
Many reductions in computational algebraic geometry rely on this notion: 
e.g., to intersect or project to a subspace in general position.
This is usually realized by allowing random coins, which leads to 
randomized algorithms. 

It is well known that by repeated squaring, polynomial size arithmetic circuits can 
produce sequences of integers of exponential size that in some sense are 
indistinguishable from algebraic independent elements over $\Q$.
Strassen's paper~\cite{stra:74-1} appears to be the earliest appearance of this idea in complexity theory; 
see also~\cite[Lemma~(9.30)]{ACT}. 
This idea was used in Heintz and Schnorr~\cite{hesc:82} for constructing correct test sequences, 
which gave insights into the polynomial identity testing problem. 

Koiran~\cite{koiran:97,koir:99,koir:99b} proposed the notion of generic quantifiers in the 
framework of algebraic computations and, based on the above idea,
developed a general method for eliminating generic quantifiers in parametrized formulas.
This leads to a general derandomization argument. 
This way, Koiran~\cite{koiran:97} proved that the problem of deciding whether a given affine algebraic variety 
has dimension at least~$d$ is $\NP_\C$-complete (the difficulty is to establish the upper bound 
without resorting to randomization). 
This was further extended in~\cite{bucu-count-II:06} in order to deal with counting problems 
over~$\C$ and $\R$. 

We recall here the main features of the above method, which will allow to give 
a short and clean analysis of the reduction in \cref{se:redu}. 
More details can be found in~\cite[\S7.1]{bucu-quaderni:04}. 

Let us denote by $\FC$ be the set of first order formulas over the language
of the theory of fields with constant symbols for complex numbers.

\begin{defn}
Let $F\in\FC$ have free variables $a_1,\ldots,a_q$.
We say that $F$ is {\em Zariski-generically true}
if the Zariski-closure of the set of values $a\in\C^q$ not satisfying $F(a)$
has dimension strictly less than~$q$.
We express this fact by writing $\forall^\ast a\, F(a)$
using the {\em generic universal quantifier} $\forall^\ast$.
\end{defn}

It is not hard to see that 
$$
 \forall^\ast a\, \forall^\ast b\ F(a,b) \equiv \forall^\ast (a,b) F(a,b) .
$$
Let $K$ denote the field generated by the coefficients of all 
the polynomials occuring in $F$.
One can show that 
$\forall^\ast a\, F(a)$ is equivalent to requiring that 
$F(\a)$ holds whenever the components 
$\a_1,\ldots,\a_q$ of~$\a$ are algebraically independent over $K$.
Thus we may view $\alpha$ as a witness for the fact $\forall^\ast a\, F(a)$.

We focus now on parametrized formulas $F(u,a)$ with parameter $u\in\C^p$
and look for witnesses which can be used for all values of the
parameter~$u$. This may not be attainable with a single witness point,
but it turns out to be doable by using short sequences of witness 
points and taking a majority vote.
In the sequel, $[n]$ denotes the set $\{1,\ldots,n\}$.

\begin{defn}\label{def:pws}
Let $F(u,a)\in\FC$ with variables
$u\in\C^p$ and $a\in\C^q$. We call a sequence
$\alpha=(\alpha_1,\ldots,\alpha_{2p+1})\in(\C^q)^{2p+1}$
a {\em witness sequence} for~$F$ iff
$$
 \forall u\in\C^p\ \Big(
   \forall^\ast a\in\C^q\  F(u,a) \Longleftrightarrow
  \#\{\theta\in [2p+1] \mid F(u,\alpha_\theta)\} > p \Big) .
$$
We denote the set of witness sequences of $F$ by $W_\C(F)$.
\end{defn}

By a transcendence degree argument one can show that 
$W_\C(F)$ is Zariski dense in $\C^{q(2p+1)}$,
for any $F(u,a)\in\FC$; 
see~\cite[Thm.~5.1]{koir:97}.

Suppose we have polynomials 
$P_1,\ldots,P_M \in\Z[u_1,\ldots,u_p, a_1,\ldots,a_q]$ 
in two group of variables 
with degree bounded by $d$ and bit size bounded by $\ell$.
They define the atomic predicates 
$\sgn(P_i(u,a))$, 
where we set 
$\sgn(c):= 0$ if $c=0$ and $\sgn(c):= 1$ if $c\ne 0$. 
Using a Boolean function
$B\colon\{0,1\}^M \to \{0,1\}$,  
we can combine the atomic predicates to the formula 
\begin{equation}\label{eq:wf-formula}
 B(\sgn(P_1(u,a)),\ldots, \sgn(P_M(u,a)) = 1 . 
\end{equation}
The next theorem is from~\cite{koiran:97}.

\begin{thm}\label{th:cons-pws}
Let $F(u,a)\in\FC$ be a quantifier-free formula as in~\eqref{eq:wf-formula}
with free variables $u\in\C^p$ and $a\in \C^q$, 
with $M$ atomic predicates given by the polynomials 
$P_1,\ldots,P_M \in\Z[u_1,\ldots,u_p, a_1,\ldots,a_q]$ 
of degree at most~$d$ and 
with integer coefficients of bit size at most~$\ell$.
\begin{enumerate}
\item Then a witness sequence
$\alpha\in W_\C(F)\cap (\Z^q)^{2p+1}$ can be computed
by an arithmetic circuit~$\Gamma$ of size 
$(pq)^{O(1)}\, \log (M d) + O(\log\ell)$,
which is division-free, has~$1$ as its only constant and no inputs.

\item There exists a Turing machine which, on input $s\in\N$ in unary,  
such that $p,q,L\le s$ and $\log\max\{M, d,\ell\} \le s$, 
computes~$\Gamma$ in time polynomial in~$s$.
This machine does not depend on~$F$.
\end{enumerate}
\end{thm}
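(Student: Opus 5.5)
The plan is to reduce the statement to avoiding a single nonzero integer polynomial in the coordinates of the witness sequence. Explicit integer points built by repeated squaring will then do the avoiding.

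\emph{Step 1: a sufficient condition for one $u$.} Fix $u\in\C^p$ and let $I_u$ be the set of indices $i$ with $P_i(u,\cdot)\not\equiv 0$. On the nonempty Zariski open set $\{a : \prod_{i\in I_u}P_i(u,a)\ne 0\}$ the sign vector $(\sgn P_1(u,a),\ldots,\sgn P_M(u,a))$ is constant. Hence $F(u,\cdot)$ is constant there, and this constant value is exactly the truth value of $\forall^\ast a\,F(u,a)$. Call $\alpha_\theta$ \emph{bad for $u$} if $P_i(u,\alpha_\theta)=0$ for some $i\in I_u$. If at most $p$ of the $2p+1$ points are bad for every $u$, then the majority vote in \cref{def:pws} returns the generic value, so $\alpha\in W_\C(F)$.

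\emph{Step 2: a single polynomial to avoid.} Fix a $(p+1)$-subset $S\subseteq[2p+1]$ and indices $(i_\theta)_{\theta\in S}$. Consider the locally closed incidence set
\[
Z=\{(u,(a_\theta)_{\theta\in S}) : P_{i_\theta}(u,a_\theta)=0,\ P_{i_\theta}(u,\cdot)\not\equiv 0 \text{ for all } \theta\in S\}.
\]
Over each $u$, the fibre is a product of $p+1$ hypersurfaces in $\C^q$, so it has dimension $(p+1)(q-1)$. Therefore $\dim Z\le p+(p+1)(q-1)<(p+1)q$. It follows that the projection of $Z$ to $(\C^q)^{p+1}$ has Zariski closure contained in the zero set of a nonzero integer polynomial $G_{S,(i_\theta)}$.

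By effective elimination (Bézout-type degree bounds together with arithmetic height bounds for eliminants), $G_{S,(i_\theta)}$ can be chosen with
\[
\deg G_{S,(i_\theta)}\le d^{(pq)^{O(1)}}, \qquad \log\mathrm{ht}\, G_{S,(i_\theta)}\le \ell\, d^{(pq)^{O(1)}}.
\]
Let $G$ be the product of all $G_{S,(i_\theta)}$, each viewed as a polynomial on $(\C^q)^{2p+1}$, over all $\binom{2p+1}{p+1}M^{p+1}$ choices. Then $\deg G\le (Md)^{(pq)^{O(1)}}$, and $\log\mathrm{ht}\, G$ is at most $\ell$ times a factor of the same form. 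Any $\alpha$ with $G(\alpha)\ne0$ has at most $p$ bad points for every $u$, so $\alpha\in W_\C(F)$ by Step 1.

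\emph{Step 3: explicit points via repeated squaring.} We use the standard lemma (Strassen; \cite[Lemma~(9.30)]{ACT}). Choose $N=q(2p+1)$ integers with $\beta_1=2^{2^{k}}>\mathrm{ht}(G)$ and $\beta_{j+1}=\beta_j^{\,2^{m}}$ with $2^m>\deg G$. Then a nonzero integer polynomial of degree $<2^m$ and height $<\beta_1$ cannot vanish at $(\beta_1,\ldots,\beta_N)$: the $\beta$'s act like digits in distinct positions, so no cancellation is possible. It suffices to take
\[
k=O(\log\log\mathrm{ht}\,G)=(pq)^{O(1)}\log(Md)+O(\log\ell),\qquad m=(pq)^{O(1)}\log(Md).
\]
Computing $\beta_1$ takes $k$ squarings starting from $1+1$, and each further $\beta_j$ takes $m$ squarings. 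The total circuit size is therefore $(pq)^{O(1)}\log(Md)+O(\log\ell)$, and the circuit is division-free with $1$ as its only constant. This proves~(1).

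For~(2), the circuit depends only on the numbers $p,q,k,m$, which are computable from the given upper bounds. A Turing machine can therefore write it down in time polynomial in $s$, independently of $F$.

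The main obstacle is Step 2, specifically the effective \emph{height} bound for the eliminant $G_{S,(i_\theta)}$. The degree bound follows from Bézout. The height bound needs arithmetic Nullstellensatz or elimination estimates: for instance, the Chow form of the closure of $Z$, with height bounded via arithmetic Bézout. I would first try to obtain $G_{S,(i_\theta)}$ as a suitable coefficient of a resultant or Chow form, whose height is bounded by $\ell\cdot d^{(pq)^{O(1)}}$. A secondary subtlety is that the condition $P_i(u,\cdot)\not\equiv0$ is not closed. This is handled by working with the constructible set $Z$ directly: the dimension count only needs the fibre dimension over each $u$.
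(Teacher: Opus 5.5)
Your proposal is correct and is essentially the argument behind the result that the paper only cites (Koiran~\cite[Thm.~5.6, Lemma~5.4]{koiran:97}). A dimension (equivalently, transcendence-degree) count shows that every sequence with more than $p$ bad points lies on a nonzero integer hypersurface of degree $(Md)^{(pq)^{O(1)}}$ and logarithmic height $\ell\,(Md)^{(pq)^{O(1)}}$; a Kronecker-type integer point produced by repeated squaring then avoids that hypersurface, which gives exactly the size $(pq)^{O(1)}\log(Md)+O(\log\ell)$. The one ingredient you import rather than prove, the height bound for the eliminant, is standard (arithmetic B\'ezout inequalities, or height bounds for effectively eliminating the block $\exists u$), and it is needed only in very weak form, since only $\log\log\mathrm{ht}(G)$ enters the circuit size.
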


\begin{proof}
The first part is~\cite[Thm.~5.6]{koiran:97}
in the special case of quantifier-free formulas.
For the second part concerned with the Turing machine, 
the reader should inspect the construction of the sequence 
in the proof of~\cite[Lemma~5.4]{koiran:97}, which basically 
proceeds by repeated squaring.
\end{proof}

\subsection{Sequences of integers definable in the counting hierarchy}\label{se:int-def-CH}

The title names a concept introduced in~\cite{buerg:09}, 
which is of central importance for our developments. 
It involves the counting hierarchy, which was introduced by Wagner~\cite{wagn:86}.
The counting hierarchy is closely tied to the theory of threshold circuits of bounded depth; see~\cite{alwa:93}.

We recall that the classes of the polynomial hierarchy are obtained from the class~$\Po$
by iteratively applying the operators $\exists$ and $\forall$: 
the union of the resulting classes is denoted $\PH$. 
The classes of the counting hierarchy are obtained analogously by repeatedly 
applying a counting operator to the class~$\Po$. The first class obtained this way is $\PP$
and the union of the classes is denoted $\CH$; 
we refer to~\cite{alwa:93,buerg:09} for more details. We have the inclusions
$$
 \PH \subseteq \CH \subseteq \PSPACE .
$$
The collapse $\Po=\PP$ implies that $\Po=\CH$, similarly as 
$\Po=\NP$ implies $\Po=\PH$.  
This observation is important for our proof.  

We begin with a convenient definition (compare \cref{def:exp-format}).

\begin{defn}
A \emph{family of integers of exponential bitsize} is a family
$(c_n(k_1,\ldots,k_{u(n)}))$ of integers, where $u(n)$ is a polynomial, 
and such that there is a polynomial~$p(n)$ 
with the property that 
$c_n(k_1,\ldots,k_{u(n)})$ is defined for 
$n,k\in\N$ with $0\le k_i < 2^{p(n)}$ and 
$c_n(k_1,\ldots,k_{u(n)})$ has bitsize at most~$2^{p(n)}$. 
\end{defn}

For example, the family of factorials $c_n(k) = k!$, where $0< k \le 2^{n}$,
defines a family of integers of exponential bitsize. 

We associate with a family  $c=(c_n(k))$ of integers of exponential bitsize the language
\begin{equation*} 
\Bit(c)  := \{ (1^n, k, a) \mid\mbox{ the $a$-th bit of $c_n(k)$ equals $1$ } \} ;
\end{equation*}
the integers~$k,a$ are thought to be represented in binary using $p(n)$ bits.   
Throughout, we shall use the convention that the $0$-th bit of $a$ gives the sign of  $c_n(k)$.
Note that $a \le p(n)$ since we assume $\log c_n(k) \le 2^{p(n)}$.

The following definition is from~\cite{buerg:09}. 
We present the concept in the slightly modified form used by~\cite{koir-per:11}.
(The difference is to use a unary encoding for $n$ instead of a binary.)

\begin{defn}
Let $c$ be a family of integers of exponential bitsize. 
We call~$c$ \emph{definable} in $\CH$ iff their associated language $\Bit(c)$ is in $\CH$. 
Moreover, we call $c$ \emph{definable} in $\CH/\poly$  iff $\Bit(c)\in\CH/\poly$. 
\end{defn}

To ease notation, we will focus in the following on families $(c_n(k))$ where $u(n)=1$. 
We shall think of the integer $k$ as being represented in binary encoding 
of length at most~ $2^{p(n)}$. 

The $tau$-{\em complexity}~$\tau(a)$ of an integer~$a$
is defined as the minimal size of a division-free arithmetic circuit computing~$a$ 
from $1$. Using addition chains, it is easy to see that 
$\tau(a)\le 2\log_2 a$ if $a$ is a positive integer. 
The following result~\cite[Cor.~3.9]{buerg:09} provides first interesting examples 
of families of integers that are definable in $\CH/\poly$.

\begin{prop}\label{pro:slp-2-CH}
Suppose $(c_n(k))$ is a family of integers of exponential bitsize, where 
$0\le k < 2^{p(n)}$ for a polynomial $p(n)$, such that 
$\tau(c_n(k)) \le p(n)$.
Then $(c_n(k))$ is definable in $\CH/\poly$.
\end{prop}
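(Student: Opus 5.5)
The plan is to reduce the statement to the result of Allender, B\"urgisser, Kjeldgaard-Pedersen and Miltersen~\cite{abpm:08}. That result says that the problem $\mathrm{BitSLP}$ lies in $\CH$. $\mathrm{BitSLP}$ asks: given a division-free arithmetic circuit (straight-line program) $\Gamma$ of size $s$ computing an integer $N$ from~$1$, and given $a$ in binary, decide whether the $a$-th bit of $N$ equals~$1$. We use the paper's convention that the $0$-th bit encodes the sign. Once $\mathrm{BitSLP}\in\CH$ is available, the $\CH/\poly$ membership of $\Bit(c)$ should follow by supplying the relevant circuits as advice.

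\textbf{Step 1 (advice).} The hypothesis $\tau(c_n(k))\le p(n)$ gives, for each admissible $k$, a division-free circuit $\Gamma_{n,k}$ of size at most $p(n)$ computing $c_n(k)$ from~$1$. Each such circuit has a description of $O(p(n)\log p(n))$ bits. The advice string for length parameter $n$ will be the table $k\mapsto\Gamma_{n,k}$. For this table to have polynomial length, the set of admissible $k$ must have polynomial size in $n$, and I would make this reading explicit.

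I expect the bookkeeping here to be the main obstacle, and I would state it openly. Suppose instead that $k$ genuinely ranges over all of $[0,2^{p(n)})$ with independent circuits for different~$k$. Then already $\{0,1\}$-valued families, which have $\tau\le1$, would make $\Bit(c)$ an arbitrary language on $p(n)$-bit strings, which cannot lie in $\CH/\poly$. So the proof must rely on the advice being polynomial, that is, on the table of circuits being describable in $\poly(n)$ bits. I would check that this is how the statement is used later, where $k$ indexes polynomially many coefficients or the circuits come from a single description.

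\textbf{Step 2 (bit extraction in $\CH$).} On input $(1^n,k,a)$ with advice $w_n$, the machine looks up $\Gamma_{n,k}$ in $w_n$, which is polynomial time. It then queries the $\mathrm{BitSLP}$ oracle on $(\Gamma_{n,k},a)$. The length of $a$ is at most $p(n)$, which matches the exponential-bitsize convention. Since $\mathrm{BitSLP}$ lies in some fixed level $C_m\Po$ of the counting hierarchy, and that level is closed under polynomial-time many-one reductions, we obtain $\Bit(c)\in C_m\Po/\poly\subseteq\CH/\poly$.

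\textbf{Step 3 (sign bit).} For the $0$-th bit we need the sign of $c_n(k)$. Deciding it is the problem $\mathrm{PosSLP}$, which~\cite{abpm:08} also places in~$\CH$. Alternatively, the sign can be read off from $\mathrm{BitSLP}$ applied to a two's-complement representation after adding a large power of $2$, namely $2^{2^{p(n)}}$. That power has $\tau$-complexity $p(n)+1$ by repeated squaring, so adding it keeps the circuit size polynomial, and this completes the argument.
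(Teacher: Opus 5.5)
Your argument is correct and is essentially the one behind the citation \cite[Cor.~3.9]{buerg:09}, which the paper gives in place of a proof: $\mathrm{BitSLP}$ (and $\mathrm{PosSLP}$ for the sign) lie in $\CH$ by \cite{abpm:08}, and the straight-line programs are supplied as advice, exactly as the paper itself argues for $\VPnb^0\subseteq\VCH^0$. Your caveat is also justified: if $k$ ranges over all of $[0,2^{p(n)})$ and the circuits are chosen independently, the counting argument refutes the statement as literally written, and the hypothesis actually needed is that $\Gamma_{n,k}$ be computable in time polynomial in $n$ from $k$ and an advice string depending only on $n$ (a short description of the table that cannot be decoded efficiently would not suffice); this is what happens in the application in the proof of \cref{th:equ-char-VCH}, where $x^k$ is built by repeated squaring, and there one even gets membership in $\CH$.
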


An example for such a family is given by $c_n(k) = 2^k$. 
It worthwile to directly check that 
$\Bit(c_n(k))$ lies in $\Po$. In particular, 
$(c_n(k))$ lies in $\CH$. 

The following result is~\cite[Thm.~3.10]{buerg:09}.
It expresses that definability in the (nonuniform) counting hierarchy is preserved 
under exponential iterated addition, iterated multiplication, 
as well as under integer division.

\begin{thm}\label{th:closure-CH}
\begin{enumerate}
\item  Suppose $c=(c_n(k))$ is definable in $\CH$, 
where $0\le k < 2^{p(n)}$ with a polynomial~$p$. Consider
$$
 a_n := \sum_{k=0}^{2^{p(n)}-1} c_n(k),\quad
 b_n := \prod_{k=0}^{2^{p(n)}-1} c_n(k).
$$
Then $a=(a_n)$ and $b=(b_n)$ are definable in $\CH$.
Moreover, if $c$ is is definable in $\CH/\poly$, then so are $a$ and $b$.

\item Suppose $(a_n)$ and $(b_n)$ are definable in $\CH$ and $b_n>0$ for all~$n$.
Then the family of quotients
$(\lfloor a_n/b_n \rfloor)$ is definable in $\CH$.
The analogous assertion holds for $\CH/\poly$.
\end{enumerate}
\end{thm}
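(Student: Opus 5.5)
The statement is \cite[Thm.~3.10]{buerg:09}, so the plan is to reconstruct that argument. It reduces to known results on uniform threshold circuits for iterated arithmetic, plus the self-reducibility of $\CH$. Throughout I use that $\CH$ is closed under polynomial-time reductions and under $\PP$-type operators. I also use that $\Po^{\CH}=\CH$: a $\CH$ oracle at some finite level can be absorbed into the next level. Integers of exponential bitsize are treated implicitly, through oracle access to their bits. An exponential-size object whose bits are in $\CH$ is then the analogue of an input bit string of length $N=2^{p(n)}$. Any $\mathrm{DLOGTIME}$-uniform constant-depth threshold circuit of size $N^{O(1)}$ on such implicitly given inputs can be evaluated in $\CH$. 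At each threshold layer a majority over exponentially many gates is one application of the counting operator, and the constant depth means only a fixed number of counting levels is added.

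\textbf{Part (1), sums.} Iterated addition of $N$ numbers of $N$ bits each lies in uniform $\mathsf{TC}^0$. Concretely, the $a$-th bit of $\sum_k c_n(k)$ is determined by the column counts $\#\{k : \text{bit } j \text{ of } c_n(k)=1\}$ for $j\le a$, which are exactly counting-hierarchy quantities. Signs are handled by summing the positive and the negative terms separately and then subtracting. The bit queries $\Bit(c)$ are made to an oracle in $\CH$, so $\Bit(a)\in\CH$.

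\textbf{Part (1), products.} This is the main obstacle, since iterated multiplication is not obviously a counting problem. I would use the Hesse--Allender--Barrington result that iterated multiplication is in $\mathrm{DLOGTIME}$-uniform $\mathsf{TC}^0$, scaled up to exponential input length. Their method first computes the residues $c_n(k) \bmod q$ for polynomially-many-bit primes $q$; this is a sum of bits times powers of two mod $q$, hence a counting computation. Multiplication mod small primes is done via discrete logarithms, which turns products into sums of exponents. The result is then reconstructed from the Chinese remainder representation by a division-type $\mathsf{TC}^0$ procedure. The delicate points are that uniformity scales correctly, and that the primes and discrete logs can be found by brute-force search over polynomial-bit numbers, which stays in $\PH\subseteq\CH$.

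\textbf{Part (2) and the nonuniform version.} Integer division $\lfloor a_n/b_n\rfloor$ likewise lies in uniform $\mathsf{TC}^0$ by the same paper, and I would apply it in the same way. For the $\CH/\poly$ versions I would run the identical argument with the advice string of the $\CH/\poly$ oracle for $\Bit(c)$ passed along. The advice depends only on $n$, and the constructions query $c_n(\cdot)$ only at a fixed $n$, so polynomial advice suffices.
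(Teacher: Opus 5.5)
Your proposal is correct and matches the argument behind the result the paper cites without proof, \cite[Thm.~3.10]{buerg:09}. Iterated addition, iterated multiplication and integer division lie in $\mathrm{DLOGTIME}$-uniform $\mathsf{TC}^0$ by Hesse--Allender--Barrington; scaled up to exponential input length, $\mathrm{DLOGTIME}$-uniformity becomes polynomial-time uniformity and each threshold layer costs one counting operator, so the bits land in $\CH$, and in the $\CH/\poly$ case the advice is simply carried along.
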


For instance, \cref{th:closure-CH} implies that the factorials 
$a_n(k) := k!$ for $k\le 2^n$ form a family definable in $\CH$. 
(This is~\cite[Cor~3.11]{buerg:09}.) 

\begin{rem}\label{re:clos-SP-CH}
\cref{th:closure-CH} als holds for families 
with a polynomial number of parameters. 
For instance, suppose $(c_n(k,\ell))$ is definable in $\CH$, 
where $0\le k,\ell < 2^{p(n)}$ with a polynomial~$p$. 
Then 
$a_n(\ell):= \sum_{k=0}^{2^{p(n)}-1} c_n(k,\ell)$ defines 
a family $(a_n(\ell))\in\CH$, and similarly for taking products.
\end{rem}

\section{Multivariate resultants}\label{se:resultant}

The theory of resultants is a classical part of elimination theory, which aims 
at solving systems of polynomial equations. 
Accessible accounts of this theory are
van der Waerden~\cite[\S 82]{waer:48},
Lang~\cite[Chap.~IX]{lang:93} and 
Cox et al.~\cite{cox-et-al-using-AG:98}. 
More advanced treatments can be found in 
the paper~\cite{jouanolou:91} by Jouanolou 
and in the monograph~\cite{GKZ:94} by Gelfand, Kapranov and Zelevinsky. 

We fix natural numbers $n$, $d_0,\ldots,d_n$ and
consider homogeneous polynomials 
$F_i\in\C[x_0,\ldots,x_n]$ of degree~$d_i$.
The corresponding zero set in the complex projective space~$\proj^{n}$ 
will be denoted by $V(F_0,\ldots,F_n)$. 
We think of $u$ as a vector of $p=\sum_{i=0}^n {n+d_i \choose n}$ variables, 
which when specialized to an element of $\C^N$, defines a sequence $(F_0,\ldots,F_n)$. 

The {\em (multivariate) resultant} $\Res_{d_0,\ldots,d_n}\in\Z[u]$ is characterized as the unique 
absolutely irreducible polynomial with the following property:
\begin{enumerate}
\item Any system $F_0(x)=0,\ldots,F_{n}(x)=0$ of homogeneous polynomials of 
the degrees $d_0,\ldots,d_n$ has a nontrivial solution in $\proj^{n}$ if and only if 
$\Res_{d_0,\ldots,d_n}(F_0,\ldots,F_n)=0$.
\item $\Res_{d_0,\ldots,d_n}(x_0^{d_0},\ldots,x_n^{d_n})=1$.
\end{enumerate}
It is known that $\Res_{d_0,\ldots,d_n}$ is multihomogeneous: for each~$i$ it is of 
degree $d_0\cdots d_n/d_i$ in the variables of~$F_i$. 
The resultant is a semi-invariant with respect to linear transformations: 
for $g\in\GL(n+1)$, we have 
$$
 \Res_{d_0,\ldots,d_n}(F_0(gx),\ldots,F_n(gx)) = \det(g)^{d_0\cdots d_n}  \Res_{d_0,\ldots,d_n}(F_0,\ldots,F_n) .
$$
Resultants are are closely related to determinants of highly structured matrices of exponential format: 
Macaulay expressed $\Res_{d_0,\ldots,d_n}$ as a quotient of two determinants,
see~\cite[Chap.~3, Thm~4.9]{cox-et-al-using-AG:98}. 

One can show that the bitsize of the coefficients of $\Res_{d_0,\ldots,d_n}$ 
is bounded by $O(n d^n \log d)$: this follows e.g., from~\cite{sombra:04}, 
which contains a bound in a much more general situation.

We think of $\proj^{n} = \C^{n} \cup\proj^{n-1}$, 
where $\proj^{n-1}$ is determined by $x_0=0$.
We assume now that the projective zero set of $F_1,\ldots,F_{n}$ 
does not contain points ``at infinity'', that is, 
$V(F_1,\ldots,F_{n}) \cap V(x_0) =\varnothing$. 
If we define 
$\overline{F}_i(x) := F_i(0,x_1,\ldots,x_{n})$, 
then we must have $\Res_{d_1,\ldots,d_{n}}(\overline{F}_1,\ldots,\overline{F}_{n})\ne 0$ by property~(1).
Then
$V(F_1,\ldots,F_{n})$ must be finite since 
otherwise, $V(F_1,\ldots,F_{n})$ would be at least one dimensional and hence would intersect the 
hyperplane $V(x_0)$ by basic dimension theory of projective varieties.
We denote by $m(\xi)$ the multiplicity of the zero $\xi$.

The \emph{Poisson formula} provides an inductive way of computing resultants
and states that
\begin{equation}\label{eq:poisson}
 \Res_{d_0,\ldots,d_n}(F_0,\ldots,F_n) = (-1)^{s(d_0.\ldots,d_n)} \, \Res_{d_1,\ldots,d_{n}}(\overline{F}_1,\ldots,\overline{F}_{n})^{d_0} \cdot
  \prod_{\xi\in V(F_1,\ldots,F_{n})} \Big(\frac{F_0(\xi)}{\xi_0^{d_0}}\Big)^{m(\xi)} ,  
\end{equation}
if we assume that $V(F_1,\ldots,F_{n})\cap V(x_0) = \varnothing$ as above.  
(We do not specify the function~$s$ determining the sign, since this will not be relevant.) 
E.g., see~\cite[\S 3.3, Exercise~8]{cox-et-al-using-AG:98}; 
a complete proof can be found in~\cite{jouanolou:91}. 
The Poisson formula allows to recursively compute the resultant: 
it is a key computational tool used in 
Andrews et al.~\cite{andrews-et-al-HN:26}.

Let us state the following consequence for later use. 

\begin{prop}\label{prop:cor-poisson}
Let $F_1,\ldots,F_{n}\in\C[x_0,\ldots,x_n]$ be homogeneous 
of the degrees $d_1,\ldots,d_n$ such that 
$V(F_1,\ldots,F_n)$ is finite. Let $d_0\in \N$.
There exists $c\in\C^*$ such that 
$$
 \Res_{d_0,d_1,\ldots,d_{n}}(H,F_1,\ldots,F_n) = 
  c\cdot \prod_{\xi\in V(F_1,\ldots,F_n)} \Big(\frac{H(\xi)}{\xi_0^{d_0}}\Big)^{m(\xi)} 
$$
for any $H\in\C[x_0,\ldots,x_n ]$ homogeneous of degree~$d_0$. 
\end{prop}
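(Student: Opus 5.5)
The plan is to reduce to the Poisson formula~\eqref{eq:poisson} by a generic linear change of coordinates, using the semi-invariance of the resultant under $\GL(n+1)$. The point is that~\eqref{eq:poisson} needs $V(F_1,\ldots,F_n)\cap V(x_0)=\varnothing$, which is not assumed here. Finiteness of $V(F_1,\ldots,F_n)$ does ensure that some hyperplane avoids it. Throughout, fix once and for all an affine representative $\xi\in\C^{n+1}\setminus\{0\}$ for each of the finitely many points of $V:=V(F_1,\ldots,F_n)$.

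\textbf{Step 1 (choose coordinates).} Since $V$ is finite, there is a linear form $\ell$ with $\ell(\xi)\ne0$ for all $\xi\in V$. We then choose $g\in\GL(n+1)$ whose inverse has $\ell$ as its first row, so that $(g^{-1}\xi)_0=\ell(\xi)$. Set $G_i:=F_i\circ g$ and $K:=H\circ g$. The zero set $V(G_1,\ldots,G_n)$ equals $g^{-1}V$, and it misses $V(x_0)$. The multiplicity of $\eta=g^{-1}\xi$ equals $m(\xi)$, since multiplicity is defined via local rings and these are transported isomorphically by the linear automorphism $g$.

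\textbf{Step 2 (apply Poisson).} Since $V(G_1,\ldots,G_n)$ misses $V(x_0)$, the argument preceding~\eqref{eq:poisson} gives $\Res_{d_1,\ldots,d_n}(\overline G_1,\ldots,\overline G_n)\ne0$. Applying~\eqref{eq:poisson} to $(K,G_1,\ldots,G_n)$ gives
\[
\Res_{d_0,\ldots,d_n}(K,G_1,\ldots,G_n)=\pm\Res_{d_1,\ldots,d_n}(\overline G_1,\ldots,\overline G_n)^{d_0}\prod_{\xi\in V}\Big(\frac{H(\xi)}{\ell(\xi)^{d_0}}\Big)^{m(\xi)},
\]
using $K(g^{-1}\xi)=H(\xi)$. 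Here $\overline G_i$ is defined from $G_i$ exactly as $\overline F_i$ was defined from $F_i$. The quotient $H(\xi)/\ell(\xi)^{d_0}$ is independent of the choice of representative, because both numerator and denominator are homogeneous of degree $d_0$.

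\textbf{Step 3 (semi-invariance).} The left-hand side equals $\det(g)^{d_0\cdots d_n}\Res_{d_0,\ldots,d_n}(H,F_1,\ldots,F_n)$. Hence
\[
\Res(H,F_1,\ldots,F_n)=c_0\prod_{\xi\in V}H(\xi)^{m(\xi)},
\]
where $c_0$ is a nonzero constant built from $\det(g)$, $\Res(\overline G)$ and the numbers $\ell(\xi)$. Crucially, $g$ and $\ell$ were chosen depending only on $F_1,\ldots,F_n$, so $c_0$ does not depend on $H$. Finally, we rewrite $H(\xi)=\xi_0^{d_0}\cdot H(\xi)/\xi_0^{d_0}$ and absorb $\prod_\xi\xi_0^{d_0m(\xi)}$ into $c$. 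This yields the claimed formula with $c\in\C^*$.

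\textbf{Main obstacle.} The last absorption step is only meaningful when all $\xi_0\ne0$, that is, when the expression in the statement is defined. If some point of $V$ lies at infinity, the statement must be read with the representative-dependent form $c_0\prod H(\xi)^{m(\xi)}$, or with $\ell$ in place of $x_0$. Beyond this, the only delicate point is the invariance of multiplicities and of the set $V$ under $g$, which is routine.
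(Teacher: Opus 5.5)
Your proof is correct and takes the same route as the paper: move $V(F_1,\ldots,F_n)$ off the hyperplane $V(x_0)$ by a linear change of coordinates, apply the Poisson formula~\eqref{eq:poisson}, and observe that the prefactor does not depend on $H$. Your version is more careful than the paper's one-line argument: you track the factor $\det(g)^{d_0\cdots d_n}$ and the change of denominators from $\xi_0$ to $\ell(\xi)$, and you rightly note that the displayed formula presupposes $\xi_0\neq 0$ for all $\xi\in V(F_1,\ldots,F_n)$ --- in that case \eqref{eq:poisson} already applies directly, with no coordinate change, giving $c=\pm\Res_{d_1,\ldots,d_n}(\overline{F}_1,\ldots,\overline{F}_n)^{d_0}$.
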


\begin{proof}
By a linear change of coordinates fixing $x_0$, we can achieve that 
$V(F_1,\ldots,F_{n})\cap V(x_0)=\varnothing$. 
Now apply~\eqref{eq:poisson} with
$c= (-1)^{s(d_0.\ldots,d_n)} \Res_{d_1,\ldots,d_{n}}(\overline{F}_1,\ldots,\overline{F}_{n})^{d_0}$.
Note that $c$ does not depend on~$H$.
\end{proof}

If we take $H = \sum_{i=0}^n u_i x_i$, then 
$\Res_{1,d_1,\ldots,d_n}(F_0,\ldots,F_n) \in \Z[u_0,\ldots,u(n)]$
is called the \emph{$u$-resultant}. According to~\cref{prop:cor-poisson}
the $u$-resultant factors into a product of linear forms, 
from which the finitely many zeros $\xi\in V(F_1,\ldots,F_{n})$, 
along with their multiplicity~$m(\xi)$, can be read of.

\subsection{Deformation for extracting isolated solutions}\label{se:deform}

When homogenizing polynomials, spurious components may arise at infinity. 
In order to cope with this, we deform the polynomials,
which is a standard idea in intersection theory, e.g, 
see~\cite[\S 6.4.2]{eisenbud-harris:16}. 

Suppose we have polynomials $g_1,\ldots,g_n$ 
in $\C[x_1,\ldots,x_n]$ 
with a finite affine zero set $V(g_1,\ldots,g_n)\subseteq\C^n$. 
Set $d_i=\deg g_i$ and consider the homo\-genization of ~$g_i$, 
$$
 G_{i} := x_0^{d_i} g_i(x_1/x_0,\ldots,x_n/x_0) .
$$  
The projective zero set $V(G_1,\ldots,G_n)$ contains $V(g_1,\ldots,g_n)$
as isolated points, but $V(G_1,\ldots,G_n)$ may not be finite.
A simple example for this phenomenon is 
$g_1=x^2_1+x_1$, $g_2=x^2_1x_2+x_2$
with affine zero set $V(g_1,g_2)=\{(0,0),(-1,0)\}$, but  
$V(G_1,G_2)$ contains the line $V(x_0,x_1)$ at infinity.

In order to cope with the spurious components arising in 
$V(G_1,\ldots,G_n)$, we deform the polynomials~$G_i$. 
We follow the concrete description in~\cite{zulehner:88}.

Let $G_1,\ldots,G_n\in\C[x_0,\ldots,x_n]$ be homogeneous of 
the degrees $d_1,\ldots,d_n$. We introduce a new variable~$t$ 
(interpreted as a deformation parameter) and define  
\begin{equation}\label{eq:def-hG}
 \hG_i := G_{i} + t (x_i^{d_i} - x_0^{d_i}), \quad\mbox{$i=1,\ldots,n$} ,
\end{equation}
which is homogenous of degree~$d_i$ in the $x$-variables.
We consider the zero set 
$$
 Z:= V(\hG_1,\ldots, \hG_n) \subseteq \proj^n\times \C
$$
with the fibers $Z_\tau \subseteq \proj^n$ defined by 
$\pi^{-1}(\tau) = Z_\tau \times \{ \tau\}$, 
where $\pi\colon \proj^n\times \C \to \C$ 
denotes the projection. 
Clearly, $Z_0 = V(G_1,\ldots,G_n)$.  
The proof of the following result requires some basic algebraic geometry and 
the implicit function theorem; see~\cite{zulehner:88}.
(We note that \cite{ierardi:89} gives a proof over any algebraically closed field
for perturbations $G_{i} + t x_i^{d_i}$.) 

\begin{prop}\label{pro:curves+limits}
Let $D:=d_1\cdots d_n$.
\begin{enumerate}
\item There is a finite subset $E\subseteq \C$ containing $0$ such that 
$Z\cap (\proj^n \times (\C\setminus E))$ is a union of 
disjoint smooth algebraic curves $\Gamma_1,\ldots, \Gamma_D$ 
parameterized by analytic functions 
$p^{(1)} \colon \C\setminus E \to \proj^n$, $j=1,2,\ldots,D$, so that 
$Z_\tau = \{p^{(1)}(\tau),\ldots,p^{(D)}(\tau) \}$ for $\tau\in E$.

\item For each $j$, the limit
$q^{(j)} := \lim_{\tau\to 0} p^{(j)}(\tau)$
exists and lies in $V(G_1,\ldots,G_n)$. 

\item Every isolated zero of  $V(G_1,\ldots,G_n)$ occurs among the limit points~$q^{(j)}$. 
In particular, every zero of $V(g_1,\ldots,g_n)$ occurs among these limit points.
\end{enumerate}
\end{prop}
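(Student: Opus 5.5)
We view $Z$ as the closed subvariety of $\proj^n\times\C$ cut out by $\hG_1,\ldots,\hG_n$, and study the projection $\pi\colon Z\to\C$. The approach is the usual one for deformation arguments: first show the family is finite and étale away from finitely many $\tau$, then take limits of the sheets as $\tau\to 0$ using properness of $\proj^n$, and finally use an upper semicontinuity or Bézout count to see that isolated zeros of the special fibre are reached.

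\textbf{Step 1 (generic fibre).} Consider the fibre over $\tau\ne 0$, which is the zero set of $G_i+\tau(x_i^{d_i}-x_0^{d_i})$. At $x_0=0$ the equations become $\overline{G}_i+\tau x_i^{d_i}$ with $\overline G_i=G_i(0,x_1,\ldots,x_n)$. The resultant $\Res_{d_1,\ldots,d_n}(\overline G_1+\tau x_1^{d_1},\ldots)$ is a polynomial in $\tau$ with leading coefficient $\Res(x_1^{d_1},\ldots,x_n^{d_n})=1$, so it is nonzero. Hence, for all but finitely many $\tau$, the fibre $Z_\tau$ has no points at infinity and is therefore finite. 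By Bézout it then consists of $D$ points counted with multiplicity.

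To obtain $D$ \emph{distinct} points for generic $\tau$, we check a single fibre. The clean choice is to look at $\tau\to\infty$, i.e.\ the system $x_i^{d_i}-x_0^{d_i}+\epsilon G_i$ with $\epsilon=1/\tau$. At $\epsilon=0$ this has exactly $D$ simple solutions $x_i=\zeta_i x_0$ with $\zeta_i^{d_i}=1$. Simplicity is an open condition, expressible as nonvanishing of a discriminant, namely the resultant of the system together with its Jacobian determinant, which is algebraic in $\tau$. So all but finitely many $\tau$ give $D$ simple points. Let $E$ be $\{0\}$ together with these exceptional values. Over $\C\setminus E$, the map $\pi\colon Z\to\C$ is finite étale of degree $D$, and the implicit function theorem gives local analytic sections.

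We have to treat the global parametrisation by functions $p^{(j)}$ with care because of monodromy. The intended reading is local: the sheets are analytic over a punctured neighbourhood $0<|\tau|<\e$ after possibly passing to $\tau=s^k$ (Puiseux). The $\Gamma_j$ are the irreducible components of the curve, or equivalently the local branches. I would state and prove it in that local form.

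\textbf{Step 2 (limits).} Each branch is an algebraic curve in $\proj^n\times\C$. Its closure is proper over $\C$ because $\proj^n$ is compact, and a Puiseux expansion shows that $p^{(j)}(\tau)$ converges as $\tau\to0$. The limit satisfies the equations at $\tau=0$, so it lies in $Z_0=V(G_1,\ldots,G_n)$.

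\textbf{Step 3 (isolated zeros are limits).} This is the main obstacle. Let $\xi$ be an isolated point of $Z_0$. Consider $\overline{Z'}$, the closure of $Z\cap\pi^{-1}(\C\setminus E)$. Then $\overline{Z'}$ is purely one-dimensional, and $Z=\overline{Z'}\cup(\text{stuff in the fibres over }E)$. We need $\xi\in\overline{Z'}$.

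The argument is the standard dimension count. Locally near $(\xi,0)$, the set $Z$ is cut out in the $(n+1)$-dimensional space $\C^n\times\C$ (an affine chart) by $n$ equations. So every irreducible component of $Z$ through $(\xi,0)$ has dimension at least $1$. Since $\xi$ is isolated in $Z_0$, such a component cannot lie inside $\{t=0\}$. It also cannot lie inside any other fibre $\{t=e\}$, because it passes through $t=0$. So it dominates $\C$, hence lies in $\overline{Z'}$, and $(\xi,0)$ is the limit of some branch.

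Finally, an affine zero of $(g_1,\ldots,g_n)$ is an isolated point of $V(G_1,\ldots,G_n)$. It lies in the affine chart $x_0\ne0$, where $V(G)$ coincides with the finite set $V(g)$. This gives the last clause.
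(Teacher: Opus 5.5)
Your proof is correct. It is also a genuinely different route, because the paper gives no argument of its own here: it only cites Zulehner's homotopy-continuation paper (implicit function theorem plus basic algebraic geometry), and Ierardi for the perturbation $G_i+tx_i^{d_i}$ over arbitrary algebraically closed fields.

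You make both genericity claims explicit with resultants. By multihomogeneity, $\Res_{d_1,\ldots,d_n}(\overline G_1+\tau x_1^{d_1},\ldots,\overline G_n+\tau x_n^{d_n})$ has $\tau$-leading coefficient $\Res(x_1^{d_1},\ldots,x_n^{d_n})=1$. Simplicity is certified at $\tau=\infty$, where the start system $x_i^{d_i}=x_0^{d_i}$ has $D$ simple roots. You then prove (3) by the affine dimension bound: every component of $Z$ through $(\xi,0)$ has dimension at least one, cannot lie in $\{t=0\}$ because $\xi$ is isolated, and therefore dominates the $t$-line. This step is purely algebraic and works over any algebraically closed field.

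What your route does not give is the finer count, typical of path-following arguments, that an isolated zero is the limit of as many branches as its multiplicity. Only existence is used in \cref{le:def-u-resultant}, so this costs nothing downstream.

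You are right that (1) cannot be read literally. For $n=1$ and $G_1=x_1^2-2x_0^2$ one gets $x_1/x_0=\pm\sqrt{(2+t)/(1+t)}$. Over $\C\setminus E$ this is a single irreducible curve with monodromy around $t=-1$ and $t=-2$, so single-valued $p^{(j)}$ do not exist. For $G_1=x_1^2$ the branches $\pm\sqrt{t/(1+t)}$ already need Puiseux at $t=0$. Your local reading over a punctured disc is what \cref{pro:TC-RES} actually uses. The same example shows that irreducible components and local branches are not ``equivalent'' as you wrote.

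Three details should be tightened:
\begin{itemize}
\item The Jacobian of $n$ forms in $n+1$ variables is not square. Take the minor $J$ in $x_1,\ldots,x_n$, which detects simplicity once points at infinity are excluded.
\item Turn ``good at $\tau=\infty$'' into nonvanishing of the $\tau$-leading coefficient of $\Res(J,\hG_1,\ldots,\hG_n)$, using the same scaling as in your first computation. At $\epsilon=0$ this coefficient is $\Res(\prod_i d_ix_i^{d_i-1},x_1^{d_1}-x_0^{d_1},\ldots,x_n^{d_n}-x_0^{d_n})\neq 0$.
\item Passing from $(\xi,0)$ lying in the Zariski closure of $Z\cap\pi^{-1}(\C\setminus E)$ to ``$\xi$ is the limit of a branch'' uses that Zariski and Euclidean closures of a constructible set coincide.
\end{itemize}
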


Let us assume that 
$q^{(1)},\ldots, q^{(r)}$ are the affine limit points in $\C^n$: for those we can assume 
$q^{(j)}_0=1$ without loss of generality.  
We next explain how to retrieve the affine limit points $q^{(1)},\ldots, q^{(r)}$ from 
the trailing coefficient $\tc_t(R)$ of $R$ with respect to $t$; 
see \eqref{eq:TC} for the definition. 

\begin{prop}\label{pro:TC-RES}
Let $H\in\C[x_0,\ldots,x_n]$ be homogeneous of degree~$d_0$ and 
$\hG_1,\ldots, \hG_n \in \C[t][x_0,\ldots,x_n]$ be as in~\eqref{eq:def-hG}.
Then 
the trailing coefficient with respect to the variable~$t$
of the resultant, 
$$
 R(t) :=\Res_{d_0,d_1,\ldots,d_{n}}(H,\hG_1,\ldots,\hG_n) \in \C[t] , 
$$
factors as 
$$
 \tc_t(R) = c' \prod_{j=1}^r H(q^{(j)})  ,
$$
where  $q^{(1)},\ldots, q^{(r)}$ denote the affine limit points of 
$Z= V(\hG_1,\ldots, \hG_n) \subseteq \proj^n\times \C$
as defined in \cref{pro:curves+limits} 
and $c'\in\C^*$. 
\end{prop}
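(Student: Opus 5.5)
We plan to work pointwise in $\tau$ near $0$ and then pass to the limit. For $\tau\notin E$, \cref{pro:curves+limits}(1) says the fiber $Z_\tau$ consists of the $D$ distinct points $p^{(1)}(\tau),\ldots,p^{(D)}(\tau)$. In particular $V(\hG_1(\tau),\ldots,\hG_n(\tau))$ is finite, and since we get $D=d_1\cdots d_n$ distinct points, Bézout forces each of them to have multiplicity one. We then apply \cref{prop:cor-poisson} with $F_i=\hG_i(\tau,\cdot)$. This yields a function $c(\tau)\in\C^*$ with
$$
 R(\tau)=c(\tau)\prod_{j=1}^D \frac{H(p^{(j)}(\tau))}{p^{(j)}_0(\tau)^{d_0}} .
$$
Since $\C^*$-scalings of $p^{(j)}$ do not change this expression, we may choose local analytic (Puiseux) representatives of $p^{(j)}(\tau)$ near $0$ whose limits are the $q^{(j)}$.

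Next we make $c(\tau)$ explicit. The coordinate change used in \cref{prop:cor-poisson} is not needed here, because no point of $Z_\tau$ lies at infinity. Indeed, setting $x_0=0$ in $\hG_i$ gives $\overline{G}_i+\tau x_i^{d_i}$, and for generic $\tau$ this system has no nontrivial common zero: its resultant is a polynomial in $\tau$ with leading term $\tau^{D}\cdot\Res(x_1^{d_1},\ldots,x_n^{d_n})=\tau^D\ne0$. So the Poisson formula \eqref{eq:poisson} gives
$$
 c(\tau)=\pm \rho(\tau)^{d_0},\qquad \rho(\tau):=\Res_{d_1,\ldots,d_n}\big(\overline{G}_i+\tau x_i^{d_i}\big)_{i} ,
$$
where $\rho(\tau)$ is a nonzero polynomial in $\tau$ independent of $H$.

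We then split the product. For the affine limit points $q^{(1)},\ldots,q^{(r)}$, normalized so that $q_0^{(j)}=1$, the factor $H(p^{(j)}(\tau))/p_0^{(j)}(\tau)^{d_0}$ tends to $H(q^{(j)})$. Its lowest-order term in $\tau$ is therefore $H(q^{(j)})$ whenever that value is nonzero.

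For $j>r$ the limit $q^{(j)}$ lies at infinity, so $p_0^{(j)}(\tau)\to0$ and the factor blows up with some fractional order. The main obstacle is to show that this blow-up, together with the leading behaviour of $H$ along the branch, contributes a factor whose leading coefficient does not depend on $H$, or at least cancels against $\rho(\tau)^{d_0}$ up to a constant. We would try first to avoid the branch-by-branch analysis. Because $R(t)$ is a polynomial in $t$ and in the coefficients of $H$, we treat $H$ as generic, with indeterminate coefficients.

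- The product over $j>r$, multiplied by $\rho^{d_0}$, must equal $R$ divided by $\prod_{j\le r}(\cdots)$.
- At $\tau=0$ the points $q^{(j)}$ with $j>r$ lie in $V(x_0)$. We compare with $\rho(\tau)=\pm\prod_j(\text{limit behaviour at infinity})$ using the Poisson formula once more for the system restricted to $x_0=0$.
- The aim is to show that the trailing term of $\rho(\tau)^{d_0}\prod_{j>r}H(p^{(j)})/p_0^{(j)}{}^{d_0}$ is $c'\in\C^*$ independent of $H$. One route is to observe that this trailing term is a polynomial in the coefficients of $H$ which never vanishes when $H(q^{(j)})\ne0$ for all $j\le r$, and then use degree and homogeneity considerations.

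If that route fails, we would fall back on an explicit Puiseux analysis of each branch at infinity. Once this step is done, taking the trailing coefficient of the product gives $\tc_t(R)=c'\prod_{j=1}^r H(q^{(j)})$. The case in which some $H(q^{(j)})=0$ follows by continuity or genericity in $H$, since both sides are polynomial in the coefficients of $H$.
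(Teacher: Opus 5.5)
Up to the point where you split the product into affine branches and branches escaping to infinity, your argument is the paper's. It too applies \cref{prop:cor-poisson} on $Z_\tau$, $\tau\notin E$, and separates $j\le r$ from $j>r$. The paper then just writes $c(\tau)\prod_{j>r}H(p^{(j)}(\tau))/p^{(j)}_0(\tau)^{d_0}=c'\tau^e+O(\tau^{e+1})$ for the given $H$. So its $c'$ is allowed to depend on $H$, and its conclusion tacitly needs $\prod_{j\le r}H(q^{(j)})\ne0$.

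The step you single out as the main obstacle, that this leading coefficient is independent of $H$, is a genuine gap. None of your routes can close it, because the claim is false. Homogeneity, which you propose to use, actually refutes it. $R$ is homogeneous of degree $d_1\cdots d_n=D$ in the coefficients of $H$, hence so is $\tc_t(R)$, while $\prod_{j\le r}H(q^{(j)})$ has degree $r$. So an $H$-independent $c'$ is impossible as soon as one branch escapes to infinity ($r<D$).

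Concretely, let $n=1$, $G_1=x_0x_1$ with $d_1=2$, and $H=h_0x_0+h_1x_1$. Then $\hG_1=x_0x_1+t(x_1^2-x_0^2)$ and $R(t)=\hG_1(h_1,-h_0)=-h_0h_1+t(h_0^2-h_1^2)$. The two branches are:
\begin{itemize}
\item $x_1/x_0\approx t$, with affine limit $q^{(1)}=(1:0)$;
\item $x_1/x_0\approx -1/t$, with limit $q^{(2)}=(0:1)$ at infinity.
\end{itemize}
Your $\rho(\tau)$ equals $\tau$. The escaping branch contributes $\tau\,(h_0+h_1x_1^{(2)}(\tau))=-h_1+O(\tau)$, so $\rho^{d_0}$ cancels the pole but leaves $-H(q^{(2)})$. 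Hence $\tc_t(R)=-H(q^{(1)})H(q^{(2)})$. Your proposed criterion, that this trailing term ``never vanishes when $H(q^{(j)})\ne0$ for $j\le r$'', also fails: the polynomial $-h_1$ vanishes at $H=x_0$ although $H(q^{(1)})=1$.

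What your Puiseux set-up does prove, for $H$ with indeterminate coefficients, is $\tc_t(R)=c'\prod_{j=1}^{D}H(q^{(j)})$. Here $c'\in\C^*$ is independent of $H$, and the product runs over \emph{all} $D$ limit points. To see this, write $R=\pm\rho^{d_0}\prod_j p^{(j)}_0(\tau)^{-d_0}\cdot\prod_j H(p^{(j)}(\tau))$. The first factor is free of $H$, and each $H(p^{(j)}(\tau))$ has lowest term $H(q^{(j)})\ne0$.

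Your final step, handling the case $H(q^{(j)})=0$ by continuity, is also invalid, because trailing coefficients do not commute with specializing $H$. In the example, $H=x_1$ gives $R=-t$, so $\tc_t(R)=-1$, while $c'H(q^{(1)})=0$. For a fixed $H$, the statement can only be proved in the form the paper's argument actually yields: $c'$ depends on $H$, and one needs $\prod_{j\le r}H(q^{(j)})\ne0$.

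A harmless slip: the top term of $\rho$ is $\tau^{\sum_i D/d_i}$, not $\tau^D$; in the example $\rho=\tau$ while $D=2$.
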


\begin{proof}
We apply~\cref{pro:curves+limits}. When specializing $t$ to $\tau\in\C\setminus E$, 
the homogeneous polynomials $\hG_1,\ldots,\hG_n$ in $x_0,\ldots,x_n$ have 
exactly $D=d_1\cdots d_n$ many zeros in $\proj^n$,
hence the multiplicities of the zeros equal one by B\'ezout's theorem. 
By~\cref{prop:cor-poisson}, there exists $c(\tau)\in\C^*$ (not dependent on $H$) such that
$$
 R(\tau) = \Res_{d_0,d_1,\ldots,d_{n}}(H,\hG_1,\ldots,\hG_n) = \prod_{j=1}^r \frac{H(p^{(j)}(\tau))}{p^{(j)}(\tau)_0^{d_0}} \cdot 
   c(\tau) \prod_{j=r+1}^D \frac{H(p^{(j)}(\tau))}{p^{(j)}(\tau)_0^{d_0}} .
$$
Here the $p^{(1)},\ldots, p^{(r)}$ parametrize the curves with affine limit points $q^{(1)},\ldots, q^{(r)}$.
The first factor, expanded into a power series in $\tau$, gives 
$$
 \prod_{j=1}^r \frac{H(p^{(j)}(\tau))}{p^{(j)}(\tau)^{d_0}} = \prod_{j=1}^r H(q^{(j)})  + O(\tau) , \mbox{ for $\tau\to 0$, }
$$
since we assume that $q^{(j)}_0=1$. We write the second factor in the following form, for some $c'\in\C^*$ and $e\in\Z$, 
$$
 c(\tau) \prod_{j=r+1}^D \frac{H(p^{(j)}(\tau))}{p^{(j)}(\tau)^{d_0}} = c' \tau^e + O(\tau^{e+1} ) \mbox{ for $\tau\to 0$} .
$$
The integer $e$ must be nonnegative since $R(\tau)$ depends polynomially on $\tau$. 
This proves the assertion. 
\end{proof}

\begin{rem}\label{re:order-e}
The order~$e$ of vanishing encodes information about the components of $V(G_1,\ldots,G_n)$ of positive dimension 
as well as about the ``speed'' by which the solutions $p^{(j)}$ go to infinity as $\tau\to 0$,  for $j>r$. 
I think that $e$ may grow exponentially in $n$.
\end{rem}

\begin{rem}
The idea behind the proof is the source of the powerful homotopy methods 
for numerically solving systems of polynomial equations. 
E.g., see~\cite{zulehner:88} and the references given there.
A complexity analysis of homotopy methods can be found in~\cite[Part III]{condition}.
\end{rem}

\section{The complexity class $\VCH^0$}\label{se:VCH}

The multivariate resultants are at the heart of the Hilbert Nullstellensatz Problem.
Since the degree of these resultants grows exponentially in the number of variables (see \cref{se:resultant}), 
their complexity cannot be described by the classes $\VP^0$ or $\VNP^0$. 
We capture this by the following convenient definition. 

\begin{defn}\label{def:exp-format}
A family $(f_n)$ of multivariate polynomials over~$\Z$ 
is said to be of \emph{exponential format} 
if there is a polynomial $p(n)$ satisfying:
\begin{enumerate}
\item the number of variables of $f_n$ is bounded by $p(n)$,
\item the degree and the bitsize of the coefficients of $f_n$ are bounded by $2^{p(n)}$. 
\end{enumerate}
\end{defn}

Malod~\cite{malodthesis:03} defined the complexity class $\VPnb^0$ similarly as $\VP^0$, 
but without requiring that the degree of~$f_n$ is polynomially bounded.
For instance, repeated squaring defines the family $f_n:= x^{2^n}$, which is in~$\VPnb^0$. 
Clearly, every $(f_n)\in\VPnb^0$ has exponential format.

By exponential summation one defines the class $\VNPnb^0$, which results from $\VPnb^0$ 
in the same way as $\VNP^0$ results from $\VP^0$; see \cite{malodthesis:03} and \cite[\S 4.2]{buerg-survey:24}.
Clearly, every $(f_n)\in\VPnb^0$ has exponential format.

An important property of $\VNP^0$ is that it is closed  
with respect to passing to coefficient sequences~\cite{malodthesis:03}.  
However, for $\VNPnb^0$, this is only known in positive characteristic~\cite{malod:07}; 
see also~\cite[\S 4.2]{buerg-survey:24}. 
Moreover, it is unknown whether the sequence of multivariate resultants lies in $\VNPnb^0$; see~\cite{grenet-et-al:13}. 

This motivated the definition of the larger class $\VPSPACE$, 
which mirrors $\PSPACE$ in the algebraic setting. 
This class was introduced by Poizat~\cite{poizat:08} and Koiran and Perifel~\cite{koir-perifel-R}. 
Let $(f_n)$ be of exponential format. We denote 
\begin{equation}\label{eq:f_n_c}
 f_n = \sum_k c_n(k) x_1^{k_1}\cdots x_{u(n)}^{k_{u(n)}}
\end{equation}
and assume that the degree of $f_n$ and the bitsize of the integer coefficients $c_n(k)$
are bounded by $2^{p(n)}$, where $p(n)$ is a polynomial.
Note that if $(f_n)\in\VPSPACE^0$, then the coefficients $(c_n(k_1,\ldots,k_{u(n)}))$ of~$f_n$,
as defined in~\eqref{eq:f_n_c}, 
define a family of integers of exponential bitsize. 

We define the {\em (total) coefficient function} of $(f_n)$ as the Boolean function, 
which takes an exponent vector $(k_1,\ldots,k_{u(n)})$ and
an index~$a$, all given in binary encoding (of polynomial size $O(u(n) p(n)$),  
and outputs the $a$-th bit in the binary expansion of $c_n(k)$,
with the convention that the sign of $c_n(k)$ is the $0$-th bit. 

The class $\VPSPACE^0$ is defined as the set of all families $(f_n)$ of 
exponential format such that its coefficient function lies in $\PSPACE/\poly$. 
The class $\VPSPACE^0$ is closed under many natural operations. 
Moreover, it contains the sequence of multivariate resultants, 
which follows from the fact~\cite{canny:88} that the multivariate resultants 
can be evaluated in $\PSPACE$; see also~\cite{grenet-et-al:13}. 

Replacing $\PSPACE$ by the counting hierarchy $\CH$, we arrive at the following central definition.
This definition implicitly appears in Koiran and Perif\'el~\cite{koir-per:11}. 

\begin{defn}\label{def:VCH}
The \emph{algebraic counting complexity class $\VCH^0$} 
consists of the families $(f_n)$ of multivariate polynomials over~$\Z$ of exponential format 
with the property that the coefficient function of $f_n$ is in $\CH/\poly$. 
The \emph{uniform algebraic counting complexity class $\VCH^0(\U)$} 
is obtained when additionally requiring that the coefficient function of $f_n$ is in $\CH$. 
\end{defn}

Using the terminology of \cref{se:int-def-CH}, 
we can rephrase \cref{def:VCH} as follows: 
a family $(f_n)$ of multivariate integer polynomials of exponential format 
lies in the class $\VCH^0$ iff the corresponding family $c_n(k)$ of coefficients  
is definable in $\CH/\poly$.

It is clear that $\VCH^0 \subseteq \VPSPACE^0$ since $\CH\subseteq\PSPACE$.
In \cref{se:VCH-closure} we will present powerful criteria to show that 
families lie in $\VCH^0$.

Suppose we have a family $(f_n)$ of exponential format, where $f_n\in \Z[x_1,\ldots,x_{u(n)},y_1,\ldots,y_{v(n)}]$. 
We view~$f_n$ as a polynomial in the $y$-variables with coefficients in $\Z[x_1,\ldots,x_{u(n)}]$ 
and thus write 
$$
 f_n(x,y) = \sum_\ell g_\ell(x_1,\ldots,x_{u(n)}) y_1^{\ell_1}\cdots y_{v(n)}^{\ell_{v(n)}} .
$$ 
Let us fix a map $n\mapsto \ell(n)$, which selects a monomial in $y$ in $f_n$. 
Then we call the family $(g_{\ell(n})$ the \emph{corresponding coefficient family}.

The following easy observation is extremely useful. 

\begin{cor}\label{cor:VCH-closed-coeff}
The class $\VCH^0$ is closed under taking coefficient families:
$(f_n)\in\VCH^0$ implies that $(g_{\ell(n}) \in\VCH^0$ for any selection map $\ell$. 
Moreover, if the selection map $1^n\to\ell(n)$ is polynomial time computable and $(f_n)\in\VCH^0(\U)$, 
then $(g_{\ell(n}) \in\VCH^0(\U)$.
\end{cor}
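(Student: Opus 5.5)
The plan is to reduce the coefficient function of $(g_{\ell(n)})$ directly to the coefficient function of $(f_n)$ by relabeling monomials. No arithmetic is involved. Write
\[
 f_n = \sum_{k,\ell} c_n(k,\ell)\, x^{k} y^{\ell}.
\]
Then
\[
 g_{\ell(n)} = \sum_k c_n(k,\ell(n))\, x^{k},
\]
so the coefficient of $x^k$ in $g_{\ell(n)}$ is exactly the coefficient of $x^k y^{\ell(n)}$ in $f_n$. Consequently, for every $k$ and bit index~$a$, the $a$-th bit of the coefficient of $x^k$ in $g_{\ell(n)}$ equals the output of the coefficient function of $f_n$ on input $(k,\ell(n),a)$.

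First, I check that $(g_{\ell(n)})$ has exponential format. Its number of variables is at most $u(n)\le p(n)$. Its degree is at most $\deg f_n\le 2^{p(n)}$, and its coefficients are among those of $f_n$, so their bitsize is at most $2^{p(n)}$. We may assume the exponents in $\ell(n)$ are bounded by $2^{p(n)}$; if they are larger, $g_{\ell(n)}=0$, which is trivially in the class.

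Second, I handle the nonuniform case. The coefficient function of $f_n$ is decided by a $\CH$ machine with polynomial advice $\alpha_n$. On input $(1^n,k,a)$, the machine for $(g_{\ell(n)})$ takes as advice the pair $(\alpha_n,\ell(n))$, where $\ell(n)$ is written in binary with $v(n)p(n)$ bits, a polynomial length. It then forms the string $(1^n,k,\ell(n),a)$ and runs the $\CH$ machine for $f_n$. This is a polynomial-time many-one reduction with advice, and $\CH/\poly$ is closed under such reductions: each level $C_i\Po$ of the counting hierarchy is closed under polynomial-time many-one reductions, and advice strings concatenate. Hence the coefficient function of $(g_{\ell(n)})$ lies in $\CH/\poly$.

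Third, I handle the uniform case. If $1^n\mapsto\ell(n)$ is polynomial-time computable, no advice for $\ell(n)$ is needed, because the reduction computes $\ell(n)$ itself. The same reduction then places the coefficient function in $\CH$.

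The only point needing care is the encoding convention, namely that $n$ is given in unary so that $\ell(n)$ can be computed from the input. I do not expect a real obstacle.
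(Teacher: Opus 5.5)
Your proposal is correct and follows the paper's argument. In both, the coefficient function of $(g_{\ell(n)})$ is the coefficient function of $(f_n)$ with the $y$-exponent fixed to $\ell(n)$, which is supplied as polynomial-length advice in the nonuniform case and computed in polynomial time in the uniform case. Your extra checks make explicit what the paper leaves implicit: the exponential format of $(g_{\ell(n)})$, the out-of-range $\ell(n)$ case, and closure of $\CH$ and $\CH/\poly$ under polynomial-time many-one reductions.
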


\begin{proof}
Suppose $f_n$ is given as in~\eqref{eq:f_n_c} by (using shorthand notation)
$$
 f_n(x,y) = \sum_{k,\ell} c_n(k,\ell) x^{k}\, y^{\ell} . 
$$
Then 
$g_\ell(x) = \sum_k c_n(k,\ell) y^{\ell}$
and after fixing a map $n\mapsto \ell(n)$, we have 
$g_{\ell(n)}(x) = \sum_k c_n(k,\ell(n)) y^{\ell}$.
If $(f_n)\in\VCH^0$, the language 
$L:=\{ (1^n, k,a) \mid\mbox{ the $a$-th bit of $c_n(k,\ell(n))$ equals $1$} \}$ 
is in $\CH/\poly$ since the coefficient function of $f_n$ is in $\CH/\poly$ 
and we can view the binary encoding of $\ell(n)$ as an advice of polynomial length.
Moreover, if $\ell$ is polynomial time computable, 
we do not need the advice and get $L\in\CH$ if $(f_n)\in\VCH^0(\U)$. 
\end{proof}

Investigating the ``closures'' of algebraic complexity classes recently has received a lot of attention;
see~\cite[\S 4.7]{buerg-survey:24} for references. 
Unfortunately, it is unknown whether $\overline{\VP^0}$ 
is strictly larger than $\VNP^0$, or whether the implication 
$ \VNP^0 \subseteq \overline{\VP^0} \Longrightarrow  \VNP^0=\VP^0$ holds.
A pleasant feature of $\VCH^0$ is that it coincides with its closure. 
Let us state this in precise form.

As before, we focus on families of multivariate integer polynomials of exponential format.
Let us define $\overline{\VCH^0}$ as the set of such families~$(f_n)$ 
such that there exists such a family $(F_n)$ depending on an extra indeterminate~$t$ 
with the property that for all~$n$ either $f_n=0$, or 
\begin{equation}\label{eq:TC}
 F_n(x,t) = f_n(x) t^{e_n} + t^{e_n+1} G_n(x,t) ,\quad f_n\ne 0 , 
\end{equation}
where $G_n(x,t)\in\Z[x,t]$. 
In other words, $f_n$ is the \emph{trailing coefficient} of $F_n$ 
and we denote $\tc_t(F_n) := f_n$.  
Note that the trailing coefficient can be seen as a limit as follows:
$$
 f_n(x) = \lim_{\tau\to 0} \tau^{-e_n} F_n(x,\tau) .
$$
Analogously, we define the class $\overline{\VCH^0(\U)}$.

\begin{cor}\label{cor:VCH-border}
We have $\overline{\VCH^0}=\VCH^0$ and $\overline{\VCH^0(\U)}=\VCH^0(\U)$
\end{cor}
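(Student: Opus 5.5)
The inclusion $\VCH^0\subseteq\overline{\VCH^0}$ is trivial: take $F_n:=f_n$, so $e_n=0$. For the reverse inclusion, let $(f_n)\in\overline{\VCH^0}$ and fix a witnessing family $(F_n)\in\VCH^0$ with $\tc_t(F_n)=f_n$ as in~\eqref{eq:TC}. We write $F_n=\sum_{j} h_{n,j}(x)\,t^j$. Then $f_n=h_{n,e_n}$, where $e_n$ is the least $j$ with $h_{n,j}\ne 0$, and $e_n\le\deg F_n\le 2^{p(n)}$, so $e_n$ has polynomial bitlength. In the nonuniform case the statement follows directly from \cref{cor:VCH-closed-coeff}, applied with the selection map $n\mapsto e_n$ on the variable $t$. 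The binary encoding of $e_n$ is simply absorbed into the polynomial advice. Exponential format of $f_n$ is inherited from $F_n$.

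The uniform case is the real content. Here $e_n$ need not be polynomial time computable, so \cref{cor:VCH-closed-coeff} does not apply verbatim. Instead, I would show directly that $\Bit$ of the coefficient family $c_n(k,e_n)$ lies in $\CH$, where $F_n=\sum_{k,j}c_n(k,j)x^kt^j$ with coefficient function in $\CH$. The key step is to decide, for a given $j$ (in binary), whether $h_{n,j}=0$, i.e.\ whether $c_n(k,j)=0$ for all $k$. Nonvanishing of $c_n(k,j)$ is the disjunction over the polynomially many bit positions $a\le 2^{p(n)}$ (i.e.\ $a$ of polynomial length) of ``bit $a$ is one'', which is an $\exists$ over a polynomial-length string of a $\CH$ predicate. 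Ranging over $k$ adds one further $\exists$. So ``$h_{n,j}\ne0$'' is in $\NP^{\CH}=\CH$.

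Next, ``$j=e_n$'' is the predicate ``$h_{n,j}\ne 0$ and for all $j'<j$, $h_{n,j'}=0$'', a $\forall$ over $\CH$, hence again in $\CH$. Finally, the $a$-th bit of $c_n(k,e_n)$ equals $1$ iff there exists $j$ with $j=e_n$ and the $a$-th bit of $c_n(k,j)$ equals $1$; this is once more an $\exists$ over $\CH$. (Alternatively, one could compute $e_n$ bit by bit by binary search with a $\CH$ oracle.) Since $\CH$ is closed under polynomially bounded $\exists$ and $\forall$ quantification, being a union of levels each closed under $\Po$ with an oracle from a lower level, we get $\Bit(c_n(\cdot,e_n))\in\CH$. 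The case $f_n=0$ fits the same scheme: if no $j$ satisfies the predicate, all bits are $0$.

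I expect the main obstacle to be exactly this uniform elimination of the unknown exponent $e_n$. It rests on the closure of $\CH$ under polynomial-length quantifiers, which I take as standard from~\cite{buerg:09,alwa:93}.
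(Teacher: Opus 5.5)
For the nonuniform statement your argument is exactly the paper's. The paper's entire proof is that the corollary is an immediate consequence of \cref{cor:VCH-closed-coeff}: you select the $t$-exponent $e_n\le 2^{p(n)}$ and absorb its binary encoding into the advice.

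For the uniform half you depart from the paper, and your version is the more complete one. The uniform part of \cref{cor:VCH-closed-coeff} requires the selection map $1^n\mapsto \ell(n)$ to be polynomial-time computable. It is not clear that $n\mapsto e_n$ is, so the paper's one-line citation leaves a point implicit. Your quantifier argument fills it: it shows that the predicates ``$h_{n,j}\ne 0$'' and ``$j=e_n$'' lie in $\CH$, so $e_n$ is computable in $\FPo^{\CH}$. The proof of \cref{cor:VCH-closed-coeff} goes through verbatim for such selection maps, and this is what the paper's claim really rests on.

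Three small fixes:
\begin{itemize}
\item The bit positions $a<2^{p(n)}$ are exponentially many, not polynomially many. What matters, as your parenthesis says, is that each has polynomial length.
\item The closure fact to cite is $\NP^{\mathsf{C}_i\mathsf{P}}\cup\mathsf{coNP}^{\mathsf{C}_i\mathsf{P}}\subseteq\PP^{\mathsf{C}_i\mathsf{P}}=\mathsf{C}_{i+1}\mathsf{P}$ for the levels $\mathsf{C}_i\mathsf{P}$ of $\CH$.
\item You read the definition as $f_n=\tc_t(F_n)$ for every $n$, with $\tc_t(0)=0$. This is the right reading and you should state it. The literal phrasing ``either $f_n=0$ or \dots'' would let $f_n$ vanish on an arbitrary, even undecidable, set of indices, and then the uniform equality would fail.
\end{itemize}
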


\begin{proof}
This is an immediate consequence of~\cref{cor:VCH-closed-coeff}. 
\end{proof}

\subsection{Interpolation} 

The following applications of~\cref{th:closure-CH} already appear in~\cite{koir-per:11}. 
We present the proofs since they are important for understanding what is going on.

Let $\sigma_k(x_1,\ldots,x_N)$ denote the $k$-th elementary symmetric polynomial 
in the variables $x_i$. Then  the family given by 
$a_n(k) :=\sigma_{k}(1,2,3,\ldots,2^n)$ with $0\le k \le 2^n$ is definable in $\CH$;  
see~\cite[Cor~3.12]{buerg:09}.

Consider now the Vandermonde matrix 
$V_N(x_1,\ldots,x_N) := [x_j^{k-1}]_{1\le j,k\le N}$. 
It is well known that $\det(V_N) = \prod_{\a<\b} (x_\b-x_\a)$.
The inverse of $V_N$  is concisely described by the coefficients of the 
Lagrange polynomials for $j=1,\ldots,N$, 
\begin{equation*}
 L_j(t) = \sum_{k=1}^N L_{jk} x^{k-1} := \frac{f(t)}{(t-x_j) f'(x_j)} ,
\end{equation*}
where $f(t) := \prod_{i=1}^N (t- x_i)$ and 
$f'(x_j) = \prod_{i\ne j} (x_j-x_i)$. 
By construction, we have $L_j(x_i) = \delta_{ij}$, which means that 
$V_N \cdot [L_{jk}] =I_N$. Therefore, the matrix $[L_{jk}]$ is  the inverse of $V_N$.
We conclude that the coefficients of the adjoint $\adj(V_N) := \det(V_N) V_N^{-1}$ 
are given by the coefficients of the polynomial 
\begin{equation}\label{eq:lagrange}
 \det(V_N) L_j(t)  = \frac{\det(V_N)}{f'(x_j)} \cdot \frac{f(t)}{t-x_j} 
 = (-1)^{N-j}\prod_{\a<\b \atop \a,\b\ne j} (x_\b-x_\a) \cdot \prod_{i \ne j} (t-x_i).
\end{equation}
We note that the coefficients of the right-hand side polynomial are given 
by elementary symmetric polynomials. 

Now we take $N=2^n$ and substitute $x_k$ by $k$ to obtain the family of integer matrices 
$M_n := V_{2^n}(1,2,3,\ldots,2^n)$. 
We note that the adjoint 
$\adj(M_n) := \det(M_n) M_n^{-1}$
is a matrix over $\Z$. 

The following result was already derived from \cref{th:closure-CH} in~\cite{koir-per:11}.
A proof also appears (in quite clumsy form) in~\cite{andrews-et-al-HN:26}, 
who apparently were not aware of the references~\cite{buerg:09, koir-per:11}. 
  
\begin{cor}\label{cor:vandermonde}
\begin{enumerate}
\item The family of determinants $(\det (M_n))$ is definable in $\CH$.
\item The entries of the adjoint $\adj(M_n)$,
$k,\ell \mapsto \adj(M_n)_{k,\ell}$, where $1\le k,\ell \le 2^n$, 
form a family of integers definable in $\CH$.
\end{enumerate}
\end{cor}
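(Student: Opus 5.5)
Both parts follow from the closure properties in \cref{th:closure-CH} (together with \cref{re:clos-SP-CH}), once the quantities are written as exponential products or sums of integers that are trivially definable in $\CH$.

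\emph{Part (1).} By the Vandermonde formula,
\[
 \det(M_n)=\prod_{1\le \a<\b\le 2^n}(\b-\a).
\]
I would write this as an exponential product over pairs $(\a,\b)\in[2^n]^2$ of the integers
\[
 c_n(\a,\b):=\begin{cases}\b-\a & \text{if } \a<\b,\\ 1 & \text{otherwise.}\end{cases}
\]
The bits of $c_n(\a,\b)$ are computable in polynomial time from the binary encodings of $\a,\b$, so this two-parameter family is definable in $\Po\subseteq\CH$. Encoding the pair as a single index $k<2^{2n}$ and applying \cref{th:closure-CH}(1) to the product gives that $(\det M_n)$ is definable in $\CH$. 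The bitsize is at most $2^{2n}\cdot n$, which is still exponential, so the family has the required format. An alternative is to use $\prod_{\a<\b}(\b-\a)=\prod_{\b=1}^{2^n}(\b-1)!$ together with the definability of factorials.

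\emph{Part (2).} I would use \eqref{eq:lagrange} with $x_i=i$. The entry $\adj(M_n)_{k,j}$ is, up to the index convention, the coefficient of $t^{k-1}$ in
\[
 (-1)^{N-j}\, P_n(j)\, \prod_{i\ne j}(t-i), \qquad P_n(j):=\prod_{\a<\b,\ \a,\b\ne j}(\b-\a).
\]
\begin{itemize}
\item $P_n(j)$ is definable in $\CH$ by the same product argument as in part (1), now with the parameter $j$ carried along as allowed by \cref{re:clos-SP-CH}. The factor for a pair is set to $1$ whenever $\a=j$ or $\b=j$.
\item The coefficient of $t^{k-1}$ in $\prod_{i\ne j}(t-i)$ equals $(-1)^{N-k}\sigma_{N-k}(1,\ldots,\widehat{j},\ldots,N)$. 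Writing $e_m(j):=\sigma_m(1,\ldots,\widehat j,\ldots,N)$, the identity $\sigma_m(1,\ldots,N)=e_m(j)+j\,e_{m-1}(j)$ gives
\[
 e_m(j)=\sum_{r=0}^{m}(-j)^{r}\,\sigma_{m-r}(1,\ldots,N).
\]
This is an exponential sum of products of the family $\sigma_k(1,\ldots,2^n)$, which is definable in $\CH$ by \cite[Cor.~3.12]{buerg:09}, and of the powers $j^r$.
\end{itemize}

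The main obstacle is that the summands $(-j)^r\sigma_{m-r}$ are products of only two factors, so closure under products and sums is needed in parametrized form. The powers $j^r$ are definable in $\CH$ as the exponential product $\prod_{s<2^n}c(s)$ with $c(s)=j$ for $s<r$ and $c(s)=1$ otherwise. A product of two definable families is a special case of iterated multiplication, and signs are handled by splitting positive and negative parts, or by the sign-bit convention. The final entry is the product of $P_n(j)$, the coefficient $e_{N-k}(j)$ and a sign, which is again a finite product of families definable in $\CH$. Hence $(k,j)\mapsto\adj(M_n)_{k,j}$ is definable in $\CH$. The bitsizes remain bounded by $2^{O(n)}$ throughout.
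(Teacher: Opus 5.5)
Your proposal is correct and follows the paper's proof. Part (1) uses the Vandermonde product together with \cref{th:closure-CH}, and part (2) uses \eqref{eq:lagrange}, the parametrized product $\prod_{\alpha<\beta,\ \alpha,\beta\ne j}(\beta-\alpha)$, the elementary symmetric coefficients of $\prod_{i\ne j}(t-i)$, and a final product of two definable families, exactly as the paper does. Your identity $e_m(j)=\sum_{r=0}^{m}(-j)^r\sigma_{m-r}(1,\ldots,2^n)$ is a welcome addition: it makes explicit how the symmetric functions with the value $j$ omitted reduce to the family $\sigma_k(1,\ldots,2^n)$ covered by the cited Cor.~3.12, a step the paper passes over.
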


\begin{proof}
1. The well known formula $\det(V_N) = \prod_{j<k} (x_k-x_j)$ implies $\det(M_n) = \prod_{j<k} (k-j)$. 
This, together with~\cref{th:closure-CH} implies the first assertion.

2. We use formula~\eqref{eq:lagrange} on Lagrange interpolation. 
The family of products $\prod_{\a,\b}(\b-\a)$, 
over all pairs $(\a,\b)$ such that $0\le \a,\b \le 2^n$ and $\a,\b\ne j$, is defined in $\CH$
by~\cref{th:closure-CH}. 
Moreover, the coefficients of the product 
$\prod_{i \ne j} (t-i)$ over $1\le j \le 2^n$ with $i\ne j$ are given by elementary symmetric 
polynomials evaluated at these integers~$j$. 
We already noted above that this provides families of integers definable in $\CH$
(\cite[Cor~3.12]{buerg:09}). 
Again applying~\cref{th:closure-CH} for the product of two integers completes the proof. 
\end{proof}

\subsection{Closure properties of $\VCH^0$}\label{se:VCH-closure}

We first show that the class $\VCH^0$ is closed under taking exponential iterated sums and iterated products.

\begin{cor}\label{th:VCH-closed-IS-IP}
Let $(f_n)\in \VCH^0$ with $f_n\in \Z[x_1,\ldots,x_{u(n)+v(n)}]$ 
and $p(n)$ be a polynomial. Define 
\begin{eqnarray*}
 g_n(x_1,\ldots,x_{u(n)}) &:=& \sum_{y_1=0}^{2^{p(n)}-1}\ldots \sum_{y_{v(n)}=0}^{2^{p(n)}-1}  f_{n} (x_1,\ldots,x_{u(n)},y_{1},\ldots,y_{v(n)}) , \\ 
 p_n(x_1,\ldots,x_{u(n)}) &:=& \prod_{y_1=0}^{2^{p(n)}-1}\ldots \prod_{y_{v(n)}=0}^{2^{p(n)}-1}  f_{n} (x_1,\ldots,x_{u(n)},y_{1},\ldots,y_{v(n)}) ,
\end{eqnarray*}
Then the families $(g_n)$ and $(p_n)$ lie in $\VCH^0$.
Moreover, if  $(f_n)\in \VCH^0(\U)$, then $(g_n)$ and $(p_n)$ lie in~$\VCH^0(\U)$.
\end{cor}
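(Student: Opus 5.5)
We reduce everything to statements about the coefficient families and then invoke \cref{th:closure-CH} together with \cref{re:clos-SP-CH}. Write $f_n(x,y)=\sum_{k,\ell} c_n(k,\ell)\,x^k y^\ell$ in shorthand, where the family $c_n(k,\ell)$ is definable in $\CH/\poly$ (respectively in $\CH$ in the uniform case). First we check exponential format. $g_n$ has the same number of variables and degree at most that of $f_n$. Its coefficients are bounded by $2^{p(n)v(n)}\cdot 2^{2^{p'(n)}}\cdot (2^{p(n)})^{\deg f_n}\cdot(\#\text{monomials})$, which is $2^{2^{\mathrm{poly}(n)}}$. For $p_n$ the degree is multiplied by $2^{p(n)v(n)}$, again exponential, and the coefficient bitsize grows by a similar exponential factor. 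So both families have exponential format.

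\emph{Sums.} For fixed $k$, the coefficient of $x^k$ in $g_n$ is
\[
 a_n(k)=\sum_{y\in[0,2^{p(n)})^{v(n)}}\ \sum_{\ell} c_n(k,\ell)\, y_1^{\ell_1}\cdots y_{v(n)}^{\ell_{v(n)}} .
\]
The terms $y_i^{\ell_i}$ satisfy $\tau(y_i^{\ell_i})=\mathrm{poly}(n)$ via repeated squaring, and they are in fact in $\Po$ up to the usual multiplication argument. By \cref{pro:slp-2-CH}, or directly by \cref{th:closure-CH} as an iterated product over $\ell_i$ copies, they are definable in $\CH$. The product of the polynomially many factors $c_n(k,\ell)\prod_i y_i^{\ell_i}$ is definable by \cref{th:closure-CH}(1). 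Summing over the exponentially many index tuples $(y,\ell)$ is again \cref{th:closure-CH}(1) with polynomially many parameters, using \cref{re:clos-SP-CH}. Hence $a_n(k)$ is definable in $\CH/\poly$ (resp.\ $\CH$), so $(g_n)\in\VCH^0$.

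\emph{Products.} This is the main obstacle. A product of polynomials does not reduce to iterated products of integer coefficients, since the coefficients of a product are convolutions of exponentially many terms. I would handle this by interpolation, as set up in the previous subsection. First, $h_n(x,y):=f_n(x,y)$ evaluated at $y\in[0,2^{p(n)})^{v(n)}$ has coefficients definable by the sum argument above. Next, I would reduce to the univariate case via the Kronecker substitution $x_i\mapsto z^{B^{i-1}}$, where $B$ exceeds the degree bound $2^{q(n)}$ of $p_n$. The integer value $P_n(j):=p_n(j^{B^0},\dots,j^{B^{u(n)-1}})$ is an iterated product over $y$ of integers $h_n(j^{\cdot},y)$. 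Each such integer is an exponential sum of products, hence definable in $\CH$, so $P_n(j)$ is definable by \cref{th:closure-CH}(1). Here $j$ ranges over $1,\dots,N$ with $N=2^{\mathrm{poly}(n)}$ larger than the univariate degree. The coefficient vector $\gamma$ of the univariate polynomial then satisfies $M\gamma=(P_n(j))_j$ with $M$ the Vandermonde matrix $V_N(1,\dots,N)$. Therefore
\[
 \gamma_m=\Big(\sum_j \adj(M)_{m,j}P_n(j)\Big)\Big/\det(M),
\]
and by \cref{cor:vandermonde}, \cref{th:closure-CH}(1), and exact division via \cref{th:closure-CH}(2) (with sign handled separately) $\gamma_m$ is definable. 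Undoing the Kronecker substitution is just a polynomial-time recoding of the index $m\leftrightarrow k$, so the coefficients of $p_n$ are definable. The uniform statements follow since all advice used is only the advice for $c_n$, and the auxiliary families (Vandermonde, powers) are definable in uniform $\CH$. The delicate points I expect to check are that $N$ can be taken a power of two matching \cref{cor:vandermonde}, and that bit sizes remain exponential throughout.
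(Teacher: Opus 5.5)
Your proof is correct, and it is more explicit than the paper's, which disposes of the corollary in one line as an ``immediate consequence'' of \cref{th:closure-CH} and \cref{re:clos-SP-CH}. For $g_n$ that citation is literally true at the level of coefficients, and your sum argument is exactly that computation.

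For $p_n$, as you rightly observe, it is not. The coefficients of a product of exponentially many polynomials are convolutions over exponentially long index tuples, which \cref{th:closure-CH} cannot handle directly. The one-liner only becomes immediate after passing to values at integer points, where $p_n(x)=\prod_y f_n(x,y)$ is an honest exponential product of integers. So it tacitly uses the evaluation characterization \cref{th:equ-char-VCH}. That result is stated after the corollary but proved without it, so there is no circularity.

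You effectively inline that characterization, with one change. The proof of \cref{th:equ-char-VCH} interpolates using the tensor power $M_{p(n)}^{\otimes u(n)}$. You instead use a Kronecker substitution, so only the univariate Vandermonde family of \cref{cor:vandermonde} is needed. Your bookkeeping is sound: $B$ a power of two exceeding $\deg p_n$, $N=B^{u(n)}$, index recoding by concatenating binary strings, and exact division by $\det M>0$.

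The paper's route is shorter once \cref{th:equ-char-VCH} is available: sums, products and compositions (\cref{cor:VCH-composition}) all become one-line consequences of evaluation. Yours is self-contained, makes explicit why the product case is not immediate, and avoids the adjoint of a tensor power.

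One small correction. Your remark that the bits of $y_i^{\ell_i}$ are ``in fact in $\Po$'' is not justified for general $y_i$ with exponentially large $\ell_i$. What you need is the iterated-product argument via \cref{th:closure-CH}(1), which you also give. Unlike \cref{pro:slp-2-CH}, which yields only $\CH/\poly$, it also covers the uniform statement.
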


\begin{proof}
This is an immediate consequence of \cref{th:closure-CH} and \cref{re:clos-SP-CH}.
\end{proof}

It will be convenient to have an equivalent characterization of $\VCH^0$
in terms of evaluation. 
The following definition is from~\cite{koir-per:11}.

\begin{defn}\label{def:eval-inVCH}
Let $(f_n)$ be a family of exponential format. 
We say that $(f_n)$ {\em can be evaluated at integers points in $\CH$} iff
the language 
$\{ (1^n, x_1,\ldots,x_{u(n)}, a) \mid\mbox{ the $a$-th bit of $f_n(x)$ equals $1$} \}$
is in $\CH$; here the $x_i$ denote integers encoded in binary. 
We make the analogous definition for $\CH/\poly$. 
\end{defn}

If we restrict to integers $0\le x_i < 2^{p(n)}$, where $p(n)$ is a polynomial, then 
the condition in \cref{def:eval-inVCH} states that the 
family $(f_n(x))$ of integers is definable in $\CH$. 

We next derive an equivalent characterization of $\VCH^0$. 
Its  proof essentially appears in~\cite[Thm.~3]{koir-per:11}, even though the characterization is 
not explicitly stated there. 

\begin{prop}\label{th:equ-char-VCH}
Let $(f_n)$ be a family of exponential format.
Then $(f_n)$ is in $\VCH^0$ iff $(f_n)$ can be evaluated at integers points in $\CH/\poly$.
Moreover, 
$(f_n)$ is in $\VCH^0(\U)$ iff $(f_n)$ can be evaluated at integers points in $\CH$.
\end{prop}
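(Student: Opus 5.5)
The plan is to prove both directions directly. The forward direction uses iterated summation; the reverse direction uses Vandermonde interpolation with the adjoint from \cref{cor:vandermonde}. The uniform and nonuniform cases are handled in parallel: the only difference is whether advice is needed.

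\emph{($\Rightarrow$)} Let $(f_n)\in\VCH^0$ with coefficient family $c_n(k)$ as in~\eqref{eq:f_n_c}, which is definable in $\CH/\poly$. Fix integer inputs $x_1,\ldots,x_{u(n)}$ of bitsize at most $2^{q(n)}$, say; their size is polynomial since they are given in binary. Then
$$f_n(x)=\sum_{k}c_n(k)\,x_1^{k_1}\cdots x_{u(n)}^{k_{u(n)}},$$
where the sum runs over at most $2^{u(n)p(n)}$ exponent vectors. Each power $x_i^{k_i}$ is an exponential product $\prod_{j<2^{p(n)}}(x_i \text{ if } j<k_i \text{ else } 1)$. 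The bits of the factor are trivially computable in $\Po$ from $(x_i,k_i,j)$, so \cref{th:closure-CH} together with \cref{re:clos-SP-CH} shows that the powers are definable in $\CH$. The product of the $u(n)+1$ factors and the exponential sum over $k$ are handled by the same closure theorem, treating $x$ as extra parameters.

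Two technical points need care here. First, the inputs $x$ are parameters of polynomial length, so the families must be indexed by $(n,x)$; the closure theorems hold for polynomially many parameters. Second, $\CH/\poly$ advice for $c$ composes with the $\CH$ oracles in $\CH/\poly$. The uniform case is identical without advice.

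\emph{($\Leftarrow$)} This is the main step. Suppose the evaluation language is in $\CH$ (respectively $\CH/\poly$). Set $N=2^{p(n)}+1$, so that $N$ exceeds the degree in each variable; we may enlarge the bound to a power of $2$ so that $M_n$ from \cref{cor:vandermonde} applies, with nodes $1,\ldots,N$. Multivariate interpolation on the grid $[N]^{u(n)}$ gives
$$\det(M)^{u(n)}\,c_n(k)=\sum_{j\in[N]^{u(n)}}\ \prod_{i=1}^{u(n)}\adj(M)_{k_i,j_i}\ f_n(j_1,\ldots,j_{u(n)}),$$
since the inverse of a tensor product of Vandermonde matrices is the tensor product of the inverses (with the appropriate index convention). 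By \cref{cor:vandermonde} the entries $\adj(M)_{k,\ell}$ and $\det M$ are definable in $\CH$. By hypothesis, the values $f_n(j)$ at these grid points, which have polynomial bitsize, are definable in $\CH$ (respectively $\CH/\poly$). \cref{th:closure-CH}(1) then makes the numerator definable, and part~(2), integer division by $\det(M)^{u(n)}$, which is exact, yields $c_n(k)$. Hence the coefficient function lies in $\CH$ (respectively $\CH/\poly$).

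The main obstacle I expect is bookkeeping rather than new ideas. One must check that all intermediate integers still have exponential bitsize, so that the closure theorems apply. One must also check that signs are handled, since $\Bit$ uses the $0$-th bit for the sign and the adjoint has signed entries. Finally, the adjoint indexing must be made consistent with coefficient extraction. I would handle signs by splitting sums into positive and negative parts, or by noting that \cref{th:closure-CH} already covers signed integers.
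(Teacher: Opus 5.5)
Your proof is correct and follows the same plan as the paper. For ($\Rightarrow$) you expand $f_n(x)=\sum_k c_n(k)x^k$ and apply \cref{th:closure-CH} with \cref{re:clos-SP-CH} to the products and the exponential sum. For ($\Leftarrow$) you interpolate on a Vandermonde grid, take $\adj(M)$ and $\det(M)$ from \cref{cor:vandermonde}, and finish with exact division by the positive integer $\det(M)^{u(n)}$.

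The one step you do differently is evaluating the monomials. The paper obtains $x^k$ from the repeated-squaring bound on $\tau(x^k)$ via \cref{pro:slp-2-CH}. That proposition only gives definability in $\CH/\poly$, and with $x,k$ as parameters it also needs the straight-line programs to be uniform in $x,k$. So the paper's uniform claim tacitly relies on a uniform BitSLP-type argument. You instead write $x_i^{k_i}$ as an exponential product of factors equal to $x_i$ or $1$, whose bits are computable in $\Po$. This gives membership in $\CH$ directly from \cref{th:closure-CH}, so your route handles the uniform case with no extra input.

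Your normalization is also the consistent one. The product $\prod_i\adj(M)_{k_i,j_i}$ is an entry of $(\adj M)^{\otimes u(n)}=\det(M)^{u(n)}(M^{-1})^{\otimes u(n)}$. The paper's $\adj(M^{\otimes u(n)})$ instead carries the factor $\det(M)^{u(n)N^{u(n)-1}}$, where $N$ is the size of $M$; this is harmless, since that power is still definable in $\CH$.

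One small slip: in ($\Rightarrow$) you mean inputs of absolute value at most $2^{q(n)}$, not bitsize at most $2^{q(n)}$.
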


\begin{proof}
We first note that by repeated squaring, we have 
$\tau(x_1^{k_1}\cdots x_m^{k_m}) = O(mp(n) + \sum_i \log k_i)$.
Suppose now $(f_n) \in \VCH^0$ and let $f_n$ be as in~\eqref{eq:f_n_c}.
If we put 
$$
 a_n(x,k) := x_1^{k_1}\cdots x_{u(n)}^{k_{u(n)}} ,
$$
then we can write 
$$
 f_n(x) =\sum_k c_n(k) a_n(x,k) ,
$$
where the sum is over all $(k_1,\ldots,k_{u(n)})$ such that $0\le k_i < 2^{p(n)}$. 
The family $(a_n(x,k))$ is definable in $\CH$ by \cref{pro:slp-2-CH}.
Recall that we think of the integers $x_i$ and $k_i$ as being encoded in binary 
with a bitsize polynomially bounded in~$n$. 
Since  $(f_n) \in \VCH^0$, the family $(c_n(k))$ is definable in $\CH/\poly$. 
Here we used~\cref{th:closure-CH} and \cref{re:clos-SP-CH} twice: 
first to infer that the family $(c_n(k) a_n(x,k))$ of products of two integers is definable in $\CH/\poly$. 
Then to infer via the exponential sum
that the family $(f_n(x))$ is definable in $\CH/\poly$. 
This proves one direction of the proposition in the 
nonuniform setting. 
The argument also applies to the uniform situation.

For the the other direction, we use interpolation at integer points.
Recall the Vandermonde matrix 
$M_{p(n)}=V_{2^n}(1,2,3,\ldots,2^{p(n)})$.  
We use that $f_n(x) = \sum_k c_n(k) x^k$ for all 
$x\in\{1,2,3\ldots,2^{p(n)}\}^{u(n)}$ and express this 
by the matrix-vector multiplication
$$
 [f_n(x)] _x = \big(M_{p(n)}\big)^{\otimes u(n)} \cdot [c_n(k)]_k ,
$$
where $x$ and $k $ run over the $u(n)2^{p(n)}$ many elements of $\{1,2,3\ldots,2^{p(n)}\}^{u(n)}$.
This implies
$$
 \big(\det(M_{p(n)})\big)^{u(n)}\, [c_n(k)]_k 
 = \adj\Big(\big(M_{p(n)}\big)^{\otimes u(n)}\Big) \ [f_n(x)]_x .
$$
\cref{cor:vandermonde} tells us that the coefficients of the adjoint of $M_{p(n)}$ 
are definable in $\CH$. \cref{th:closure-CH} implies that this also applies 
to the adjoint of the $u(n)$-fold tensor power of $M_{p(n)}$.
By assumption, the $f_n(x)$ can be evaluated at integer points in $\CH/\poly$. 
With~\cref{th:closure-CH}, we conclude that 
$\big(\det(M_{p(n)})\big)^{u(n)}\, [c_n(k)]_k$ provides a family of integer points 
definable in $\CH/\poly$. 

From~\cref{cor:vandermonde},  
we also know that $\big(\det(M_{p(n)})\big)^{u(n)}$ is definable in $\CH$. 
Finally, we apply the second part of~\cref{th:closure-CH} on integer division 
to infer that $[c_n(k)]_k$ is definable in $\CH/\poly$.
This shows the reverse direction in the nonuniform situation.
The argument also applies to the uniform situation.
\end{proof}

\begin{cor}\label{cor:VCH-composition}
The class $\VCH^0$ is closed under composition in the following sense. 
Suppose we have a family $(f_n(x_1,\ldots,x_{u(n)})) \in \VCH^0$, in which we substitute $x_i$ 
by a polynomial $g_{n,i}$ with the property that for each $i=1,\ldots,u(n)$, 
the families $(g_{n,i})_n$ are in $\VP^0$. Then the resulting family given by 
$h_n:=(f_n(g_{n,1}(x),\ldots,g_{n,u(n)}(x))$ is in $\VCH^0$.
Moreover, if $f_n(x_1,\ldots,x_{u(n)}) \in \VCH^0(\U)$ and $(g_{n,i})_n \in \VP^0(\U)$ for all~$i$, 
then $(h_n)\in \VCH^0(\U)$.
\end{cor}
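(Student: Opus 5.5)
The plan is to use the evaluation characterization of \cref{th:equ-char-VCH}: it suffices to show that $(h_n)$ has exponential format and can be evaluated at integer points in $\CH/\poly$, respectively in $\CH$ in the uniform case.

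Exponential format is easy. Each $g_{n,i}$ comes from a constant-free circuit of polynomial size and polynomial formal degree, so its degree is polynomially bounded. On a box $\{0,\ldots,2^{q(n)}\}^m$ its values have polynomial bitsize, and by interpolation its coefficients do too. Substituting into $f_n$, whose degree and coefficient bitsize are at most $2^{p(n)}$, gives degree at most $2^{p(n)}\cdot\poly(n)$. The coefficient bitsize stays exponential, which one sees by bounding $h_n(x)$ on a box and interpolating again, or directly from a multinomial estimate. The number of variables of $h_n$ is polynomial since the circuits have polynomial size.

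For evaluation, fix an integer input $x$ given in binary with polynomially many bits. Since the circuit for $g_{n,i}$ is constant-free with polynomial size, the integer $y_i := g_{n,i}(x)$ has $\tau(y_i)$ at most the circuit size plus $\sum_j \tau(x_j)$, which is polynomial in the input length. Its bitsize is polynomial because the formal degree is polynomial. So $y_i$ can be computed exactly in polynomial time by running the circuit on big integers; this is where we use that the $g_{n,i}$ lie in $\VP^0$ rather than $\VPnb^0$. In the nonuniform case the circuits serve as polynomial advice, and in the uniform case a Turing machine produces them. Next, by the forward direction of \cref{th:equ-char-VCH}, $(f_n)$ can be evaluated at integer points in $\CH/\poly$ (respectively $\CH$). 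Hence the bit $(1^n,x,a)\mapsto$ ``$a$-th bit of $h_n(x)=f_n(y_1,\ldots,y_{u(n)})$'' is obtained from a polynomial-time (advice) preprocessing followed by one query to a language in $\CH/\poly$ (resp.\ $\CH$). These classes are closed under such polynomial-time many-one reductions, so $(h_n)$ can be evaluated in $\CH/\poly$ (resp.\ $\CH$). The reverse direction of \cref{th:equ-char-VCH} then gives $(h_n)\in\VCH^0$ (resp.\ $\VCH^0(\U)$).

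I expect the main obstacle to be the exponential-format bookkeeping for $h_n$, especially the coefficient bitsize bound, together with checking that the evaluation characterization applies at all integer points and not just small ones. Since the characterization quantifies over all binary integers of polynomial length, and the $y_i$ have polynomial length, this should go through.
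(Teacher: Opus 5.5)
Your proposal is correct and takes the paper's route: the paper's proof is a single line invoking the evaluation characterization of \cref{th:equ-char-VCH}. Your argument supplies the details that line leaves implicit, namely exact polynomial-time evaluation of the $g_{n,i}$ at integer points (using the polynomially bounded formal degree in $\VP^0$), one query to the evaluation language of $(f_n)$, and the exponential-format check for $h_n$. Like the paper, you implicitly assume that the circuits for all of $g_{n,1},\ldots,g_{n,u(n)}$ are bounded in size and formal degree by a single polynomial in $n$, which is the intended reading of the hypothesis.
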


\begin{proof}
This is an immediate consequence of the characerization of $\VCH^0$ 
in terms of evaluation, given by~\cref{th:equ-char-VCH}.
\end{proof}

\begin{cor}
We have the chain of inclusions
$$
 \VNPnb^0 \subseteq \VCH^0 \subseteq \VPSPACE^0. 
$$
\end{cor}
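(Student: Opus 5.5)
For the inclusion $\VCH^0 \subseteq \VPSPACE^0$ I will argue directly from the definitions. Both classes consist of families of exponential format, and membership is decided by the complexity of the coefficient function: $\CH/\poly$ for $\VCH^0$ and $\PSPACE/\poly$ for $\VPSPACE^0$. Since $\CH\subseteq\PSPACE$, we get $\CH/\poly\subseteq\PSPACE/\poly$, which gives the inclusion. This was already noted after \cref{def:VCH}.

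The substantive part is $\VNPnb^0 \subseteq \VCH^0$, and I will reduce it to the evaluation characterization in \cref{th:equ-char-VCH}. First I show $\VPnb^0\subseteq\VCH^0$. Let $(f_n)\in\VPnb^0$ be computed by constant-free circuits $\cC_n$ of size at most $p(n)$. Then $(f_n)$ has exponential format, and it suffices to show that $(f_n)$ can be evaluated at integer points in $\CH/\poly$. Fix integer inputs $x_1,\ldots,x_{u(n)}$ of bitsize at most $q(n)$. The value $f_n(x)$ is computed by a constant-free division-free circuit of size $p(n)$ whose input gates hold the $x_i$. Replacing each input gate by an addition chain for $x_i$, which has size $O(q(n))$, and handling signs by a subtraction, gives $\tau(f_n(x))\le p(n)+O(u(n)q(n))$. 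This is polynomial, and the bitsize of $f_n(x)$ is at most exponential. By \cref{pro:slp-2-CH}, with the family parametrized by the binary encoding of $x$ and the circuit $\cC_n$ supplied as advice, the family $(f_n(x))$ is definable in $\CH/\poly$. There is one point to check here: \cref{pro:slp-2-CH} is stated for families with a single parameter $k$ and a $\tau$-bound. I plan to encode the tuple $x$ as one parameter, as in \cref{re:clos-SP-CH}, and I expect the proof of \cite[Cor.~3.9]{buerg:09} to give the bound uniformly in the parameter.

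Next, let $(f_n)\in\VNPnb^0$, so that $f_n(x)=\sum_{e\in\{0,1\}^{m(n)}} g_n(x,e)$ with $(g_n)\in\VPnb^0$. By the previous step $(g_n)\in\VCH^0$. I then apply \cref{th:VCH-closed-IS-IP} with the summation range restricted to $\{0,1\}$ instead of $\{0,\ldots,2^{p(n)}-1\}$. This restriction is handled either by substituting $y\mapsto$ the lowest bit, or more simply by applying \cref{th:closure-CH} to the evaluations $g_n(x,e)$ summed over $e\in\{0,1\}^{m(n)}$, viewed as an integer parameter in $[0,2^{m(n)})$. This shows that $(f_n)$ can be evaluated at integer points in $\CH/\poly$. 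By \cref{th:equ-char-VCH} it follows that $(f_n)\in\VCH^0$.

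I expect the main obstacle to be the first step, that is, confirming that \cref{pro:slp-2-CH} really applies when the straight-line program depends on the binary input $x$ and is only available as nonuniform advice. The fallback is to cite the stronger statement from \cite{buerg:09} or \cite{koir-per:11} that evaluating a polynomial-size constant-free circuit at polynomial-bitsize integers, bitwise, lies in $\CH$ given the circuit. The proof would go through the Chinese remainder representation and the $\CH$-computability of conversion from CRR to binary.
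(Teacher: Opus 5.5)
Your proposal is correct and follows the paper's proof. The inclusion $\VCH^0\subseteq\VPSPACE^0$ comes from $\CH\subseteq\PSPACE$; then $\VPnb^0\subseteq\VCH^0$ follows from evaluation at integer points together with \cref{th:equ-char-VCH}; finally closure under Boolean exponential summation finishes the argument (the paper just takes $p(n)=1$ in \cref{th:VCH-closed-IS-IP}).

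For the evaluation step the paper cites $\mathrm{BitSLP}\in\CH$ \cite[Thm.~4.1]{abpm:08} directly. That is your fallback, and it is the right justification, because the straight-line program for $f_n(x)$ is computed in polynomial time from the advice $\cC_n$ and $x$. The parametrized \cref{pro:slp-2-CH} is only safe in exactly this uniform situation.
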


\begin{proof}
It remains to prove the left inclusion. 
In~\cite[Thm.~4.1]{abpm:08} it was shown that the 
$\mathrm{BitSLP}$ problem lies in $\CH$. This immediately implies that 
every family $(f_n)\in\VPnb^0$ can be evaluated at integers points in $\CH/\poly$.
Using the characterization in~\cref {th:equ-char-VCH}, this shows that 
$\VPnb^0 \subseteq \VCH^0$.
By the closedness of $\VCH^0$ under exponential summation 
(\cref{th:VCH-closed-IS-IP} with $p(n)=1$) we derive that 
$\VNPnb^0\subseteq \VCH^0$. 
\end{proof}

We remark that \cref{th:closure-CH} also implies that the complexity class $\VPi^0$ 
of exponential products, introduced in~\cite{MR2298212}, 
is contained in $\VCH^0$, 

\subsection{Reducing degree and bitsize: repeated squaring}

We first explain how every family $(f_n)$ of polynomials of exponential format arises from 
a family $(F_n)$ of multilinear polynomials with coefficients in $\{-1,0,1\}$
with the property that the degree of $F_n$ is polynomially bounded in~$n$,  
and $f_n$ is obtained from $F_n$ by substitutions of the form 
$z\mapsto z^{2^{j}}$ or  $z\mapsto 2^{2^{j}}$. 
Using this, we will show that $\VP^0=\VNP^0$ implies $\VPnb^0=\VCH^0$; 
see~\cref{cor:collapse}. 

Suppose $(f_n)$ is a family of polynomials over $\Z$ of exponential format, 
say (see \eqref{eq:f_n_c})
\begin{equation*}
 f_n = \sum_k (-1)^{\e_n(k)}\, |c_n(k)| \, x_1^{k_1}\cdots x_{u(n)}^{k_{u(n)}} ,
\end{equation*}
where $\e_n(k)\in\{0,1\}$. 
The degree of $f_n$ and the bitsize of the coefficients are bounded by $2^{p(n)}$, 
where $p(n)$ is a polynomial. 
By introducing new variables (at most polynomially many in $n$), 
we are going to assign to~$f_n$ a multilinear polynomial $F_n$ 
with coefficients in $\{-1,0,1\}$
from which $f_n$ can be obtained back by substitutions as indicated above.

In order to define this procedure, we consider the binary expansion 
of the coefficients of $f_n$: 
$$
 |c_n(k)| = \sum_{a=0}^{2^{p(n)}-1}\, c_n(k,a) 2^a, \quad  c_n(k,a) \in \{0,1\} . 
$$
We also write $c_n(k) = (-1)^{\e_n(k)} |c_n(k)|$. 
We think of the integers
$0\le a, k_1,\ldots,k_{u(n)} < 2^{p(n)}$
as being encoded in binary by 
$(a_1,\ldots,a_{p(n)}) \in \{0,1\}^{p(n)}$ and  $(k_{i,1},\ldots,k_{i_n,p(n)}) \in \{0,1\}^{p(n)}$, 
respectively:
$$
 a= \sum_{\ell=1}^{p(n)} a_\ell \, 2^{\ell-1} ,\quad k_i= \sum_{j=1}^{p(n)} k_{i,j} \, 2^{j-1} .
$$ 
Now we introduce the new variables  $x_{i,j}$ and $y_\ell$ for $1\le i \le u(n), 1 \le j\le p(n)$ and $1\le \ell \le p(n)$.
Using this notation, we define the following \emph{multilinear} polynomial in these variables, 
\begin{equation*}
 F_n = \sum_{k} \sum_{a} (-1)^{\e_n(k)} |c_n(k,a)| \prod_{\ell=1}^{p(n)}  y_{\ell}^{a_\ell} \, \prod_{i=1}^{u(n)} \prod_{j=1}^{p(n)}   x_{i,j}^{k_{i,j}} ,
\end{equation*}
where the sum runs over all binary matrices 
$[k_{i,j}] \in \{0,1\}^{u(n) \times p(n)}$ and 
binary vectors 
$[a_{\ell}] \in \{0,1\}^{p(n)}$.
Note that the coefficients of $F_n$ are in $\{-1,0,1\}$. 
The point of this construction is that the substitutions
$$
 x_{i,j} \mapsto x_i^{2^{j-1}}, \quad y_\ell \mapsto 2^{2^{\ell-1}} 
$$ 
map $F_n$ back to $f_n$. Indeed,
$$
 \prod_{\ell=1}^{p(n)}  \big(2^{2^{\ell-1}}\big)^{a_\ell} = \prod_{\ell=1}^{p(n)}  2^{a_\ell 2^{\ell-1}} = 2^a , \quad 
 \prod_{j=1}^{p(n)}  \big(x_i^{2^{j-1}}\big)^{k_{i,j}}  = \prod_{j=1}^{p(n)}  2^{k_{i,j} 2^{j-1}} = 2^{k_i} .
$$

Let us summarize some of the properties of this construction.

\begin{lem}\label{le:f2F-VCH}
The following holds, with analogous statements in the uniform setting: 
\begin{enumerate}
\item $(F_n) \in \VP^0 \Longrightarrow (f_n) \in \VPnb^0$.
\item $(F_n) \in \VNP^0 \Longrightarrow (f_n) \in \VNPnb^0$.
\item The polynomials $f_n$ and $F_n$ have the same coefficient functions 
(up to a trivial rearrangement caused by our convention on the sign bit). 
\item  $(f_n) \in \VCH^0 \Longleftrightarrow  (F_n) \in \VCH^0$.
\end{enumerate}
\end{lem}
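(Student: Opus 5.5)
The plan is to treat items (1) and (2) as consequences of a single substitution argument, and items (3) and (4) as consequences of comparing the coefficient functions directly.

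For (1), take a constant-free circuit $\cC_n$ of polynomial size and polynomial formal degree computing $F_n$. Prepend a small constant-free gadget: by repeated squaring, compute $x_i^{2^{j-1}}$ for $1\le j\le p(n)$ from the input $x_i$, using $p(n)-1$ multiplications per $i$. Likewise compute $2 = 1+1$ and then $2^{2^{\ell-1}}$ by repeated squaring, using $O(p(n))$ gates. Feed these into $\cC_n$ in place of the inputs $x_{i,j}$ and $y_\ell$. The resulting circuit has size $\poly(n)$, uses only $0,1$ as constants, and computes $f_n$, as verified by the identity displayed just before the lemma. Its degree is no longer polynomially bounded, but $\VPnb^0$ imposes no degree constraint. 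In the uniform setting, the gadget is clearly produced by a polynomial-time Turing machine, so uniformity is preserved.

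For (2), write $F_n(z)=\sum_{e\in\{0,1\}^{m}} G_n(z,e)$ with $(G_n)\in\VP^0$. Apply the same substitution to the $z$-variables of $G_n$ only, leaving the summation variables $e$ untouched. By (1), the substituted family lies in $\VPnb^0$. Since substitution commutes with the Boolean sum, this exhibits $f_n$ as an exponential sum over a $\VPnb^0$ family, so $(f_n)\in\VNPnb^0$. Uniformity again carries over.

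For (3), observe the explicit correspondence between coefficients. The coefficient of $F_n$ at the monomial indexed by the bit-matrix $[k_{i,j}]$ and the bit-vector $[a_\ell]$ is $(-1)^{\e_n(k)}c_n(k,a)\in\{-1,0,1\}$. Its value and sign are thus, respectively, the $a$-th bit and the sign bit of $c_n(k)$. Conversely, the $a$-th bit of $c_n(k)$ is the absolute value of the coefficient of $F_n$ at $(k,a)$, and the sign of $c_n(k)$ is the sign of any nonzero coefficient of $F_n$ at $(k,\cdot)$. There is one small subtlety, the sign: if $c_n(k)\neq 0$, some coefficient at $(k,\cdot)$ is nonzero and carries the sign. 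This translation between exponent vectors and binary encodings is a trivial rearrangement of the input bits.

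For (4), I would derive the equivalence from (3). The coefficient function of $F_n$ is computable in polynomial time with oracle access to the coefficient function of $f_n$, via two oracle queries (the bit and the sign bit). In the reverse direction, each bit of $c_n(k)$ is a single query to the coefficient function of $F_n$. The sign is the one step needing care: it is the sign of an existentially chosen nonzero coefficient, so it can be computed as $\exists a$ (coefficient at $(k,a)$ nonzero and negative). Since $\CH$ and $\CH/\poly$ are closed under polynomial-time reductions and existential quantification, this suffices. Alternatively, since all nonzero coefficients at $(k,\cdot)$ share the same sign, one can sum them using \cref{th:closure-CH}. The polynomial bounds on the numbers of variables and on the degree and bitsize (exponential format) transfer in both directions. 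I expect only this sign bookkeeping to need any real attention.
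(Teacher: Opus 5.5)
Your proposal is correct and follows essentially the same route as the paper. Items (1) and (2) use the substitution $x_{i,j}\mapsto x_i^{2^{j-1}}$, $y_\ell\mapsto 2^{2^{\ell-1}}$, implemented by repeated squaring and then pushed inside the exponential sum; items (3) and (4) come from a direct comparison of the coefficient functions. Your extra care with the sign bit is a welcome clarification that the paper glosses over as a ``trivial rearrangement'': the sign of $c_n(k)$ is carried only by coefficients of $F_n$ with $c_n(k,a)=1$, so recovering it needs an existential quantifier over $a$ (or an exponential sum via \cref{th:closure-CH}), which $\CH$ and $\CH/\poly$ absorb.
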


\begin{proof}
The arguments given below immediately extend to the uniform setting. 

1. If $(F_n) \in \VP^0$, then $\tau(F_n)$ is polynomially bounded in~$n$. 
Then $\tau(f_n)$ is as well polynomially bounded in~$n$, since 
$z^{2^{j}}$ can be computed from $z$ with $j$ multiplications by repeated squaring, 
Therefore, $(f_n) \in \VPnb^0$. 

2. This follows from part one and the definition of $\VNPnb^0$; 
see~\cite[Def.~4.4]{buerg-survey:24}. 

3. The coefficient function of $f_n$ 
takes as input $k,a$ in binary and outputs the $a$-th bit of $c_n(k)$, 
which is $c_n(k,a)$ (or $\e_n(k,a)$ if $a$ encodes the $0$-th bit).
The same applies to the the coefficient function of $F_n$. 

4. This follows from part three.
\end{proof}

The construction made becomes especially helpful in the case of 
a collapse of complexity classes. 

\begin{cor}\label{cor:collapse}\phantom{x}
\begin{enumerate}
\item If $\VP^0=\VNP^0$, 
 then $\CH\subseteq \Po/\poly$ and $\VPnb^0=\VCH^0$. 
\item If $\VP^0(\U)=\VNP^0(\U)$, 
  then $\CH = \Po$ and $\VPnb^0(\U)=\VCH^0(\U)$.
\end{enumerate}
\end{cor}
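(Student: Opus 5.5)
The plan has two parts: first collapse the Boolean counting hierarchy, then transfer this collapse back to the algebraic side using \cref{le:f2F-VCH}. I treat the nonuniform statement (1) first and then indicate the changes for the uniform statement (2).

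\textbf{Collapse of the counting hierarchy.} Assume $\VP^0=\VNP^0$. This gives the standard transfer to Boolean complexity: $\SP\subseteq\FPo/\poly$. Concretely, counting the accepting computations of a polynomial-time verifier is the value of an exponential sum of a $\VP^0$ family at the inputs $0,1$; an example is the permanent of a $0$-$1$ matrix, which is $\VNP^0$-complete. By hypothesis this value has a polynomial size constant-free circuit of polynomial formal degree. Such a circuit can be evaluated modulo sufficiently many small primes, or directly, since the values have only polynomial bitsize. This yields a $\Po/\poly$ algorithm, so $\PP\subseteq\Po/\poly$. Then I iterate up the counting hierarchy: a $\PP$ oracle replaced by polynomial advice circuits gives $\PP^{\PP}\subseteq\PP/\poly\subseteq\Po/\poly$, and by induction $\CH\subseteq\Po/\poly$. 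The delicate point is that the oracle machine at the next level may query strings of polynomially larger length. The advice for those lengths is still polynomial, so the induction goes through.

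\textbf{Collapse of $\VCH^0$ to $\VPnb^0$.} Let $(f_n)\in\VCH^0$ and form the multilinear family $(F_n)$ with coefficients in $\{-1,0,1\}$ described above. By \cref{le:f2F-VCH}(3,4), the coefficient function of $F_n$ is in $\CH/\poly\subseteq\Po/\poly$. So there is a polynomial size Boolean circuit $B_n$ that takes $(k,a)$ and the sign position and outputs the corresponding bit. Arithmetize $B_n$ (with $\lnot z\mapsto 1-z$ and $\wedge\mapsto$ product) into a constant-free $\VP^0$ family computing $b_n(k,a)$ on Boolean points. Then write
\[
F_n=\sum_{k,a}(1-2\e_n(k))\,c_n(k,a)\prod_\ell y_\ell^{a_\ell}\prod_{i,j}x_{i,j}^{k_{i,j}}.
\]
Here each monomial term $\prod (1-z+zw)$ is a polynomial-degree expression in the summation variables, so this exhibits $(F_n)\in\VNP^0$. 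Minor care is needed because $\e_n(k)$ and $c_n(k,a)$ are both bits supplied by the advice circuit. By hypothesis $(F_n)\in\VP^0$, and \cref{le:f2F-VCH}(1) gives $(f_n)\in\VPnb^0$. The reverse inclusion $\VPnb^0\subseteq\VCH^0$ was already shown.

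\textbf{Uniform statement (2).} Here every step must be made uniform. Uniform $\VP^0(\U)=\VNP^0(\U)$ applied to the permanent gives a Turing-computable circuit family, hence $\PP=\Po$, and therefore $\CH=\Po$ by the standard fact recalled in \cref{se:int-def-CH}. The coefficient function of $F_n$ is then in $\Po$, so $B_n$ is produced in polynomial time, and the resulting $\VNP^0(\U)$ expression collapses to $\VP^0(\U)$.

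The main obstacle is the first step: carefully justifying $\SP\subseteq\FPo/\poly$ (respectively $\FPo$) from the algebraic collapse. This requires controlling the bitsize of intermediate values of constant-free circuits of polynomial degree, which I would handle by evaluating modulo primes and then applying the Chinese remainder theorem.
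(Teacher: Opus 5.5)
Your outline matches the paper's proof. You get $\PP\subseteq\Po/\poly$ from a small constant-free circuit for the permanent and lift it to $\CH\subseteq\Po/\poly$. Then you pass from $(f_n)$ to the multilinear $(F_n)$, show $(F_n)\in\VNP^0$ from its $\Po/\poly$ coefficient function, collapse to $\VP^0$, and return via \cref{le:f2F-VCH}(1). The paper handles the middle step by citing Valiant's criterion. You reprove that step inline, and as written the reproof has a gap.

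Arithmetizing the Boolean circuit $B_n$ gate by gate ($\lnot z\mapsto 1-z$, $\wedge\mapsto$ product) gives a constant-free circuit of polynomial size, but its formal degree can be exponential (iterating $z\wedge z$ already yields $z^{2^s}$). So your summand for $F_n$ is only known to be in $\VPnb^0$, the sum only in $\VNPnb^0$, and the hypothesis $\VP^0=\VNP^0$ does not apply. You cannot expect a general way to lower the degree while keeping the Boolean values: by depth reduction this would put $\Po/\poly$ into $\mathsf{NC}^2/\poly$.

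The missing idea is the standard device behind Valiant's criterion. Introduce a summation bit $w_g$ for every gate $g$ of $B_n$ and write
\[
 b_n(k,a)=\sum_{w}\, w_{\mathrm{out}}\prod_g \big(w_g\,\varphi_g(w)+(1-w_g)(1-\varphi_g(w))\big),
\]
where $\varphi_g$ is the arithmetized operation of $g$ on its predecessors' bits (e.g.\ $w_{g_1}w_{g_2}$ or $1-w_{g_1}$), and input gates carry the bits of $k,a$. Each factor has degree at most $3$ and is the indicator of local consistency on Boolean points. Since gate values are uniquely determined, exactly one $w$ contributes. The product therefore has polynomial degree, and the $w$'s are absorbed into the exponential summation. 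This gives $(F_n)\in\VNP^0$, and $(F_n)\in\VNP^0(\U)$ in the uniform case.

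With this repair, or simply by citing Valiant's criterion as the paper does, your proof goes through. The same caveat applies to your aside that a $\SP$ count is an exponential sum of a $\VP^0$ family. Your permanent route to $\PP\subseteq\Po/\poly$ is fine as stated.
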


\begin{proof}
Suppose $\VP^0 = \VNP^0$. 
Then the family of permanents is in $\VP^0$, which implies  
$\PP\subseteq \Po/\poly$; see \cite[Lemma~2.13]{buerg:09}.
In turn, this implies the collapse of the counting hierarchy: 
$\CH\subseteq \FPo/\poly$; see \cite[Lemma~2.6]{buerg:09}.
Similarly, $\VP^0(\U)=\VNP^0(\U)$ 
implies that $\PP =\Po$ and hence $\CH = \Po$. 

Let now $(f_n) \in \VCH^0$. Then $(F_n) \in \VCH^0$ by \cref{le:f2F-VCH}(4).
Since we assume $\CH\subseteq\Po/\poly$, the coefficient function of~$F_n$
is in $\Po/\poly$. Valiant's criterion~\cite[Prop.~2.20]{buer:00-3} implies 
that $(F_n) \in \VNP^0$. (The proof there shows that the implication also 
holds for the constant-free class.) 
Since we assume $\VP^0 = \VNP^0$, we even get $(F_n) \in \VP^0$. 
Finally, \cref{le:f2F-VCH}(1) implies that $(f_n) \in \VPnb^0$. 
These arguments extend to the uniform setting.
\end{proof}

\subsection{Families of resultants in $\VCH^0$}

Recently, Andrews, Garg, and Schost~\cite{andrews-et-al-HN:26} proved 
that the multivariate resultant~$\Res$ 
can be evaluated at integer points in the counting hierarchy. 
We will explain the precise meaning of this statement in~\cref{re:expl-A} below. 
This will imply that certain families of resultants are in the class $\VCH^0$. 

The resultant $\Res_{d_0,\ldots,d_n}$ is a polynomial function $\C^p \to \C$ 
mapping the joint coefficient vector $u\in\C^p$ 
of the homogeneous polynomials $f_1,\ldots,f_s\in\C[x_1,\ldots,x_n]$ 
to $\Res_{d_0,\ldots,d_n}(f_0,\ldots,f_n)\in\C$, 
where $d_i =\deg d_i $ and $p=\sum_{i=1}^s {n+d_i \choose n}$.
In general, this does not define a family of exponential format since 
the number~$p$ of variables grows exponentially in $d_0,\ldots,d_n$. 
In order to fix this, we  bound the degrees~$d_i$. 
For our purposes it will be sufficient to assume $d_0=\ldots=d_n=2$, which implies $p=O(n^3)$. 
We can do so without loss of generality since
it is well-known that the restriction of $\HN_\C$ 
to systems of polynomials of degree at most two 
is still $\NP^0_\C$-complete; see~\cite{BCSS:98}.

We denote by $\cR_n :=\Res_{2,\ldots,2}$ the resultant of $n+1$ homogenous quadratic forms.  
The family $(\cR_n)$ is of exponential format, since the number of input variables 
of $\cR_n$ is 
$v(n) := n {n+2 \choose 2} =O(n^3)$,  
$\deg\cR_n = n2^n$ and the bitsize of $\cR_n$ is bounded by $O(n2^n)$; 
see \cref{se:resultant}.

For systems of quadratic forms, the recent result by Andrews, Garg, and Schost~\cite{andrews-et-al-HN:26} 
exactly expresses the following.

\begin{thm}[Andrews, Garg, Schost]\label{th:res-eval-in-CH}
The family $(\cR_n)$ of resultants of quadratic systems can be evaluated at integer points in $\CH$. 
\end{thm}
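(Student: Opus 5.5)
This theorem is attributed to Andrews, Garg and Schost, so the task is to reconstruct the structure of their argument in the language of \cref{se:int-def-CH}. The goal is to show that the language
\[
\{(1^n,u,a)\mid \text{the $a$-th bit of } \cR_n(u) \text{ is } 1\}
\]
lies in $\CH$. Here $u$ is an integer coefficient vector of $n+1$ quadratic forms, encoded in binary.

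The natural route is induction on $n$ via the Poisson formula~\eqref{eq:poisson}. This expresses $\cR_n(F_0,\ldots,F_n)$ as $\pm\cR_{n-1}(\overline F_1,\ldots,\overline F_n)^{2}$ times a product of $F_0(\xi)/\xi_0^2$ over the zeros $\xi$ of $F_1,\ldots,F_n$. Recursing naively is hopeless, since the depth is $n$ and the zeros are algebraic numbers. I would therefore avoid the recursion altogether and write the resultant as a power-sum-free expression: $\log$ of the product equals a sum over zeros, which can be accessed through Newton/power sums. The key identity is Ju\v{z}akov's explicit formula for the power series expansion of implicit functions, or equivalently a residue formula for $\sum_\xi \xi^k$. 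This gives the power sums
\[
\sum_{\xi} \xi^{k} \quad (|k|\le 2^{\poly(n)})
\]
of the zeros of a system in generic position as explicit finite sums of products of multinomial coefficients and monomials in the coefficients. Such sums are integers of exponential bitsize, or rationals with controlled denominators.

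The steps would be as follows. (i) Reduce, by a linear change of coordinates with small integer entries and the semi-invariance of $\Res$ under $\GL(n+1)$, to systems in which $x_0$ is nondegenerate. Generic genericity here must be achieved deterministically, for instance by using a deformation as in \cref{se:deform} and taking a trailing coefficient, which is harmless by \cref{cor:VCH-border}. (ii) Write the system in the form $x_i^2 = \text{(perturbation)}$ after scaling, so that Ju\v{z}akov's formula applies with a combinatorially explicit sum. Each term is a product of binomial/multinomial coefficients and powers of the input integers, and is hence definable in $\CH$ by \cref{pro:slp-2-CH} and \cref{th:closure-CH}. (iii) Use \cref{th:closure-CH} (exponential sums and products, integer division) to obtain the power sums $p_k=\sum_\xi F_0(\xi)^k$ for $k\le 2^n$ as families definable in $\CH$, with explicit common denominators. (iv) Recover $\prod_\xi F_0(\xi)$, which is the top elementary symmetric function, from $p_1,\ldots,p_{2^n}$ via Newton identities. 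Since Newton's recursion is sequential, I would instead use the closed determinant/exponential formula $e_D=\sum_{\lambda\vdash D}\prod_i (-1)^{\ldots}p_i^{m_i}/(i^{m_i}m_i!)$ restricted appropriately. Alternatively I would take the coefficient of $t^D$ in $\exp(-\sum_k p_k t^k/k)$ truncated, computed as an exponential sum of products, again in $\CH$. (v) Handle the factor $\cR_{n-1}(\overline F)^2$ by the same method, or avoid it by normalizing so that the leading part is $x_i^2$ with resultant $1$.

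The main obstacle is step (iv): extracting the product of exponentially many algebraic values from power sums with only $\CH$-operations. The partition sum has doubly exponentially many terms, which cannot be handled by \cref{th:closure-CH}. I would first try to express $\prod_\xi F_0(\xi)$ as a determinant of the multiplication-by-$F_0$ map on the $2^n$-dimensional quotient algebra. Its matrix entries are given by trace-type (power-sum) quantities from Ju\v{z}akov's formula. The determinant of an exponential-size integer matrix is then computed in $\CH$ via its characteristic polynomial, using Berkowitz/Csanky-style formulas $\det = $ coefficient extraction from $\tr(A^k)$. Here the traces $\tr(A^k)$ are power sums definable in $\CH$ by iterated matrix powering. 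Matrix powering of exponential-size matrices with exponent $2^n$ is in $\CH$ by repeated squaring of depth $n$, each level an exponential sum (\cref{th:closure-CH}), giving constant-many-per-level counting alternations, for a total of polynomial depth. I expect justifying that the counting hierarchy tolerates this polynomial number of nested sums to be the delicate point.
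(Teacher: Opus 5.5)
The paper does not prove this statement itself. It imports it from Andrews, Garg and Schost (see \cref{re:expl-A}), noting only that their argument combines the Poisson formula~\eqref{eq:poisson} with Ju\v{z}akov's formula. Your outline uses the same ingredients, but it has a genuine gap at the step you yourself flag, (iv), and the route you settle on there fails for two reasons. First, it is circular. The traces $\tr(M_{F_0}^k)=\sum_\xi F_0(\xi)^k=p_k$ are just the power sums from (iii), and the trace-based (Csanky) determinant computation you describe recovers $\det M_{F_0}$ from them by solving Newton's identities, which is exactly the problem you wanted to avoid. You would also need $M_{F_0}$ itself, i.e.\ normal forms in a $2^n$-dimensional quotient algebra. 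Second, the claim that repeated squaring of exponential-size matrices stays in $\CH$ is false as stated. $\CH$ is the union of its constant levels, so \cref{th:closure-CH} can be nested only a constant number of times, whereas you need $n$ nested exponential sums. This is not a technicality. For a succinctly described $0/1$ matrix $A$ of size $2^n$, the entries of $(A+I)^{2^n}$ detect reachability in exponentially large succinct graphs, a $\PSPACE$-complete problem, so a generic argument of this kind would give $\CH=\PSPACE$. This is essentially the exponential-size linear algebra behind Canny's $\PSPACE$ bound, which is what has to be bypassed.

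The missing idea is that Newton's identities can be carried out with constant nesting by evaluating at integer points and interpolating, as in the proof of \cref{th:equ-char-VCH}. With $D=2^n$ and $L=D!$,
\[
 L^{D}D!\sum_{k=0}^{D}(-1)^k e_k t^k \equiv \sum_{m=0}^{D} L^{D-m}\frac{D!}{m!}\,h(t)^m \pmod{t^{D+1}},\qquad h(t):=-\sum_{k=1}^{D}\frac{L}{k}\,p_k t^k .
\]
After also clearing the denominators of the $p_k$, the right-hand side is an integer polynomial of degree at most $D^2$. Its value at an integer $x$ is an exponential sum, with $\CH$-definable weights, of powers $h(x)^m$ of the single integer $h(x)$, which is itself an exponential sum. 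Hence its values at $x=1,\ldots,D^2+1$ are definable in $\CH$. The coefficient of $t^D$, and with it $e_D=\prod_\xi F_0(\xi)$, then follows from the Vandermonde adjoint (\cref{cor:vandermonde}) and integer division, and the doubly exponential partition sum never appears.

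There are two smaller points. Step~(i) cannot rest on \cref{cor:VCH-border}. That corollary presupposes that the deformed resultant is already evaluable at all integer points, non-generic ones included, which is the original problem. Moreover, what you need is the value at the undeformed system, not a trailing coefficient, and these differ whenever the resultant vanishes. Genericity has to be handled formally in the deformation parameter, e.g.\ by expanding around a start system with known simple roots, which is the setting of Ju\v{z}akov's formula. Also, unrolling the Poisson recursion gives a product of $n$ independently computable factors, so its depth $n$ is harmless.
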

 
From this we infer the following. 

\begin{cor}\label{cor:Res-in-VCH}
The  family $(\cR_n)$ of resultants of quadratic systems lies in $\VCH^0(\U)$. 
\end{cor}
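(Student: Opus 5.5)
The plan is to combine \cref{th:res-eval-in-CH} with the evaluation characterization of $\VCH^0(\U)$ in \cref{th:equ-char-VCH}. That proposition says that a family of exponential format lies in $\VCH^0(\U)$ if and only if it can be evaluated at integer points in $\CH$. So two things need checking: that $(\cR_n)$ has exponential format, and that the evaluation language in \cref{def:eval-inVCH} coincides with what \cref{th:res-eval-in-CH} provides.

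For the first point, I will use the facts recalled just before \cref{th:res-eval-in-CH}. The number of variables is $v(n)=O(n^3)$. The degree is $\deg\cR_n=n2^n$. The coefficient bitsize is $O(n2^n)$, by the bound $O(nd^n\log d)$ from \cref{se:resultant} with $d=2$. All three are bounded by $2^{p(n)}$ for a suitable polynomial $p$, so $(\cR_n)$ has exponential format in the sense of \cref{def:exp-format}.

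For the second point, I will make precise what \cref{th:res-eval-in-CH} delivers, which is the content of the forthcoming \cref{re:expl-A}. The language
\[
\{(1^n,u,a)\mid \text{the $a$-th bit of } \cR_n(u) \text{ equals } 1\},
\]
with $u\in\Z^{v(n)}$ given in binary, must lie in $\CH$. This means uniform membership, with $n$ in unary, as \cref{def:eval-inVCH} requires, and with the sign convention for the $0$-th bit.

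The one point I expect to need care is the size of the evaluation points. \cref{th:equ-char-VCH} is only invoked at integer points of polynomially bounded bitsize, namely $0\le x_i<2^{p(n)}$ in the interpolation argument. At such points $\cR_n(u)$ has bitsize at most exponential in $n$, so the bit index $a$ has polynomially many bits. I therefore expect the Andrews--Garg--Schost algorithm (a $\CH$ computation running in time polynomial in $n$ and the bitsize of $u$) to give exactly this language.

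With both points in hand, the reverse direction of \cref{th:equ-char-VCH} in the uniform setting gives $(\cR_n)\in\VCH^0(\U)$. Concretely, it recovers the coefficients via the Vandermonde adjoint and integer division, using \cref{cor:vandermonde} and \cref{th:closure-CH}.
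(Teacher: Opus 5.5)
Your proposal is correct and is exactly the paper's argument: the paper proves the corollary by combining \cref{th:res-eval-in-CH} with the uniform direction of \cref{th:equ-char-VCH}. Your explicit checks are details the paper leaves implicit. These are that $(\cR_n)$ has exponential format, and that the Andrews--Garg--Schost result yields precisely the evaluation language of \cref{def:eval-inVCH}, with $n$ in unary and integer inputs of polynomial bitsize.
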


\begin{proof}
Combine \cref{th:res-eval-in-CH} with \cref{th:equ-char-VCH}. 
\end{proof}

\begin{rem}\label{re:expl-A}
The result in~\cite{andrews-et-al-HN:26} is not restricted to bounded degrees~$d_i$. 
Let us consider polynomials~$f_i$ with integer coefficients of bitsize bounded by~$\ell\in\N$. 
The resultant defines a Boolean function $\cF_{\bn}\colon\{0,1\}^{m'_\bn} \to \{0,1\}^{m_\bn}$,  
which sends $(f_0,\ldots,f_n)$, given in binary encoding, 
to $\Res_{d_0,\ldots,d_n}(f_0,\ldots,f_n)$, in binary encoding. 
Here $\bn$ denotes the multi-index $(n,d_0,\ldots,d_n,\ell)$. 
Both the input bitsize $m'_\bn$ and output bitsize $m_\bn$ 
may be exponential in $\bn$. For  this reason, 
as in~\cref{def:eval-inVCH},
we define the following language (cp.~\cite[Def.~6.2]{andrews-et-al-HN:26}) 
$$
 L_{\cF_{\bn}} := \{ (1^{\bn}, \xi, a) \mid\mbox{ the $a$-th bit of $\cF_{\bn}(\xi)$ equals $1$} \} .
$$
\cite[Thm.~6.7]{andrews-et-al-HN:26} states that $L_{\cF_{\bn}} \in\CH$. 
\end{rem}

\begin{rem}
In order to avoid the restriction to bounded degrees~$d_i$, we may  
think of the homogenous polynomial $f_i$ as being given by an arithmetic circuit $\Gamma_i$ 
for its evaluation, which allows for a succinct encoding of $f_i$: 
the input size of $f_i$ will then be the number of complex input parameters of $\Gamma_i$. 
Using~\cref{th:equ-char-VCH} and \cref{cor:VCH-composition}, 
one can show that the resulting families of multivariate resultants are in $\VCH^0$. 
\end{rem}

\section{Reducing Hilbert's Nullstellensatz to multivariate resultants}\label{se:redu}

In the paper~\cite{andrews-et-al-HN:26} an ingenious reduction from Hilbert's Nullstellensatz Problem
to the multivariate resultant, following Ierardi~\cite{ierardi:89}, is used.
This relies on Canny's~\cite{canny-gcp:89} concept of the generalized characteristic polynomial, 
see also \cite{canny-conf:88} and \cite{renegar-I:92}.
The reduction in~\cite{andrews-et-al-HN:26} applies to polynomials over fields, which 
are finite extensions of the prime fields, 
and the reduction is carried out in the model of Turing machines. 

Our goal is to describe a reduction in the BSS-model over $\C$ (or in the model of algebraic circuits), 
which applies to any polynomials with complex coefficients. 
We describe this reduction in terms of families of polynomial size oracle circuits, 
which have the ability to evaluate the resultants $\cR_n$, and which can extract trailing coefficients. 
The power of such reductions is captured by the complexity class~$\VCH^0$.
For this, it is essential that $(\cR_n)\in\VCH^0$ (\cref{cor:Res-in-VCH}).

The point is that under the hypothesis $\VP^0=\VNP^0$, 
the class $\VCH^0$ collapses with $\VPnb^0$ by~\cref{cor:collapse} 
and we are left with a polynomial time reduction.
The notions of generic quantifiers and witness sequences allow 
to avoid the use of randomness; see~\cref{se:witness-s}. 

Let us describe these oracle circuits in more detail. 
There is no need to be completely formal, since we will only 
rely on the very particular procedure in the proof of~\cref{pro:NPC-in-FP-VCH}. 

The input to $\cC_\bn$ is a vector $x\in\C^{u(\bn)}$, where $u(\bn)$ is 
polynomially bounded in the multi-index~$\bn$. 
The circuit~$\cC_\bn$ is a constant-free arithmetic circuit
of size polynomially bounded in~$\bn$, which, in addition to the usual arithmetic gates,  
is endowed with two additional powerful features.

(1) $(\cC_\bn)$ can rely on the family $(\cR_n)$ of multivariate resultants of systems of quadratic forms. 
More specifically, $\cC_\bn$ has \emph{oracle gates}, labelled by $n$ and with fan-in $v(n)$.
This gate outputs the composition $\cR_n(h_1,\ldots,h_{v(n)})$, 
where $h_1,\ldots,h_{v(n)}$ are the polynomials computed at the gates fed into the gate. 
(One has to ensure that the $h_i$ are homogeneous of degree~$2$.)

(2) $\cC_\bn$ has fan-in one \emph{gates for computing trailing coefficients}: 
they extract the trailing coefficient $\tc_t(f)$ of the polynomial $f$ fed into the gate, 
with respect to a distinguished variable~$t$, whose name is a label of the gate; 
see~\eqref{eq:TC}. 

We say that the family $(\cC_\bn)$ is \emph{uniform}, 
if the description of $(\cC_\bn)$ can be computed by a Turing machine in polynomial time in~$\bn$. 

\subsection{Reducing $\HN_\C$ to evaluating trailing coefficients of resultants}\label{se:RED-MAIN}
We denote by $u\in\C^p$ the joint coefficient vector of $f_1,\ldots,f_s\in\C[x_1,\ldots,x_n]$, 
where $d_i =\deg d_i \ge 1$;  
note $p=\sum_{i=1}^s {n+d_i \choose n}$. 
In the proof of~\cref{pro:NPC-in-FP-VCH}, we will have $d_i=2$, but it is clearer to work here in more generality. 
We note that 
\begin{equation}\label{eq:tub}
 s+ n \le s(n+1) \le p, \quad d:= \max_i d_i < p .
\end{equation}

The features of our reformulation of the reduction in~\cite[Prop.4.4]{andrews-et-al-HN:26} 
are summarized in the next result. 
Let us point out that working in the algebraic model makes things more transparent and removes 
a lot of the clutter resulting from dealing with the details of bit computations.

\begin{prop}\label{pro:NPC-in-FP-VCH}
There is a uniform family of arithmetic circuits 
of size polynomially bounded in~$n$, 
endowed with oracle gates for evaluating multivariate resultants 
and gates for computing trailing coefficients,  
which on input the coefficient vector $u\in\C^{p}$ of 
$f_1,\ldots,f_s\in\C[x_0,\ldots,x_n]$ 
of degree $d_i\le 2$, 
computes for all $i\in [2p+1]$ and $j\in [n]$ a polynomial $\rho_{ij}\in\C[v]$ 
with the property that the zero set $V(f_1,\ldots,f_s)$ is nonempty iff
$\exists j\in [n]\ \rho_{ij}(0) = 0$ holds for the majority of the $i\in [2p+1]$, that is,
$$
 \#\{ i \in [2p+1] \mid \exists j \in [n] \rho_{ij}(0) = 0 \} > p .
$$
\end{prop}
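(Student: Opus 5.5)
We follow the Canny--Ierardi reduction as in~\cite[Prop.~4.4]{andrews-et-al-HN:26}, replacing randomness by generic quantifiers and the witness sequences of \cref{th:cons-pws}.

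\emph{Step 1: reduce to $n$ equations.} Introduce parameters $a=(a_{jk})\in\C^{ns}$ and put $g_j:=\sum_{k=1}^s a_{jk} f_k$ for $j\in[n]$. The standard general-position fact is that for Zariski-generic $a$ we have $V(f_1,\ldots,f_s)\subseteq V(g_1,\ldots,g_n)$, and every irreducible component of $V(g_1,\ldots,g_n)$ not contained in $V(f)$ is zero-dimensional. Such isolated points may still lie outside $V(f)$. So for each $j$ we add one more generic combination as a test polynomial,
\[
 h=\sum_k b_k f_k \qquad\text{or}\qquad h=v+\sum_k b_k f_k ,
\]
where $v$ is a new variable. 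The intended statement is: $V(f)\ne\varnothing$ iff some isolated point $\xi$ of $V(g)$ satisfies $h(\xi)=0$ for generic $b$. To make this exact I would use the standard dimension-stratified version. For $j=1,\ldots,n$, take $j$ generic combinations together with $n-j$ generic linear forms $\ell_1,\ldots,\ell_{n-j}$. A nonempty $V(f)$ has a component of some dimension $n-j$, and cutting it by the generic linear forms produces isolated points of the square system. This is why the statement has an index $j\in[n]$ and uses $\exists j$.

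\emph{Step 2: express the test with resultants.} Homogenize the square system to $G_1,\ldots,G_n$ and deform as in~\eqref{eq:def-hG} to $\hG_1,\ldots,\hG_n$. Set
\[
 \rho_{ij}(v):=\tc_t\,\Res(H,\hG_1,\ldots,\hG_n), \qquad H=v\,x_0^{2}+\sum_k b_k F_k ,
\]
with the $F_k$ the homogenized $f_k$. By \cref{pro:TC-RES},
\[
 \tc_t(R)=c'\prod_{l\le r} H(q^{(l)}),
\]
and \cref{pro:curves+limits}(3) guarantees that every isolated affine zero occurs among the $q^{(l)}$. Hence $\rho_{ij}(0)=0$ iff some affine limit point $q$ satisfies $\sum_k b_kf_k(q)=0$. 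For generic $b$ this holds iff $q\in V(f)$.

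Two technical points arise here. First, degrees must be $2$ so that the oracle $\cR_n$ applies. The $g$'s are quadratic already, and the linear forms can be replaced by $x_0\ell$, at the cost of spurious points at infinity, which the affine selection discards. Second, limit points that are non-isolated but affine lie in $V(g)$. I would still need to argue that they lie in $V(f)$ whenever the relevant stratum is empty; alternatively, they only matter when $V(f)\ne\varnothing$ anyway. I expect this to be the main obstacle: proving the precise equivalence
\[
 V(f)\ne\varnothing \iff \forall^\ast(a,b,\ell)\ \exists j\ \rho_j(0)=0 ,
\]
i.e.\ that extraneous limit points cannot fake a zero. I would handle it by following Ierardi's argument via \cref{pro:curves+limits}, using that every limit point lies in $V(g)$ and that, generically, $V(g)\setminus V(f)$ consists of isolated points, which the generic $b$ separates.

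\emph{Step 3: derandomize.} The condition $\exists j\ \rho_j(0)=0$ is a quantifier-free formula $F(u,w)$ of the form~\eqref{eq:wf-formula}, with $w=(a,b,\ell)$ having polynomially many coordinates. Its atomic polynomials have degree and bitsize at most $2^{\poly}$, because the resultant has degree $O(n2^n)$ and the bounds in \cref{se:resultant} control the bitsize. \cref{th:cons-pws} then provides a witness sequence $\alpha_1,\ldots,\alpha_{2p+1}$ computed by a uniform constant-free circuit of polynomial size. Substituting $w=\alpha_i$ gives the circuits for $\rho_{ij}$, and \cref{def:pws} yields exactly the majority criterion in the statement. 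Uniformity follows from part~(2) of \cref{th:cons-pws} together with the explicit construction above.
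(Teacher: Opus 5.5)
\textbf{Verdict: Step 2 has a genuine gap, and it is not the one you flag. Steps 1 and 3 are fine.}

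Your Steps~1 and~3 agree with the paper. That is: stratifying by $j$ with $j$ generic combinations and $n-j$ generic affine hyperplanes, homogenizing linear forms as $x_0\ell$, and derandomizing via \cref{th:cons-pws}.

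\textbf{Where Step 2 fails.} \cref{pro:TC-RES} is stated for $H$ with fixed complex coefficients. Its constant $c'$ is the leading $\tau$-coefficient of $c(\tau)\prod_{l>r}H(p^{(l)}(\tau))/p^{(l)}(\tau)_0^{\,d_0}$. So $c'$ collects the branches escaping to infinity, and it depends on $H$. Once $v$ is an indeterminate, $c'$ is a polynomial in $v$. Your claim that $\rho_{ij}(0)=0$ iff some affine limit point $q$ has $\sum_k b_kf_k(q)=0$ therefore fails in the infeasible direction.

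On an escaping branch the factor is $v+\sum_k b_kf_k(x(\tau))$, where $x(\tau)$ are the affine coordinates, which tend to infinity. If all $f_k$ tend to $0$ along the branch, its leading coefficient is just $v$.

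\textbf{Counterexample.} Take $f_1=x_1x_2-1$, $f_2=x_2$ and $j=n=2$. Then $V(f)=\varnothing$. But for generic $a$ the deformed system has three branches with $x_1\sim\tau^{-1/3}$ and $x_2\sim\tau^{1/3}$, along which $f_1,f_2=O(\tau^{1/3})$. A direct computation gives $\tc_t(R)=\kappa\, v^3$, with $\kappa$ linear in $b$ and nonzero for generic $(a,b)$. So your majority test accepts this infeasible system.

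Substituting $v=0$ before taking $\tc_t$ does not help. The trailing coefficient of a nonzero polynomial is nonzero, so the feasible case would no longer be detected.

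\textbf{The missing idea.} The paper adds the auxiliary term $\sum_{i=1}^n w^ix_i^d$ to $H_j$ and takes a second trailing coefficient, $\rho_j=\tc_w(\tc_t(R_j))$.
\begin{itemize}
\item Along an escaping branch some $x_i\to\infty$, so the $w$-terms outgrow the bounded term $v$. The leading $\tau$-coefficient of that factor is therefore free of $v$, and after $\tc_w$ it is a nonzero constant.
\item An affine limit point $\xi$ contributes $\tc_w(H_j(\xi))=v\xi_0^d+\sum_k c^kF_k(\xi)$. For generic $c$ this vanishes at $v=0$ iff $\xi\in V(f)$.
\end{itemize}

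\textbf{The obstacle you single out is harmless.} The limit points do not depend on $b$, so an affine limit point $q$ with $\sum_k b_kf_k(q)=0$ for generic $b$ lies in $V(f)$, isolated or not. The converse direction, that a point of $V(f)$ appears among the limit points for $j=n-\dim V(f)$, comes from \cref{le:gen-lin-comb}(2) and \cref{pro:curves+limits}(3), as you indicate.
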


From this result, 
combined with $(\cR_n)\in\VCH^0(\U)$ (\cref{cor:Res-in-VCH}) 
and the properties of the complexity class $\VCH^0$ developed in \cref{se:VCH}, 
we will be able to quickly prove~\cref{th:MAIN-u}. 

The reduction consists of several steps, all of which are known techniques.
We will state the mathematical insights behind each step in the form of lemmas. 

The input consists of $f_1,\ldots,f_s\in\C[x_1,\ldots,x_n]$ of degree at most~$d$, 
given by their coefficient vector $u\in\C^p$. 
We denote by $X_u :=V(f_1,\ldots,f_s)\subseteq\C^n$ the zero set of $f_1,\ldots,f_s$. 
Our goal is to detect whether $X_u$ is nonempty. 

Let $a_{ik}$ and $b_{i\ell}$ be new variables, which take the role of ``generic parameters''.
For $1\le j\le n$ we compute  the linear combinations of $f_1,\ldots,f_s$, 
$$
  g_{ij} := \sum_{k=1}^s a_{ik} f_k , \quad\mbox{$i=1,\ldots,j$} , 
$$
and consider the affine linear polynomials 
$$
  g_{ij} :=  b_{i0} + \sum_{\ell=1}^n b_{i\ell} x_\ell , \quad \mbox{$i=j+1,\ldots,n$} .
$$

\begin{lem}\label{le:gen-lin-comb}
Suppose that $X_u$ is nonempty of dimension~$m$. 
For Zariski-almost all specializations of the parameter systems 
$(a,b)$ to values in $ \C^{j\times s} \times \C^{(n-j)\times s}$, 
the following holds:
\begin{enumerate}
\item If $j\le n-m$, then all irreducible components of $V(g_{1j},\ldots,g_{jj})$ have dimension $n-j$.
Moreover, $V(g_{1j},\ldots,g_{jj}, g_{j+1j},\ldots,g_{nj})$ is finite and nonempty. 
\item If $j=n-m$, then every irreducible component of $X_u$ of dimension~$m$ is an irreducible component of $V(g_{1j},\ldots,g_{jj})$.
Moreover, $X_u\cap V(g_{1j},\ldots,g_{nj})$ is nonempty. 
\end{enumerate}
Here, in order to ease notation, we still denote the resulting polynomials by $g_{ij}$ after specializing. 
\end{lem}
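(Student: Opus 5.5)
We argue by a repeated Bertini/Krull-type argument: cut $X_u$ and the ambient space successively by generic hypersurfaces from the linear system spanned by $f_1,\ldots,f_s$, and then by generic affine hyperplanes.

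\textbf{Step 1: generic linear combinations.} Set $\Lambda_i:=\sum_k a_{ik}f_k$. We show by induction on $i\le n-m$ that for generic $a$ the following holds: every irreducible component of $V(\Lambda_1,\ldots,\Lambda_i)$ either has dimension $n-i$ or is contained in $X_u$. Note that components contained in $X_u$ have dimension at most $m\le n-i$.

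Base step: $V(\emptyset)=\C^n$, which has dimension $n$.

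Inductive step: let $C$ be a component of $V(\Lambda_1,\ldots,\Lambda_{i-1})$ with $C\not\subseteq X_u$. Then some $f_k$ does not vanish identically on $C$. The set of $a_i\in\C^s$ for which $\Lambda_i$ vanishes identically on $C$ is therefore a proper linear subspace. There are finitely many such $C$, so for generic $a_i$ the hypersurface $V(\Lambda_i)$ meets every such $C$ properly. Each component of $C\cap V(\Lambda_i)$ then has dimension exactly $\dim C-1=n-i$ by Krull's principal ideal theorem, or the intersection is empty.

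Components $C\subseteq X_u$ stay inside $V(\Lambda_i)$, because all $f_k$ vanish on them. Such a $C$ may or may not remain a component, but any component of $V(\Lambda_1,\ldots,\Lambda_i)$ lying in $X_u$ has dimension at most $m$.

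\textbf{Step 2: the two claims for $j\le n-m$.} For $j\le n-m$, a component lying inside $X_u$ has dimension at most $m\le n-j$. Krull's theorem gives every component of $V(\Lambda_1,\ldots,\Lambda_j)$ dimension at least $n-j$. Hence all components have dimension exactly $n-j$, which proves the first half of (1).

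For (2), take $j=n-m$ and an $m$-dimensional component $Y$ of $X_u$. Then $Y\subseteq V(\Lambda_1,\ldots,\Lambda_j)$, so $Y$ is contained in some component, which has dimension $m=\dim Y$. By irreducibility the two coincide.

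\textbf{Step 3: generic affine hyperplanes.} We use the elementary fact that, for any variety $W$ of pure dimension $r$, $n-j$ generic affine hyperplanes meet $W$ in a set of dimension $r-(n-j)$. This set is empty if $r<n-j$, finite if $r=n-j$, and nonempty when $r\ge n-j$.

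The finiteness claim in (1) follows: $V(g_{1j},\ldots,g_{jj})$ has pure dimension $n-j$, so it meets the $n-j$ generic hyperplanes in finitely many points. Nonemptiness holds because $V(g_{1j},\ldots,g_{jj})$ contains the nonempty set $X_u$, so it is itself nonempty and of dimension $n-j$.

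The second claim of (2) follows in the same way: $n-m$ generic hyperplanes meet the $m$-dimensional set $X_u$ in a nonempty set.

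\textbf{Step 4: genericity.} Each step excludes only a proper Zariski-closed set of parameters, depending on the previously chosen ones. By the iterated generic-quantifier identity $\forall^\ast a\,\forall^\ast b \equiv \forall^\ast(a,b)$ from \cref{se:witness-s}, the combined condition holds for Zariski-almost all $(a,b)$.

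The main obstacle is Step 1: components inside $X_u$ can persist while having dimension smaller than $n-i$. This is why the restriction $j\le n-m$ is essential, and the dimension bookkeeping there must be stated carefully.
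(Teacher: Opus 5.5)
Your proof is correct and follows essentially the route the paper takes: the paper cites the standard induction with generic linear combinations of the $f_k$ for (1), obtains the first half of (2) as an immediate consequence of (1), and gets the remaining nonemptiness from the fact that generic affine subspaces of dimension $j$ meet an $(n-j)$-dimensional component, which is what your Steps~1--3 do. Two small precision points: in Step~3 apply the hyperplane-section fact to an $m$-dimensional irreducible component of $X_u$ (which need not be pure-dimensional), and the collapse $\forall^\ast a\,\forall^\ast b \equiv \forall^\ast (a,b)$ in Step~4 is valid only because your dimension conditions are constructible, i.e.\ expressible by first-order formulas in the parameters, as the paper's identity presupposes.
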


\begin{proof}
This is well known. 
For (1), e.g.; see the proof of~\cite[Lemma~4.14]{buer:00-3}.
The first statement of~(2) is an immediate consequence of (1).
For the second statement of~(2),  we use the following fact: 
If $W$ is an irreducible component of $V(g_{1j},\ldots,g_{jj})$ of dimension $n-j$, then 
Zariski-almost all affine linear subspaces with dimension~$j$ intersect $W$, e.g., 
see~\cite[Chap.~9]{cox-little-oshea:92}
\end{proof}

Since we want to use the resultant, we homogenize the $g_{ij}$ with respect to a fixed 
new variable~$x_0$. Note that $\deg g_{ij} \le d$. We define 
$G_{ij} := x_0^d g_{ij}(x_1/x_0,\ldots,x_n/x_0)$, which is homogeneous of degree~$d$.
We also define the homogenization 
$F_k := x_0^d f_k(x_1/x_0,\ldots,x_n/x_0)$ of $f_k$, 
for $k=1,\ldots,s$, which has degree~$d$. 
By~\cref{le:gen-lin-comb}, 
$V(g_{1j},\ldots,g_{nj})$ is finite and contains a point of $X_u$ if $j=n-m$. 
However, as illustrated in \cref{se:deform}, 
$V(G_{1j},\ldots,G_{nj})$ may not be finite.
For coping with this, we apply now the deformation methods explained in~\cref{se:deform}. 

In the following, we think of  $1\le j\le n$ as being fixed. 
We introduce a new variable~$t$ (taking the role of a generic deformation parameter)
and define for $i=1,\ldots,n$ the ``perturbed'' polynomials 
$$
 \hG_{ij} := G_{ij} + t (x_i^d -x_0^d) \in \C[u,t,a,b][x_0,\ldots,x_n] ,
$$
which are homogenous of degree~$d$ in the $x$-variables.
We also introduce new variables $v,w,c$ 
and define the polynomial 
\begin{equation}\label{eq:def-H}
  H_j := v x_0^d + \sum_{i=1}^n w^i x_i^{d}  + \sum_{k=1}^s c^k F_k \in\C[u,v,w,c][x_0,\ldots,x_n] , 
\end{equation}
which is homogenous of degree~$d$ in the $x$-variables.
Consider now the resultant 
\begin{equation}\label{eq:def-R}
 R_j := \Res_{d,d,\ldots,d}(H_j,\hG_{1j},\ldots, \hG_{nj}) \in \C[u,v,w,t,a,b,c] 
\end{equation}
with respect to $x_0,\ldots,x_n$.
We note that the number $q$ of components of the vector $(a,b,c)$ is bounded as 
\begin{equation}\label{eq:k-bound}
 q =js + (n-j)n +1 \le ns + n^2 -n + 1\le n(s+n) \le pn \le p^2 . 
\end{equation}

\begin{lem}\label{le:deg-res}
The total degree of $R_j$ in $u,v,w,t,a,b,c$ is less than $(2p)^{n+1}$. 
The bit size of~$R_j$ is at most~$p^{O(n)}$.
\end{lem}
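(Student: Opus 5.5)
The plan is to bound the degree and bitsize of the generic resultant $\Res_{d,\ldots,d}$ and then track how the substitution defining $R_j$ inflates them. Recall from \cref{se:resultant} that $\Res_{d_0,\ldots,d_n}$ is multihomogeneous of degree $d_0\cdots d_n/d_i$ in the coefficients of the $i$-th polynomial. With all $d_i=d$, its total degree in the generic coefficient variables is therefore $(n+1)d^n$. Its coefficients have bitsize $O(n d^n\log d)$, and the number of its monomials is at most $\binom{N+D}{D}$, where $N=(n+1)\binom{n+d}{n}$ is the number of generic coefficients and $D=(n+1)d^n$.

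Next I would bound the substituted coefficients. Each coefficient of $H_j$ and of the $\hG_{ij}$, as a polynomial in $u,v,w,t,a,b,c$, has small degree and small integer coefficients.

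For $\hG_{ij}$ with $i\le j$, the coefficients are the sums $\sum_k a_{ik}u_{k,\alpha}$ (bilinear, degree $2$), plus possibly $\pm t$. For $i>j$ they are $b_{i\ell}$ plus possibly $\pm t$. All of these have degree at most $2$ and coefficients $0,\pm1$.

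For $H_j$, the coefficients are $v$, $w^i$ with $i\le n$, and $\sum_k c^k u_{k,\alpha}$, which has degree at most $s+1$. So every substituted coefficient has degree at most $\max(n,s+1)\le p$ by \eqref{eq:tub}, and it has at most $s+1$ terms, each with coefficient $\pm1$.

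From this the total degree of $R_j$ is at most $D\cdot p=(n+1)d^n p$. Since $d<p$ and $n+1\le p$, we get $(n+1)d^np< p\cdot p^n\cdot p\le$ well, this gives $p^{n+2}$. To reach $(2p)^{n+1}$ I would refine the count using multihomogeneity. The $H_j$-slot contributes degree $d^n$ times at most $\max(n,s+1)\le p$. The $n$ slots of the $\hG_{ij}$ contribute $n d^{n}$ times $2$. The total is then at most $d^n(p+2n)<p^n(p+2n)$. Since $2n\le 2p$ and more care gives $p^n\cdot 2p\cdot\ldots$, one gets $d^n(p+2n)\le p^n\cdot 3p$, hmm. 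Instead I would use $d\le p-1$: then $d^n(p+2n)\le (p-1)^n\cdot 3p\le 2^{n+1}p^{n+1}$, which holds since $3\le 2^{n+1}$. That settles the degree bound.

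For the bitsize, I would expand $R_j$ by substituting into each monomial of the generic resultant. A monomial of degree at most $D$ becomes a product of at most $D$ factors. Each factor is a polynomial with at most $s+1\le p$ terms and coefficients $\pm1$, so the expanded product has coefficients bounded in absolute value by $p^{D}$. Summing over at most $\binom{N+D}{D}$ monomials, each carrying a coefficient of size $2^{O(nd^n\log d)}$, gives a coefficient bound of
$$2^{O(nd^n\log d)}\cdot \binom{N+D}{D}\cdot p^{D}.$$
Its logarithm is $O(D\log(N+D)+D\log p)$. With $d\le 2$ (or generally $d<p$), $N\le p^{O(1)}$ and $D\le p^{n+1}$, the logarithm is $p^{O(n)}$. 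The total bitsize, counting the number of monomials of $R_j$, which is at most $(\text{degree})^{\#\text{vars}}$ with polynomially many variables, is also $p^{O(n)}$.

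The only delicate step is the exact constant in the degree bound, namely arranging the estimate so that it lands below $(2p)^{n+1}$. The bitsize bound is routine once the generic coefficient bound from \cref{se:resultant} is invoked.
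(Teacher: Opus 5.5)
Your degree argument is the paper's: multihomogeneity of $\Res_{d,\ldots,d}$ (degree $d^n$ in each of the $n+1$ coefficient groups), combined with the degrees of the substituted coefficients. These are at most $2$ for the $\hG_{ij}$ and at most $\max(n,s+1)\le s+n\le p$ for $H_j$. Your accounting $d^n\max(n,s+1)+2nd^n\le d^n(p+2n)<3p^{n+1}\le(2p)^{n+1}$ is the correct form of this estimate. The paper's displayed chain distributes the slot degrees differently and ends in $(2d)^{n+1}$ where $(2p)^{n+1}$ is meant. In a write-up, present only this final chain and drop the false starts.

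For the bit size, the paper only notes that the coefficients of $\hG_{ij}$ and $H_j$ lie in $\{0,\pm1\}$ and omits the argument, so your estimate goes beyond it. You combine three factors: generic coefficients of size $2^{O(nd^n\log d)}$; at most $(N+D)^D$ monomials; and a factor $p^D$ from expanding a product of at most $D$ factors, each with at most $s+1\le p$ terms and coefficients $\pm1$. This gives coefficient bit size $O(D\log(N+D)+D\log p)=p^{O(n)}$, which is correct. For general $d$, just add that $\binom{n+d}{n}\le p$, hence $N\le(n+1)p\le p^2$.

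Your last sentence, however, is wrong and should be deleted. An estimate $(\text{degree})^{\#\text{vars}}$, with degree up to $(2p)^{n+1}$ and at least $p$ variables, is of order $(2p)^{(n+1)p}$, not $p^{O(n)}$. It is also unnecessary: ``bit size'' in the lemma means the bit size of the integer coefficients, which is exactly how it is used in \cref{le:FOF} through $\ell=D^{O(1)}$.
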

 
\begin{proof}
By construction, $\hG_{ij}$ has degree at most two in $u,t,a,b$.
Moreover, $H_j$ has degree at most $\max\{n,s+1\} \le s+n$ in $u,v,w,c$.
Recall from~\cref{se:resultant} that the 
multivariate resultant $R_{d_0,\ldots,d_n}$ of forms $F_0,\ldots,F_n$ 
with $d_i =\deg(F_i)$ is homogeneous of degree $d_0\cdots d_n/d_i$ in the variables of~$F_i$. 
From this we conclude that the total degree of $R_j$ is at most 
$$
 2d^n + n(s+n)(2d)^{n-1} \le (2d)^{n-1} \big(2d + n(s+n) \big) \le  (2d)^{n-1} 2p^2 \le (2d)^{n+1}, 
$$
where we used~\eqref{eq:k-bound} for the rough bound 
$2d +n(s+n) \le 2p + np \le 2p^2$. 
This proves the degree bound. 

As for the bit size, we observe that the coefficients of $\hG_{ij}$ and  $H_j$ 
are in $\{0,\pm 1\}$. 
We omit the argument for the upper bound of the bitsize of $R_j$. 
The stated bound certainly is very generous. 
(Compare with the upper bound $O(n d^n \log d)$ for 
$R_{d_0,\ldots,d_n}$ following from~\cite{sombra:04}.)
\end{proof}

We now consider the trailing coefficient 
$$
\tc_t(R_j) \in \C[u,v,w,a,b,c]
$$ 
of $R_j$ with respect to~$t$. 
From this we take its trailing coefficient with respect to the variable~$w$:
$$
 \rho_j := \tc_w(\tc_t(R_j)) \in \C[u,v,a,b,c] . 
$$
It will be relevant whether the evaluation $\rho_m(u,0,a,b,c)$ of the polynomial $\rho_j$ 
at $v=0$ equals zero. 

The following result summarizes what has been achieved by this construction. 

\begin{lem}\label{le:def-u-resultant}
\begin{enumerate}
\item Suppose that $X_u$ is nonempty of dimension~$m$. Then: 
$$
 \forall^\ast (a,b,c)\quad \rho_m(u,0,a,b,c) = 0 .
$$

\item Suppose that $X_u$ is empty. Then: 
$$
 \forall j \in [n]\ \forall^\ast (a,b,c)\quad \rho_j(u,0,a,b,c) \ne 0 .
$$
\end{enumerate}
\end{lem}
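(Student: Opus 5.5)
The plan is to compute $\rho_j$ explicitly via \cref{pro:TC-RES} and then read off both assertions. Fix $u$ and $j$, and specialize $(a,b)$ to Zariski-generic values. Then by \cref{le:gen-lin-comb}, $V(g_{1j},\ldots,g_{nj})\subseteq\C^n$ is finite. This holds in case (1) with $j=n-m$; I read the index of $\rho_m$ in the statement as this $j$. It also holds in case (2) for every $j$: since $X_u=\varnothing$, the generic combinations $g_{1j},\ldots,g_{jj}$ cut out a set of dimension $\le n-j$ or the empty set, and the $n-j$ generic affine hyperplanes then leave a finite set. Apply \cref{pro:curves+limits} and \cref{pro:TC-RES} with $G_i=G_{ij}$ and $H=H_j$, treating $v,w,c$ as indeterminates (or as generic values). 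This gives
$$
\tc_t(R_j)=c'\prod_{k=1}^r H_j(q^{(k)}),\qquad H_j(q^{(k)})=v+\sum_{i=1}^n w^i (q^{(k)}_i)^d+\sum_{l=1}^s c^l f_l(q^{(k)}),
$$
since $q^{(k)}_0=1$. Here $q^{(1)},\ldots,q^{(r)}$ are the affine limit points. These all lie in $V(g_{1j},\ldots,g_{nj})$, and conversely every point of that finite set occurs among them.

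Next I take $\tc_w$ factorwise, using that $\tc_w$ is multiplicative. The affine factor $H_j(q^{(k)})$ has $w$-trailing coefficient $v+\sum_l c^l f_l(q^{(k)})$, which is nonzero because of the $v$ term. Hence
$$
\rho_j=\tc_w(c')\cdot\prod_{k=1}^r\Big(v+\sum_{l=1}^s c^l f_l(q^{(k)})\Big).
$$
Setting $v=0$, the $k$-th factor is the univariate polynomial $\sum_l c^l f_l(q^{(k)})$ in $c$. It vanishes identically iff $q^{(k)}\in X_u$; otherwise it has only finitely many roots $c$.

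Part (1) follows because some $q^{(k)}\in X_u\cap V(g_{1j},\ldots,g_{nj})$ exists by \cref{le:gen-lin-comb}(2), so $\rho_m(u,0,a,b,c)=0$ for generic $(a,b)$ and all $c$. For part (2), no $q^{(k)}$ lies in $X_u$, so the product is nonzero for all but finitely many $c$, provided the factor $\tc_w(c')$ does not vanish at $v=0$. Generic quantifiers combine via $\forall^\ast a\,\forall^\ast b\equiv\forall^\ast(a,b)$.

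The main obstacle is controlling $c'$, which may depend on $H_j$ and hence on $v,w,c$. It is, up to a constant, the $\tau$-leading coefficient of $c(\tau)\prod_{k>r}H_j(p^{(k)}(\tau))/p^{(k)}(\tau)_0^{d}$, where the $p^{(k)}$ for $k>r$ tend to points at infinity. I would normalize $p^{(k)}(\tau)$ so that its limit $q^{(k)}$ has $q^{(k)}_0=0$ and some $q^{(k)}_i\ne0$, and expand in $\tau$. The leading term of $H_j(p^{(k)}(\tau))$ is then
$$
\sum_i w^i (q^{(k)}_i)^d+\sum_l c^l F_l(q^{(k)}),
$$
possibly with a $v$-contribution of higher order in $\tau$. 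The powers $w^i$ are distinct, so its $w$-trailing coefficient is the monomial $(q^{(k)}_{i_0})^d w^{i_0}$, where $i_0$ is the smallest index with $q^{(k)}_{i_0}\ne0$; alternatively it is $\sum_l c^lF_l(q^{(k)})$ if that is nonzero. In either case it is independent of $v$ and nonzero for generic $c$. Care is needed when this leading term cancels and one must pass to higher order in $\tau$. In that situation I would first try to choose the exponent normalization of $p^{(k)}$ so that the leading coefficient vector is a genuine point with $q_0=0$. Granting this, $\tc_w(c')|_{v=0}\ne0$ for generic $c$, which completes part (2). It also does not affect part (1).
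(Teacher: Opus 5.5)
Your argument is essentially the paper's: you apply \cref{pro:TC-RES} with $H=H_j$, take $\tc_w$ factorwise, and read off the factor $v$ coming from a point of $X_u$ in part (1) and the nonzero univariate polynomials $\sum_l c^l f_l(q^{(k)})$ in part (2); your treatment of $c'$ is more careful than the paper's, which treats $c'$ as a constant even though it depends on $H_j$, hence on $v,w,c$. The cancellation you worry about cannot occur: if you normalize the lift of $p^{(k)}(\tau)$ so that its coordinates have minimal $\tau$-order $0$, the order-zero coefficient $\sum_i w^i (q^{(k)}_i)^d+\sum_l c^l F_l(q^{(k)})$ has $w^{i_0}$-coefficient $(q^{(k)}_{i_0})^d\ne 0$ because $w$ is an indeterminate, so $\tc_w(c')$ is automatically a $v$-free polynomial in $c$ with only finitely many roots, and your ``granting this'' step needs no extra work.
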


\begin{proof}
1. Suppose $m=\dim(X_u)$ and put $j=n-m$.
By~\cref{le:gen-lin-comb}(2) we know that for almost all specializations of $(a,b)$ that 
$V(g_{1j},\ldots,g_{nj})$ is finite and contains a point $\xi\in X_u=V(f_1,\ldots,f_s)$. 
In particular, $\xi$ is  an isolated zero of $V(g_{1j},\ldots,g_{nj})$. 
\cref{pro:curves+limits}(3) tells us that $\xi$ occurs among 
the affine limit points of 
$Z:= V(\hG_1,\ldots, \hG_n) \subseteq \proj^n\times \C$.
We have $\xi_0\ne 0$ and 
$F_1(\xi)=0,\ldots, F_s(\xi)=0$. 
With~\eqref{eq:def-H} we therefore get
$$
   H_m(\xi) := v \xi_0^d + \sum_{i=1}^n w^i \xi_i^{d}  . 
$$
Its trailing part with respect to $w$ equals
\begin{equation}\label{eq:trw}
  \tr_w(H_m(\xi)) = v \xi_0^d \ne 0 ,
\end{equation}
and this polynomial vanishes at $v=0$. 
\cref{pro:TC-RES} tells us that ($c'\in\C^*$) 
\begin{equation}\label{eq:tcp}
 \tc_t(R_m) = c' \prod_{j=1}^r H_m(q^{(j)})  ,
\end{equation}
where  $q^{(1)},\ldots, q^{(r)}$ are the affine limit points of 
$Z\subseteq \proj^n\times \C$
as defined in \cref{pro:curves+limits}. 
By passing to the trailing coefficients with respect to $w$, 
and using multiplicativity, we get 
\begin{equation}\label{eq:tcw-prod}
 \tc_w(\tc_t(R_m)) = c' \prod_{j=1}^r \tc_w\big( H_m(q^{(j)}) \big) .
\end{equation}
This vanishes when evaluated at $v=0$:
the reason is~\eqref{eq:trw} and the fact that 
$\xi$ occurs among the $q^{(j)}$, 
as already noted before. 
This proves the first assertion.

2. Now assume that $X_u$ is empty. Let $\xi$ be any one of the affine limit points 
$q^{(1)},\ldots, q^{(r)}$ that appear in~\eqref{eq:tcp}. Then $\xi_0\ne 0$ and 
$F_\kappa(\xi)\ne 0$ for some $\kappa$ since $X_u=V(f_1,\ldots,f_s)=\varnothing$. 
Therefore, 
$$
   H_m(\xi) := v \xi_0^d + + \sum_{k=1}^s \gamma^k F_k(\xi) + \sum_{i=1}^n w^i \xi_i^{d}  ,
$$
where $\gamma$ denotes a specialization of $c$. 
Since 
$$
 v \xi_0^d + + \sum_{k=1}^s \gamma^k F_k(\xi) \ne 0 ,
$$
we obtain 
$$
 \tc_w(\tc_t(H_m(\xi)))  =\tc_w(H_m(\xi)) = v \xi_0^d + \sum_{k=1}^s \gamma^k F_k(\xi) 
$$
When evaluated at $v=0$, this gives 
$\sum_{k=1}^s \gamma^k F_k(\xi)$,
which is nonzero for all but finitely many $\gamma$. 
(Note that $\xi$ only depends on $(\a,\b)$ and not on $\gamma$.
Moreover $F_\kappa(\xi) \ne 0$ for some $\kappa$.) 
We conclude as for~\eqref{eq:tcw-prod}
$$
 \rho_j = \tc_w(\tc_t(R_j)) = c' \prod_{j=1}^r \tc_w\big( H_m(q^{(j)}) \big) ,
$$ 
and this does not vanish for almost all specializations of $(a,b,c)$, 
which proves the second assertion.
\end{proof}

The previous lemma immediately implies the following resultant based criterion for testing 
whether the affine zero set of given polynomials $f_1,\ldots,f_s$ is nonempty. 
(Note that the quantifiers $\exists j\in [m]$ and $\forall^\ast (a,b,c)$ can be interchanged.)

\begin{cor}\label{cor:feas-test}
For polynomials $f_1,\ldots,f_s\in\C[x_1,\ldots,x_n]$ given by their coefficient vector $u\in\C^u$
with zero set $X_u:=V(f_1,\ldots,f_s)$, we have 
$$
 X_u \ne \varnothing \Longleftrightarrow 
 \forall^\ast (a,b,c)\  \exists j\in [n]\ \rho_j(u,0,a,b,c) = 0 .
$$
\end{cor}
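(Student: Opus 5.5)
The statement is Corollary~\ref{cor:feas-test}. I would derive it by combining the two parts of Lemma~\ref{le:def-u-resultant} with elementary properties of the generic quantifier~$\forall^\ast$. Both directions are logical manipulations; no new algebraic geometry is needed.

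For the forward direction, assume $X_u\neq\varnothing$ and set $m:=\dim X_u$. Then $0\le m\le n-1$, since $X_u=\C^n$ would force all $f_k=0$. That degenerate case should be handled separately, or else the convention $d_i\ge1$ excludes it. In the nondegenerate case $j:=n-m\in[n]$. Lemma~\ref{le:def-u-resultant}(1) gives that $\rho_{n-m}(u,0,a,b,c)=0$ holds for $(a,b,c)$ outside a set whose Zariski closure has dimension less than $q$. (Here I read the index in the lemma as $j=n-m$, as in its proof.) On that Zariski-dense set the formula $\exists j\in[n]\ \rho_j(u,0,a,b,c)=0$ holds, because a specific witness $j$ exists. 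Hence $\forall^\ast(a,b,c)\ \exists j\in[n]\ \rho_j(u,0,a,b,c)=0$.

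For the backward direction I argue by contraposition. Assume $X_u=\varnothing$. By Lemma~\ref{le:def-u-resultant}(2), for each $j\in[n]$ the exceptional set $E_j:=\{(a,b,c)\mid \rho_j(u,0,a,b,c)=0\}$ has Zariski closure of dimension $<q$. The set where $\exists j\ \rho_j(u,0,a,b,c)=0$ is exactly $E_1\cup\cdots\cup E_n$. A finite union of sets with lower-dimensional closures again has closure of dimension $<q$, because the closure of a finite union is the union of the closures. So this set is not Zariski dense. That means $\forall^\ast(a,b,c)\ \exists j\ \rho_j=0$ fails, for otherwise its complement, where $\forall j\ \rho_j\ne0$, would be lower dimensional, and $\C^q$ would be a union of two sets with lower-dimensional closures, which is impossible. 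This is precisely the remark that the quantifiers $\exists j\in[n]$ and $\forall^\ast$ may be interchanged in the relevant direction.

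The only step requiring care is this finite-union and dimension argument, together with the irreducibility of $\C^q$: two proper closed subsets cannot cover $\C^q$. I would also make sure the degenerate case $m=n$ (all $f_k$ vanishing identically) is either excluded or covered, for instance by observing that then $X_u\neq\varnothing$ and all $F_k=0$. In that case $H_j$ contains no $F_k$ terms, and one can check directly that $\rho_1(u,0,\cdot)$ vanishes, or one simply assumes nontrivial input.
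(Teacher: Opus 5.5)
Your proof is correct and takes the same route as the paper. The paper only notes that the corollary follows immediately from Lemma~\ref{le:def-u-resultant}, since $\exists j\in[n]$ and $\forall^\ast(a,b,c)$ can be interchanged; your use of part~(1) with $j=n-m$, together with the finite-union and irreducibility argument for part~(2), spells this out. Your remark on the degenerate case $X_u=\C^n$ (i.e.\ $u=0$, where $n-m=0\notin[n]$) is a point the paper passes over, and your fix via $j=1$ works: every affine limit point $\xi$ then lies in $X_u$ and contributes the factor $\tc_w(H_1(\xi))=v\xi_0^d$, and for generic $b$ the $d$ affine solutions of $x_1^d=1,\ g_{21}=\cdots=g_{n1}=0$ are such limit points.
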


We next provide a first order formula for the quantifier-free part of the condition appearing 
on the right hand side of the criterion in~\cref{cor:feas-test}.

\begin{lem}\label{le:FOF}
Set $D:=(2p)^{n+1}$ and $M:=nD^3$. 
One can express the statement 
$$
 \exists j\in [n]\ \rho_j(u,0,a,b,c) = 0
$$ 
by a first order formula $F(u,a,b,c)\in\FC$ 
as in~\cref{th:cons-pws} 
with variables $u\in\C^p$ and $(a,b,c)\in \C^q$,
where $q\le n(s+n)\le p^2$, 
with atomic predicates 
given by $M$ many polynomials 
$R_{j\la\mu\nu}\in\Z[u, a,b,c]$, where $j\in [n]$, $0\le \la,\mu,\nu < D$, 
which have degree at most~$D$ and have 
integer coefficients of bit size at most~$\ell =D^{O(1)}$.
\end{lem}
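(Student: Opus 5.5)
The plan is to expand $\rho_j$ into its coefficients with respect to the remaining ``structural'' variables and to use the vanishing of each of them as the atomic predicates. Concretely, I view $R_j\in\Z[u,v,w,t,a,b,c]$ as a polynomial in $v,w,t$ with coefficients in $\Z[u,a,b,c]$. By \cref{le:deg-res} its total degree is less than $D=(2p)^{n+1}$, so we may write
$$
 R_j = \sum_{0\le \la,\mu,\nu<D} R_{j\la\mu\nu}(u,a,b,c)\, t^{\la} w^{\mu} v^{\nu},
$$
where each $R_{j\la\mu\nu}\in\Z[u,a,b,c]$ has degree less than $D$. Since coefficients of $R_j$ are coefficients of $R_{j\la\mu\nu}$, the bit size is bounded by that of $R_j$, namely $p^{O(n)}=D^{O(1)}$. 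This gives $n D^3 = M$ polynomials, as required.

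Next I express $\rho_j(u,0,a,b,c)=0$ as a Boolean combination of the predicates $\sgn(R_{j\la\mu\nu})$. The key point is that, for a fixed point $(u,a,b,c)$, the trailing coefficient operations commute with specialization only after we know which powers of $t$ and $w$ are trailing. The subtlety is that $\rho_j$ is defined as a polynomial identity via $\tc$ over $\C[u,a,b,c]$, not pointwise. I would therefore first fix the orders: let $\la_0$ be the order of $R_j$ in $t$, meaning the minimal $\la$ such that some $R_{j\la\mu\nu}$ is a nonzero polynomial, and let $\mu_0$ be the minimal $\mu$ such that some $R_{j\la_0\mu\nu}\ne0$. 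These are fixed integers depending only on $j,n,s$ and not on the input, so $\rho_j=\sum_\nu R_{j\la_0\mu_0\nu}v^\nu$ and $\rho_j(u,0,a,b,c)=R_{j\la_0\mu_0 0}(u,a,b,c)$. The formula is then
$$
 F(u,a,b,c):\ \bigvee_{j\in[n]} \big(\sgn R_{j\la_0(j)\mu_0(j)0}(u,a,b,c)=0\big),
$$
which has the form~\eqref{eq:wf-formula}, with the Boolean function $B$ ignoring all but $n$ of the $M$ predicates (including all $M$ keeps the count stated). If one insists on a uniform description not depending on knowing $\la_0,\mu_0$, one can instead take $B$ to be the fixed function that selects these indices; for the later Turing machine construction one needs $\la_0,\mu_0$ computable, but the lemma as stated only requires existence.

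The main obstacle is the bit-size estimate $\ell=D^{O(1)}$, since \cref{le:deg-res} omits that argument. I would justify it by the Macaulay quotient formula or Sombra's bound $O(nd^n\log d)$ for the generic resultant. The substitution of the structured coefficients of $H_j$ and $\hG_{ij}$, with entries in $\{0,\pm1\}$ times monomials in $u,t,a,b,v,w,c$, then increases coefficient sizes by at most a factor of the number of monomials raised to the degree. That is $2^{O(D\log p)}$ in value, i.e.\ bit size $D^{O(1)}$. Everything else is bookkeeping.
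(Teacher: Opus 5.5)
There is a genuine gap. You noticed the right subtlety but resolved it in the wrong direction. You fix $\la_0,\mu_0$ as the orders of $R_j$ over $\Z[u,a,b,c]$, so your $F$ tests whether the \emph{specialization of the generic} iterated trailing coefficient vanishes. The predicate the lemma must encode is the pointwise one, for two reasons:
\begin{itemize}
\item \cref{le:def-u-resultant} and \cref{cor:feas-test} are proved for a fixed $u$, by applying \cref{pro:TC-RES} to the specialized system.
\item The procedure in the proof of \cref{pro:NPC-in-FP-VCH} first substitutes $u$ and the witness values, and only then takes $\tc_t$ and $\tc_w$ of $R_j(u,\cdot,a,b,c)\in\C[v,w,t]$.
\end{itemize}
The $t$- and $w$-orders of this specialized polynomial depend on the point. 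The specialized generic coefficient is either the pointwise one or $0$, so it can vanish on entire fibres $\{u\}\times\C^q$.

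Concretely, take $n=s=1$, $d=2$ and $f_1=u_0+u_1x_1+u_2x_1^2$. Then
\[
 R_1|_{t=0}=a_{11}^2\,\Res_{2,2}(H_1,F_1)=a_{11}^2\bigl((u_2v-u_0w)^2+u_1^2vw\bigr).
\]
So your orders are $\la_0=\mu_0=0$, and your predicate $R_{1000}=0$ is a tautology, because the constant coefficient in $v,w,t$ is the zero polynomial. Yet for $f_1=1$, i.e.\ $u=(1,0,0)$, we have $X_u=\varnothing$ and $R_1|_{t=0}=a_{11}^2w^2$, whose pointwise iterated trailing coefficient is $a_{11}^2\ne 0$. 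Hence with your $F$ the equivalence $X_u\ne\varnothing\Leftrightarrow\forall^\ast(a,b,c)\,F(u,a,b,c)$ fails, and the majority vote would accept infeasible systems. In fact, applying \cref{le:def-u-resultant}(1) at a generic $u$ shows your $F$ is identically true whenever $s\le n$.

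The missing idea is to let the Boolean function read off the point-dependent orders from the zero pattern of \emph{all} coefficients. Use the paper's indexing $R_j=\sum R_{j\la\mu\nu}v^\la w^\mu t^\nu$ and require three things:
\begin{itemize}
\item there are $\la_0,\mu_0,\nu_0$ with $R_{j\la_0\mu_0\nu_0}\ne 0$ and $R_{j\la\mu\nu}=0$ for all $\la,\mu$ and all $\nu<\nu_0$, so $\nu_0$ is the $t$-order at the point;
\item there are $\la_1,\mu_1$ with $R_{j\la_1\mu_1\nu_0}\ne 0$ and $R_{j\la'\mu'\nu_0}=0$ for all $\la'$ and all $\mu'<\mu_1$, so $\mu_1$ is the $w$-order of $\tc_t$;
\item $R_{j0\mu_1\nu_0}=0$.
\end{itemize}
All these quantifiers range over the finite set $[0,D)$. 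So the disjunction over $j$ is a fixed Boolean function of the $M$ sign bits $\sgn(R_{j\la\mu\nu}(u,a,b,c))$, hence of the form \eqref{eq:wf-formula}. This is why all $M=nD^3$ predicates are genuinely used, not just $n$ of them. Your expansion, degree bound and bit-size estimate are fine and carry over unchanged.
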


\begin{proof}
The stated upper bound on $q$ is~\eqref{eq:k-bound}. 
By~\cref{le:deg-res}, the resultant $R_j \in \C[u,v,w,t,a,b,c]$ defined in~\eqref{eq:def-R} 
has degree less than~$D$ and bit size at most $D^{O(1)}$. 
We expand the resultant as 
$$
 R_j = \sum_{0\le \la,\mu,\nu < D} R_{j\la\mu\nu}(u,a,b,c) \, v^{\la}  w^{\mu} t^{\nu} ,\quad 
 R_{j\la\mu\nu}(u) \in \Z[u,a,b,c] .
$$
We can express the statement  
$\rho_j(u,0,a,b,c) = \tc_w(\tc_t(R_j)) = 0$
as the conjunction of the following Boolean formulas
$$
\exists \la_0,\mu_0,\nu_0\  \forall \la,\mu\ \forall \nu < \nu_0\ 
  \big( R_{j\la_0\mu_0\nu_0} \ne 0 \wedge R_{j\la\mu\nu} = 0 \big)
$$
$$
\exists \la_1,\mu_1\  \forall \la'\ \forall \mu' < \mu_1\ 
  \big( R_{j\la_1\mu_1\nu_0} \ne 0 \wedge R_{j\la'\mu'\nu_0} = 0 \big)
$$
$$
   R_{j 0 \mu_1\nu_0} = 0 .
$$
Indeed, note that the  first formula expresses that 
$$
 \tc_t(R_j) = \sum_{\la,\mu} R_{j\la\mu\nu_0} v^{\la} w^\mu ,
$$
while the second formula expresses that 
$$
 \tc_w(\tc_t(R_j)) = \sum_{\la} R_{j\la\mu_1\nu_0} v^{\la} =  R_{j 0 \mu_1\nu_0}+ \sum_{\la>0} R_{j\la\mu_1\nu_0} v^{\la},
$$
which vanishes at $v=0$ iff $R_{j 0 \mu_1\nu_0}=0$. 

Therefore, we can express the statement  
$\exists j\in [n]\ \rho_j(u,0,a,b,c) = 0$
by a Boolean formula 
in terms of the $M$ Boolean values
$\e_{j\la\mu\nu} := \sgn\big(R_{j\la\mu\nu}(u)\big)$, 
where 
$\sgn(c):= 0$ if $c=0$ and $\sgn(c):= 1$ if $c\ne 0$; 
see~\eqref{eq:wf-formula}. 
\end{proof}

\begin {proof}[Proof of~\cref{pro:NPC-in-FP-VCH}]
We assume that $d=2$.
By combining~\cref{le:FOF} with~\cref{th:cons-pws}, 
we see that a witness sequence 
for the formula $F(u,a,b,c)$ 
can be computed by a division-free and constant-free 
arithmetic circuit $\Gamma$ of size 
$$
 (pq)^{O(1)} \log (MD) + O(\log\ell) \ \le\  p^{O(1)} .
$$
Moreover, by \cref{th:cons-pws},
$\Gamma$ can be computed in time 
polynomial in $p$ by a Turing machine.  

The procedure below describes a polynomial size uniform family 
of arithmetic circuits with oracle gates for $(\cR_n)$ 
and gates for computing trailing coefficients, 
which decides $\HN_\C$ for systems of polynomials of degree at most two.
\algorithm
{\bf input} $f_1,\ldots,f_s$ with coefficient vector~$u$\\
compute an arithmetic circuit $\Gamma$ as above\\
evaluate $\Gamma$ to compute a witness sequence
$((\a_\theta,\b_\theta,\c_\theta))_{\theta=1,\ldots,2p+1}$ of $F(u,a)$\\
{\bf for $\theta=1,\ldots,2p+1$} \\
\>     {\bf for $j=1,\ldots,n$} \\
\>\>         substitute $(a_\theta,b_\theta,c_\theta)$ by $(\a_\theta,\b_\theta,\c_\theta)\in\C^{j\times s}\times\C^{(n-j)\times s}\times \C$ \\
\> \>        compute $\hG_{1j},\ldots,\hG_{nj}, H_j$ \\
\>\>         {\bf by oracle call to $\cR_n$ get resultant} \\
\>\>         $R_j = \cR_n(H_j,\hG_{1j},\ldots, \hG_{nj}) \in \C[t,v,w]$\\
\>\>         {\bf by oracle call get trailing coefficient} $\tc_t(R_j) \in \C[v,w]$ \\
\>\>         {\bf by oracle call get trailing coefficient} $\rho_{\theta j} := \tc_w(\tc_t(R_j)) \in \C[v]$ \\
\> \>        {\bf evaluate} $\rho_{\theta j}(0)$ \\
\>        {\bf set} $\b_\theta :=1$ if $\exists j\in [n]\ \rho_{\theta j}(0)=0$ and $\b_\theta :=0$ otherwise\\
{\bf accept iff} $\b_1+\ldots+\b_{2p+1} > p$ 
\falgorithm
We claim that this procedure accepts iff $X_u=V(f_1,\ldots,f_s)$ is nonempty.
Indeed, we have 
$$
 X_u\ne \varnothing \Longleftarrow \forall^\ast (a,b,c)\  F(u,a,b,c) 
 \Longleftarrow \#\{ \theta \in [2p+1] \mid F(u,a_\theta ,b_\theta ,c_\theta ) \} > p,
$$
where the left equivalence is due to~\cref{cor:feas-test} and 
the right equivalence holds by the definition of witness sequences (\cref {def:pws}). 
This completes the proof of~\cref{pro:NPC-in-FP-VCH}. 
\end{proof}

\subsection{Proof of~\cref{th:MAIN-u}}

We show the contraposition and assume that $\VP^0=\VNP^0$.
\cref{cor:collapse} implies that $\CH\subseteq\Po/\poly$ and $\VPnb^0 = \VCH^0$. 
\cref{cor:Res-in-VCH} states that $(\cR_n)\in\VCH^0(\U)$. 
We conclude that $(\cR_n)\in\VPnb^0$.  

The problem $\HN_\C$, restricted to 
input polynomials of degree at most two,  
is $\NP^0_\C(\NU)$-complete with respect to 
nonuniform polynomial time reductions.
Therefore, in order to conclude $\Po^0_\C(\NU)=\NP^0_\C(\NU)$, 
it remains to show that the oracle circuit $\cC_n$ 
described in the proof of~\cref{pro:NPC-in-FP-VCH}
can be implemented by polynomial size arithmetic circuits. 

This is clear for the oracle gate evaluating $\cR_n$ 
since we already know that $(\cR_n)\in\VPnb^0$.   
Consider now a gate computing the trailing coefficient $\tc_t(R_j)$. 
By \cref{cor:VCH-border}, the family of trailing coefficients are in $\overline{\VCH^0}=\VCH^0$.
Since we know that $\VPnb^0 = \VCH^0$,  their 
computations can also be implemented by an arithmetic circuit of polynomial size. 

In the uniform models, we argue analogously. 
When assuming $\VP^0(\U)=\VNP^0(\U)$, we get from~\cref{cor:collapse} 
the collapses $\CH=\Po$ and $\VPnb^0(\U)= \VCH^0(\U)$. 
\cref{cor:Res-in-VCH} states that $(\cR_n)\in\VCH^0(\U)$. 
Hence we conclude that $(\cR_n)\in\VPnb^0(\U)$.  
It is clear that the uniform family of oracle circuits $\cC_n$ 
can be implemented by a uniform family of  arithmetic circuits 
of polynomial size. 
This completes the proof of~\cref{th:MAIN-u}. 

\begin{rem} 
The vanishing order $e_j$ of $R_j$ 
at $t=0$, that is $R_j  = \tc_t(R_j)t^{e_j} + O(t^{e_j+1})$,
may be exponential in $n$, see~\cref{re:order-e}. 
It is therefore unclear how to extract $\tc_t(R_j)$ 
in polynomial time from $R_j$.
On the other hand, one can show that the vanishing order of $\tc_w(R_j)$ 
at $w=0$ is at most~$n$. 
For this reason, $\rho_j$ could be computed by interpolation 
from $\tc_w(R_j)$ in polynomial time.
\end{rem}

\begin{rem}\label{re:pos-char}
We discuss here the case of positive characteristic~$p$.
Let $\oFp$ denote the algebraic closure of the finite field $\F_p$. 
The Hilbert Nullstellensatz Problem $\HN_{\oFp}$ in characteristic~$p$ 
is the problem of deciding for given polynomials $f_1,\ldots,f_s$ in $\oFp[x_1,\ldots,x_n]$ 
whether they have a common zero in $\oFp^n$. One defines the BSS-classes  
$\Po^0_{\oF_p}(\NU)$ and $\NP^0_{\oF_p}(\NU)$ in the obvious way,  
where zero indicates that $0,1$ are the only free constants. 
(Equivalently, any constant in $\F_p$ may by used for free.)
The standard reduction shows that $\HN_{\oFp}$  is complete in $\NP^0_{\oF_p}(\NU)$ 
for nonuniform reductions.  
Tracing the proof of~\cref{th:MAIN-u}, one can show the implication 
$$
 \Po^0_{\oFp} (\NU) \ne \NP^0_{\oFp}(\NU) \Longrightarrow \VP^0 \ne \VNP^0 .
$$
It is unclear how to obtain the corresponding implication for the uniform models, since 
the method of witness sequences does not work in positive characteristic.
However, when replacing $\Po^0_{\oFp}$ by a BPP-like notion, 
the proof carries over. 
We do not know how to deduce the stronger conclusion 
$\VP^{\F_p} \ne \VNP^{\F_p}$.
In the proof of~\cref{th:MAIN-u}, we crucially used that $\VP^{0} = \VNP^{0}$ 
implies the collapse of the counting hierarchy.
It is unclear whether this is already implied by $\VP^{\F_p} = \VNP^{\F_p}$.
(We note that $\VP^{0} = \VNP^{0}$ implies $\VP^{\F_p} = \VNP^{\F_p}$.)
\end{rem}

\section{Outlook and further research}\label{se:outlook}

There are various interesting directions to pursue further. 
In~\cite{andrews-et-al-HN:26} it is also shown that the problem of counting the number of complex solutions 
of a given system of polynomial equations (with integer coefficients, say) can be done in polynomial time
with oracle calls to the counting hierarchy. This can be expressed as the inclusion 
$\GCC\subseteq {\sf \Po}^{\CH}$ in terms of the complexity class~$\GCC$, 
introduced in~\cite{bucu-count-II:06}. 
The same paper also defined the counting complexity class $\SP^0_\C$ in the Blum-Shub-Smale-model over $\C$.
One may ask whether the result on counting complex solutions in~\cite{andrews-et-al-HN:26} has 
an analogue in the Blum-Shub-Smale-model over~$\C$.
It looks plausible that problems in $\SP^0_\C$ can be solved by uniform families of arithmetic circuits of polynomially size, 
endowed with oracle gates for evaluating multivariate resultants and gates for computing trailing coefficients.
If this is the case, then $\VP^0(\U) =\VNP^0(\U)$ would imply that all problems in 
$\SP^0_\C$ can be solved in polynomial time by constant-free BSS-machines over $\C$. 

It may also be possible to relate the polynomial hierarchy $\PH^0_\C$ in the BSS-model over $\C$ 
to the algebraic complexity of the permanent, arriving at a result in the spirit of 
Toda's theorem; see \cite[Problem~8.2]{bucu-quaderni:04}

The most exciting open question is to what extent the above ideas can be extended to computations with real numbers.
In this context, the problem $\HN_\C$ is to be replaced by the problem $\FEAS_\R$ of deciding whether a given multivariate 
polynomial has a real zero. This problem is $\NP^0_\R$-complete~\cite{blss:89,BCSS:98}.
When restricting to polynomials with coefficients to integers, the complexity of 
this feasibility problem can be studied in the model of Turing machines. It is complete 
for the \emph{existential theory of the reals} $\exists\R$. 
Many interesting problems in computational geometry are complete for this class; 
see~\cite{ER-survey}. We remark that $\exists\R$ is the same as 
the Boolean part of $\NP^0_\R$, in the terminology in~\cite{bucu-count-II:06}. 

\begin{quest}\label{conj:ER-in-CH}
Is the existential theory of the reals contained in the counting hierarchy?
\end{quest}

\section*{Acknowledgments} 

This work was triggered by discussions at the inspiring semester program ``Complexity and Linear Algebra''
in the Fall of 2025 at the Simons Institute for the Theory of Computing in Berkeley. 
I thank the Simons Institute for making this program possible. I am especially grateful 
to Robert Andrews for sharing with me the ideas underlying his breakthrough work~\cite{andrews-et-al-HN:26}, 
while it was still in progress.

\newcommand{\etalchar}[1]{$^{#1}$}
\def\cprime{$'$}

\end{document}